\def\nats{{\mathbb N}}
\def\ints{{\mathbb Z}}
\newtheorem*{definition*}{Definition}
\newtheorem*{theorem*}{Theorem}
\newtheorem{theorem}{Theorem}[section]
\newtheorem{lemma}[theorem]{Lemma}
\newtheorem{observation}[theorem]{Observation}
\newtheorem{corollary}[theorem]{Corollary}
\newcommand{\idlow}[1]{\mathord{\mathcode`\-="702D\it #1\mathcode`\-="2200}}
\newcommand{\id}[1]{\ensuremath{\idlow{#1}}}
\newcommand{\namedref}[2]{\hyperref[#2]{#1~\ref*{#2}}}
\newcommand{\sectionref}[1]{\namedref{Section}{#1}}
\newcommand{\theoremref}[1]{\namedref{Theorem}{#1}}
\newcommand{\tableref}[1]{\namedref{Table}{#1}}
\newcommand{\lemmaref}[1]{\namedref{Lemma}{#1}}
\newcommand{\figureref}[1]{\namedref{Figure}{#1}}
\newcommand{\ellbw}{\ensuremath{\id{\ell-buffer-write}}}
\newcommand{\ellbr}{\ensuremath{\id{\ell-buffer-read}}}
\begin{document}

\date{}

\title{A Complexity-Based Hierarchy for Multiprocessor Synchronization}

\author{
Faith Ellen\\
University of Toronto \\
\texttt{faith@cs.toronto.edu}
\and 
Rati Gelashvili\\
MIT\\
\texttt{gelash@mit.edu}
\and
Nir Shavit\\
MIT\\
\texttt{shanir@csail.mit.edu}
\and
Leqi Zhu\\
University of Toronto\\
\texttt{lezhu@cs.toronto.edu}
}

\maketitle
\begin{abstract}
For many years, Herlihy's elegant computability-based Consensus Hierarchy has been 
  our best explanation of the relative power of various types of multiprocessor 
  synchronization objects when used in deterministic algorithms.
However, key to this hierarchy is treating synchronization instructions as distinct objects, 
  an approach that is far from the real-world, where multiprocessor programs apply 
  synchronization instructions to collections of arbitrary memory locations. 
We were surprised to realize that, when considering instructions applied to memory locations, 
  the computability based hierarchy collapses. This leaves open the question of how to 
  better capture the power of various synchronization instructions. 

In this paper, we provide an approach to answering this question. 
We present a hierarchy of synchronization instructions, classified by the space complexity 
  necessary to solve consensus in an obstruction-free manner using these instructions. 
Our hierarchy provides a classification of combinations of known instructions that seems 
  to fit with our intuition of how useful some are in practice, 
  while questioning the effectiveness of others. 
In particular, we prove an essentially tight characterization of the power 
  of buffered read and write instructions.
Interestingly, we show a similar result for multi-location atomic assignments.
\end{abstract}


\section{Introduction}
Herlihy's Consensus Hierarchy \cite{Her91} assigns a consensus number to each object,
namely, the number of processes for which there is a wait-free binary consensus algorithm
using only instances of this object and read-write registers.
It is simple, elegant and,
for many years, has been our best explanation of synchronization power.

Robustness says that, using combinations of objects with consensus numbers at most $k$,
it is not possible to solve wait-free consensus for more than $k$ processes~\cite{Jay93}.
The implication is that modern machines need to provide objects with infinite consensus number.
Otherwise, they will not be universal, that is, they cannot be used to implement all objects or solve all tasks in a wait-free (or non-blocking) manner
for any number of processes~\cite{Her91,Tau06Book,Ray12Book,HS12Book}.
Although there are ingenious non-deterministic constructions that prove 
  that Herlihy's Consensus Hierarchy is not robust~\cite{Sch97,hl00}, 
  it is known to be robust for deterministic one-shot objects \cite{HR00}
and deterministic read-modify-write and readable objects \cite{Rup00}.
It is  unknown whether it is robust for general deterministic objects.

In adopting this explanation of computational power, we failed to notice an important fact:
multiprocessors do not compute using synchronization objects.
Rather, they apply synchronization instructions to locations in memory.
With this point of view, Herlihy's Consensus Hierarchy no longer captures
the phenomena we are trying to explain.

For example, consider two simple instructions:
\begin{itemize}
\item
$\id{fetch-and-add}(2)$, 
  which returns the number stored in a memory location and increases its value by 2, and
\item
$\id{test-and-set}()$, 
 which returns the number stored in a memory location
  and sets it to 1 if it contained 0.
\end{itemize}
(This definition of 
  $\id{test-and-set}$ is slightly stronger than the standard definition, 
  which always sets the location to which it is applied to 1.
Both definitions behave identically when the values in the location are in $\{0,1\}$.)
Objects that support only one of these instructions
have consensus number 2.
Moreover, these deterministic read-modify-write objects cannot be combined to solve 
  wait-free consensus for 3 or more processes. 
However,
with an object that supports both instructions, it is possible to solve 
  wait-free binary consensus for any number of processes.
The protocol uses a single memory location initialized to $0$.
Processes with input $0$ perform $\id{fetch-and-add}(2)$,
while processes with input $1$ perform $\id{test-and-set}()$.
If the value returned is odd, the process decides 1.
If the value 0 was returned from $\id{test-and-set}()$, the process also decides 1.
Otherwise,  the process decides 0.

Another example considers three instructions:
\begin{itemize}
\item
$\it{read}()$, which returns the number stored in a memory location,
\item
$\id{decrement}()$,
which decrements the number stored in a memory location and returns nothing, and
\item
$\id{multiply}(x)$,
which multiplies the number stored in a memory location by $x$ and returns 
nothing.
\end{itemize}
A similar situation arises: Objects that support only two of these instructions 
  have consensus number 1 and cannot be combined to solve wait-free consensus 
  for 2 or more processes. 
However, using an objects
that supports all three instructions,
  it is possible to solve wait-free binary consensus for any number of processes. 
The protocol uses a single memory location initialized to $1$. 
Processes with input $0$ perform $\id{decrement}()$, 
  while processes with input $1$ perform $\id{multiply}(n)$.
The second operation by each process is $\id{read}()$. 
If the value returned is positive, then the process decides 1. 
If it is negative, then the process decides 0.

For randomized computation, Herlihy's Consensus Hierarchy also collapses:
  randomized wait-free binary consensus among any number of processes 
  can be solved using only read-write registers, which have consensus number 1.
Ellen, Herlihy, and Shavit~\cite{FHS98} proved that $\Omega(\sqrt{n})$ 
historyless objects, 
which support only trivial operations, such as $\id{read}$, and historyless operations, such as $\id{write}$, $\id{test-and-set}$, and  $\id{swap}$,
are necessary to solve this problem. 
They noted that, in contrast, only one fetch-and-increment 
or fetch-and-add object suffices for solving this problem.
Yet, these objects and historyless objects are similarly classified in 
Herlihy's Consensus Hierarchy (i.e.~they all have consensus number 1 or 2). 
They suggested
that the number of instances
of an object needed to solve randomized wait-free consensus among $n$ 
processes might be another way to classify the power of the object.

Motivated by these observations, we consider a classification of 
  instruction sets based on the number of memory locations
  needed to solve \emph{obstruction-free $n$-valued} consensus among 
  $n \geq 2$ processes. 
Obstruction freedom is a simple and natural progress measure. 
Some state-of-the-art synchronization operations, 
  for example hardware transactions \cite{intel}, 
  do not guarantee more than obstruction freedom.
Obstruction freedom is also closely related to randomized computation.
In fact, any (deterministic) obstruction free algorithm can be transformed 
  into a randomized wait-free algorithm that 
  uses the same number of memory locations (against an oblivious adversary) 
  \cite{GHHW13}.
Obstruction-free algorithms can also be transformed into 
wait-free 
algorithms 
in the unknown-bound semi-synchronous model \cite{FLMS05}.
Recently, it has been shown that any
lower bound on the number of registers
used by obstruction-free algorithms also applies to  randomized wait-free algorithms~\cite{EGZ18}.

\subsection{Our Results}
Let {\em n-consensus} denote the problem of solving obstruction-free $n$-valued 
  consensus among $n \geq 2$ processes.
For any set of instructions $\mathcal{I}$,
let ${\cal SP}(\mathcal{I},n)$ denote the minimum number of memory locations 
supporting $\mathcal{I}$ that are needed to solve {\em n-consensus}.
This is a function from the positive integers, $\ints^+$, to $\ints^+ \cup \{\infty\}$.
For various instruction sets $\mathcal{I}$, we provide upper and lower bounds 
  on ${\cal SP}(\mathcal{I},n)$. 
The results are summarized in~\tableref{tab:hierarchy}.

We begin, in~\sectionref{sec:racing}, by considering the instructions
\begin{itemize}
\item 
$\id{multiply}(x)$,
which multiplies the number stored in a memory location by $x$ and returns 
nothing,
\item
$\id{add}(x)$,
which adds $x$ to the number stored in a memory location and returns nothing, and 
\item
$\id{set-bit}(x)$,
which sets bit $x$ of a memory location to 1 and returns nothing.
\end{itemize}
We show that one memory location supporting
$\id{read}()$ and one of these instructions
can be used to solve $n$-consensus.
The idea is to show that these instruction sets can implement $n$ counters in a 
single location. We can then use a \emph{racing counters} algorithm \cite{AH90}.

Next, we consider \emph{max-registers}~\cite{AAC09}. These  are 
memory locations supporting
\begin{itemize}
\item
$\id{read-max}()$,
which reads the number stored in a memory location, and
\item
$\id{write-max}(x)$,
which stores the number $x$ in a memory location, provided it contains a value 
less than $x$, and returns nothing.
\end{itemize}
In~\sectionref{sec:maxreg}, we prove that two max registers are necessary and 
sufficient for solving $n$-consensus.

In~\sectionref{sec:increment}, we prove that a 
single memory location supporting $\{ 
\id{read}(), 
\id{write}(x), 
\id{fetch-and-increment}() \}$ cannot be used to solve $n$-consensus, for $n 
\geq 3$.
We also present an algorithm for solving $n$-consensus using $O(\log n)$ 
such memory locations.
  
In~\sectionref{sec:buffer}, 
  we introduce a family of buffered read and buffered write instructions
  $\mathcal{B}_{\ell}$, for $\ell \geq 1$,
  and show how to solve $n$-consensus using $\lceil \frac{n}{\ell} \rceil$ 
  memory locations supporting these instructions. 
Extending Zhu's $n-1$ lower bound~\cite{Zhu16},
we also prove that $\lceil \frac{n-1}{\ell} \rceil$ such memory locations are 
necessary, which is tight except when $n-1$ is divisible by $\ell$.

Our main technical contribution is in~\sectionref{sec:transact}, where we show 
a lower bound of $\lceil \frac{n-1}{2\ell} \rceil$ locations,
even in the  presence of atomic multiple assignment.
Multiple assignment can be implemented by simple transactions,
  so our result implies that such transactions cannot significantly reduce 
  space complexity.
The proof further extends the techniques of~\cite{Zhu16} via a nice 
combinatorial argument, which
is of independent interest.

There are algorithms that solve $n$-consensus 
  using $n$ registers~\cite{AH90,BRS15,Zhu15}.
This is tight by the recent result of~\cite{EGZ18},
  which shows a lower bound of $n$ registers for binary consensus 
  among $n$ processes and, hence, for $n$-consensus.
  In~\sectionref{sec:swap}, we present a
modification of a known anonymous algorithm for $n$-consensus~\cite{Zhu15},
which solves $n$-consensus using 
$n-1$ memory locations supporting $\{ 
  \id{read}(), \id{swap}(x)\}$.
A lower bound of $\Omega(\sqrt{n})$ locations appears in~\cite{FHS98}.
This lower bound also applies to locations that only support
  $\id{test-and-set}()$, $\id{reset}()$ and $\id{read}()$ instructions.

Finally, in~\sectionref{sec:tas}, we show that an unbounded number of  
memory locations supporting $\id{read}()$ and either $\id{write(1)}$ or  
$\id{test-and-set}()$  
  are necessary and sufficient to solve $n$-consensus, for $n \geq 3$. 
Furthermore, we  show how to reduce the number of memory locations to $O(n\log n)$ when 
  in addition to $\id{read}()$, $\id{write}(0)$ and $\id{write}(1)$ are both available,
  or $\id{test-and-set}()$ and $\id{reset}()$ are both available.

  
\begin{table*}[ht]
\centering
\begin{tabular}{|c |c |}
\hline
Instructions $\mathcal{I}$ &  ${\cal SP}(\mathcal{I},n)$ \\
\hline
$\{ \id{read}(), \id{test-and-set}() \}$, $\{ \id{read}(), \id{write}(1) \}$ & $\infty$ \\
\hline
$\{ \id{read}(), \id{write}(1), \id{write}(0) \}$ & $n$ (lower), $O(n\log n)$ (upper) \\
\hline
$\{ \id{read}(), \id{write}(x) \}$ & $n$\\
\hline
$\{ \id{read}(),  \id{test-and-set}(),  \id{reset}() \}$ & $\Omega(\sqrt{n})$ (lower), $O(n\log n)$ (upper) \\
\hline
$\{ \id{read}(), \id{swap}(x) \}$ & $\Omega(\sqrt{n})$  (lower), $n-1$ (upper) 
\\
\hline
$\{ \id{\ell-buffer-read}(), \id{\ell-buffer-write}(x) \}$ & $\lceil \frac{n-1}{\ell} \rceil$ (lower), $\lceil \frac{n}{\ell} \rceil$ (upper) \\
\hline
 $\{ \id{read}(), \id{write}(x), \id{increment}() \}$ &2 (lower), $O(\log {n})$ 
 (upper)\\
$\{ \id{read}(), \id{write}(x), \id{fetch-and-increment}() \}$ & \\
\hline
$\{ \id{read-max}(), \id{write-max}(x) \}$ & 2 \\
\hline
$\{ \id{compare-and-swap}(x, y) \}$
\begin{fullversion}
$\{ \id{read}(), \id{set-bit}(x) \}$
\end{fullversion}
& 1\\ 
$\{ \id{read}(), \id{add}(x) \}$, $\{ \id{read}(), \id{multiply}(x) \}$ & \\
$\{ \id{fetch-and-add}(x) \} \}$, $\{ \id{fetch-and-multiply}(x) \}$ & \\
\hline
\end{tabular}
\caption{Space Hierarachy}
\label{tab:hierarchy}
\end{table*}
\section{Model}
\label{sec:model}
We consider an asynchronous system of $n \geq 2$ processes, with ids $0,1\ldots,n-1$,
  that supports a set of deterministic synchronization 
instructions, $\mathcal{I}$,
  on a set of identical memory locations.
The processes take steps at arbitrary, possibly changing, speeds and may crash 
at any time.
Each step is an atomic invocation of some instruction on some memory location by some process.
Scheduling is controlled by an adversary.
This is a standard asynchronous shared memory 
  model~\cite{AW04}, with the restriction that every memory location supports the same set of instructions.
We call this restriction the \emph{uniformity requirement}.

When allocated a step by the scheduler, a process performs one instruction on one shared memory
location and, based on the result, may then perform an arbitrary amount of local computation.
A {\em configuration} consists of the state of every process and the contents of every memory location.

Processes can use instructions on the memory locations
  to simulate (or implement) various objects. 
An object provides a set of operations which processes can call to access and/or change the value
of the object.
Although a memory location together with the supported instructions
can be viewed as an object, we do not do so,
to emphasize the uniformity requirement.
  

We consider the problem of solving 
  \emph{obstruction-free $m$-valued consensus} in such a system. 
Initially, each of the $n$ processes has an input from $\{0,1,\dots,m-1\}$ and 
  is supposed to output a value (called a \emph{decision}), such that all 
  decisions are the same (\emph{agreement}) and equal to the input of one of the processes (\emph{validity}).
Once a process has decided (i.e.~output its decision), the scheduler does not allocate it any further steps.
Obstruction-freedom means that, from each reachable configuration, each process
will eventually decide a value in a {\em solo execution}, i.e.~if the adversarial scheduler gives it sufficiently many consecutive steps.
When $m = n$, we call this problem \emph{$n$-consensus} and, when $m=2$, we call this problem 
{\em binary consensus}.
Note that lower bounds for binary consensus also apply to $n$-consensus.

In every reachable configuration of a consensus algorithm, each process has either decided
or has  one specific instruction it will perform on a particular memory location when next allocated a step by the scheduler. In this latter case, we say that the process is \emph{poised} to perform that instruction
on that memory location in the configuration.

  
\section{Arithmetic Instructions}
\label{sec:racing}
Consider a system that supports only $\id{read}()$ and either 
$\id{add}(x)$, $\id{multiply}(x)$, or $\id{set-bit}(x)$.
We show how to solve $n$-consensus using a single memory location in such a system. 
The idea is to show that we can simulate certain collections of objects that 
can solve $n$-consensus.

An \emph{$m$-component unbounded counter} object has $m$ components,
each with a nonnegative integral value.
It supports an $\id{increment()}$ operation on each component,
which increments the count stored in the component by $1$, 
and a $\id{scan}()$ operation, 
which returns the counts of all $m$ components.
In the next lemma, we present a \emph{racing counters} algorithm that bears 
some similarity to a consensus algorithm by Aspnes and Herlihy \cite{AH90}.

\begin{lemma}
    \label{lem:cntr}
    It is possible to solve obstruction-free $m$-valued consensus among 
    $n$ processes using an \\ $m$-component unbounded counter.
\end{lemma}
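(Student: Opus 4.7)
My plan is to present a ``racing counters'' algorithm. Each process $p$ maintains a local preference $\pi$, initialized to its input. In a loop, $p$ performs $\id{increment}$ on component $\pi$, then $\id{scan}$s the counter and obtains a vector $c = (c[0], \ldots, c[m-1])$; if $c[\pi] \geq c[u] + n$ for every $u \neq \pi$, then $p$ decides $\pi$; otherwise, if some $u$ satisfies $c[u] > c[\pi]$, $p$ sets $\pi$ to an argmax of $c$ (breaking ties by smallest index, say). Validity will follow by a straightforward induction on steps, because a preference switch is only made to a value whose component has a strictly positive count, and components become positive only via increments by processes whose preference (and thus ultimately input) equals that value. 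Obstruction-freedom is similarly easy: in a solo execution by $p$, the other components stay frozen, so after at most one initial switch to an argmax of the starting configuration, the single component $p$ keeps incrementing eventually exceeds every other by $n$ and $p$ decides.

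The hard part is agreement. Suppose $p$ decides $v$ at time $t_p$ from a scan $c_p$, so $c_p[v] \geq c_p[u] + n$ for every $u \neq v$. I would show that $c[v](t) > c[u](t)$ holds for every $u \neq v$ and every $t > t_p$, which in particular prevents any later scan from meeting the decision condition for a value other than $v$. The crucial step is to bound by $n - 1$ the total number of increments of any fixed component $u \neq v$ performed after $t_p$.

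Let $S_u$ be the set of processes poised to increment component $u$ at $t_p$; since $p$ itself is poised at $v$, $|S_u| \leq n - 1$. I would carry out a simultaneous induction over the events after $t_p$ that maintains both (i) the bound on the number of $u$-increments and (ii) the strict inequality $c[v] > c[u]$. Every increment of component $u$ is made by a process whose current preference is $u$. A process in $S_u$ contributes exactly one such increment, since its very next step is a $\id{scan}$ that, by the invariant, reveals $c[v] > c[u]$ and forces a switch to $v$. A process not in $S_u$ could only acquire preference $u$ through a scan in which $u$ is an argmax, which the invariant forbids. Hence at most $|S_u| \leq n-1$ increments of component $u$ occur after $t_p$, giving $c[u](t) \leq c_p[u] + n - 1 < c_p[v] \leq c[v](t)$, as required. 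Setting up this simultaneous induction carefully---in particular, handling processes that at $t_p$ are poised to scan rather than to increment---is the main obstacle; the remainder is routine.
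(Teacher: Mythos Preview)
Your proposal is correct and follows essentially the same approach as the paper: the same racing-counters algorithm (increment your preferred component, scan, decide if your component leads all others by $n$, otherwise adopt the argmax), and the same agreement argument (after a decision on $v$, each other process performs at most one more increment before its next scan reveals $v$ as the strict leader, so no other component can ever catch up). Your treatment is somewhat more explicit---you spell out the simultaneous induction and the set $S_u$ where the paper simply asserts that each remaining process increments at most once before scanning and thereafter promotes $v$---but the content is the same.
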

\begin{proof} 
We associate a separate component $c_v$ with each possible input value $v$.
 All components are initially $0$.
 Each process alternates between  \emph{promoting} a value (incrementing the component
 associated with that value) and performing a $\id{scan}$ of all $m$ components.
A process first promotes its input value.
After performing a $\id{scan}$, if it observes that the
count stored in
component, $c_v$, associated with some value $v$ 
is at least $n$ larger than
the counts stored in
all other components,
it returns the value $v$.
Otherwise, it promotes the value associated with a component containing the largest count
(breaking ties arbitrarily).

If some process returns the value $v$,
then each other process will increment some component at most once
before next performing a $\id{scan}$.
In each of those $\id{scan}$s, the count stored in $c_v$ will still be 
larger than the counts stored in all other components.
From then on, these processes will promote value $v$ and 
keep incrementing $c_v$.
Eventually, 
the count in 
component $c_v$ will be at least $n$ larger than
the counts in all other components,
and these processes will return $v$, ensuring agreement.

Obstruction-freedom follows because a process running on its own will continue to
increment the same component, which will eventually be $n$ larger than the counts in all other  components.
\end{proof}

In this protocol, the counts stored in the components may grow arbitrarily large. 
The next lemma shows that it is possible to avoid this problem, provided
each component also supports a $\id{decrement}()$ operation.
More formally, an \emph{$m$-component bounded counter} object has $m$ components,
where each component stores a count in $\{0, 1, \ldots, 3n-1\}$.
It supports both $\id{increment()}$ and $\id{decrement()}$
operations on each component, along with a $\id{scan}()$ operation, 
which returns the count stored in every component.
If a process ever attempts to increment a component that has count $3n-1$
or decrement a component that has count $0$, the object breaks 
(and every subsequent operation invocation returns $\bot$). 

\begin{lemma}
    \label{lem:Bcntr}
    It is possible to solve obstruction-free $m$-valued consensus among 
    $n$ processes using an $m$-component bounded counter.
\end{lemma}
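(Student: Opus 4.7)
The plan is to adapt the racing-counters algorithm from Lemma 1 by leveraging the decrement operation to keep the components bounded. In addition to its preferred value $v_p$, each process $p$ maintains a local counter $k_p$ recording how many times it has incremented $c_{v_p}$ since adopting this preference. Whenever a scan reveals a different component $w$ as the current leader, process $p$ performs $k_p$ decrements on $c_{v_p}$, then resets $k_p \leftarrow 0$ and $v_p \leftarrow w$, and begins promoting $w$. This enforces the invariant that $c_v$ equals the sum of the $k_p$ values of all currently active processes whose current preference equals $v$, plus the fixed contributions of processes that have already decided on $v$.

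Validity, agreement, and obstruction-freedom can be adapted from the proof of Lemma 1 with a little extra care. Validity is immediate, since processes only ever promote an input value or a value they have observed in some scan. For agreement, once a process decides $v$ after scanning a lead of at least $n$, any subsequent decrement on $c_v$ must come from a process switching away from $v$, and a careful chronological analysis of scans shows that such a switch cannot be initiated while $v$'s lead of $n$ persists, so the lead (and hence the decision condition) is preserved. Obstruction-freedom follows because a process running in isolation sees no changes to competing components, never switches, and keeps incrementing $c_{v_p}$ once per iteration until the gap reaches $n$.

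The main technical obstacle is proving the numeric bound $c_v \leq 3n-1$ throughout every execution. The intuition is that no $k_p$ can grow much beyond $n$ while $p$ is still active: for $k_p$ to reach $n+1$, process $p$ would need to see $v_p$ as the leader in $n+1$ consecutive scans with gap strictly less than $n$, which forces some other component to remain within $n-1$ of $c_{v_p}$ across all those scans. Combining the per-process bound on $k_p$ with the identity $\sum_v c_v = \sum_p k_p$ and the fact that there are only $n$ processes bounds the total weight in the counter. The hardest part of the analysis is handling inflation caused by processes that have already scanned but are still mid-switch, each poised to apply a batch of decrements while other processes promote unimpeded; bounding this transient inflation by $O(n)$, together with careful accounting for contributions of already-decided processes, is where the bulk of the proof effort lies and is what leads to the stated slack of $3n-1$.
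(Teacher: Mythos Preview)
Your approach has a genuine gap: the algorithm you describe does \emph{not} keep the components bounded by $3n-1$. You only decrement $c_{v_p}$ when a process switches away from $v_p$, so in any execution in which no process ever switches, no decrement is ever performed and the counts grow without bound. Such executions exist. Take $n=3$, values $0<1<2$, with process $p_i$ initially preferring value $i$, and break ties in favour of the smaller value. From any state with $c_0=c_1=c_2=k$, schedule one round as follows: $p_2$ increments $c_2$ and scans (sees $2$ strictly leading by $1<3$, continues); then $p_1$ increments $c_1$ and scans (sees $1$ and $2$ tied above $0$, so $1$ wins the tie, continues); then $p_0$ increments $c_0$ and scans (all tied, $0$ wins, continues). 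Now $c_0=c_1=c_2=k+1$ and no one has switched or decided; iterating this round drives every component past $3n-1$. Your claimed ``per-process bound on $k_p$'' is exactly where the argument breaks: the observation that some other component must stay within $n-1$ of $c_{v_p}$ is correct, but it does not bound $k_p$ --- it only says that a competitor grows alongside it.

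The paper's fix is structurally different and avoids this problem. A process never undoes its own increments; instead, whenever it wants to promote $v$ but observes some other component $c_u$ with $c_u\geq n$, it decrements $c_u$ in place of incrementing $c_v$. Thus large competing components are actively pulled down at every step, regardless of whether anyone changes preference, and a short ``each process acts at most once between scans'' argument then yields both the $3n-1$ upper bound and non-negativity directly.
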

\begin{proof}
    We modify the construction in~\lemmaref{lem:cntr} slightly 
    by changing what a process does when it wants to increment $c_v$ 
    to promote the value $v$. 
    Among the other components
    (i.e.~excluding $c_v$), let $c_u$ be one that stores the largest count.
    If $c_u  < n$, it increments $c_v$, as before.
    If $c_u \geq n$, then, instead of incrementing $c_v$, it decrements $c_u$.
    
    A component with value $0$ is never decremented. 
    This is because, after the last time some process observed that it stored a count greater than or equal to $n$,
 each process will decrement the component at most once before performing a $\id{scan}()$.
    Similarly, a component $c_v$ never becomes larger than $3n-1$:
    After the last time some process observed it to have count less than $2n$, 
    each process can increment $c_v$ at most once before performing a $\id{scan}()$.
    If $c_v \geq 2n$, then either the other components are less than $n$,
    in which case the process returns without incrementing $c_v$, 
    or the process decrements some other component, instead of incrementing $c_v$.
\end{proof}

In the following theorem, we show how to simulate unbounded and bounded 
counter objects.
\begin{theorem}
    It is possible to solve $n$-consensus using a single memory location that 
    supports only $\id{read}()$ and either $\id{multiply}(x)$, 
    $\id{add}(x)$, or $\id{set-bit}(x)$.
\end{theorem}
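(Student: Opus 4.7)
The plan is to reduce to Lemmas 2.2 and 2.3 by showing that, for each of the three instructions, a single memory location supporting $\id{read}()$ together with that instruction can simulate an $n$-component (bounded or unbounded) counter object in which each simulated $\id{increment}()$, $\id{decrement}()$, or $\id{scan}()$ is implemented by a single scheduler step (one shared instruction plus local computation). Once this simulation is in hand, the racing-counters algorithms of Lemmas 2.2 and 2.3 yield $n$-consensus directly.

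For $\id{add}(x)$, I would encode the $n$ counters in a single integer using base $B = 3n$: the stored value represents $\sum_{i=0}^{n-1} c_i B^i$, where $c_i$ is the count of component $i$. An increment of component $i$ is performed by $\id{add}(B^i)$, a decrement by $\id{add}(-B^i)$, and a scan by a single $\id{read}()$ followed by local base-$B$ decoding. Because the bounded-counter protocol of Lemma 2.3 keeps every component in $\{0,\ldots,3n-1\} \subseteq \{0,\ldots,B-1\}$, digits never carry between components, so the simulation is faithful and Lemma 2.3 applies.

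For $\id{multiply}(x)$, fix $n$ distinct primes $p_0,\ldots,p_{n-1}$ and initialize the memory location to $1$. An increment of component $i$ is $\id{multiply}(p_i)$; a scan is a $\id{read}()$ followed by a local prime factorization. Since prime factorizations compose multiplicatively, each $\id{multiply}(p_i)$ adds exactly $1$ to the exponent of $p_i$ regardless of concurrent operations, so the simulation is linearizable and yields an $n$-component unbounded counter, to which Lemma 2.2 applies. The stored value may grow unboundedly, but this is consistent with the unbounded counter specification.

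The $\id{set-bit}(x)$ case is the most delicate and is where I expect the main obstacle, because this instruction is not additive across invocations: two processes setting the same bit produce the same outcome as one doing so, so a naive ``one bit per increment'' scheme would lose concurrent increments. To sidestep this, I would partition the bits of the memory location into disjoint slots indexed by triples $(p,i,j)$ where $p$ is a process id, $i$ a component index, and $j \geq 0$. Each process $p$ maintains a local counter $k_{p,i}$ of the number of times it has incremented component $i$; an increment consists of invoking $\id{set-bit}$ on the bit for $(p,i,k_{p,i})$ and then advancing $k_{p,i}$, all in a single scheduler step. A scan reads the location and, for each $i$, returns the total number of set bits among slots $(\cdot,i,\cdot)$. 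Since each process writes only to its own slots and its local counter advances with every increment, no two increments ever target the same bit, so the sum equals the true number of increments issued to component $i$. This simulates an $n$-component unbounded counter, and Lemma 2.2 once again completes the argument.
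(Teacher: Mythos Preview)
Your proposal is correct and matches the paper's proof essentially line for line: the same prime encoding for $\id{multiply}$, the same base-$3n$ digit encoding for $\id{add}$ with the bounded-counter lemma handling carries, and the same per-process, per-component bit slots for $\id{set-bit}$. The only cosmetic difference is that the paper indexes bits by blocks of size $n^2$ (bit $vn+i$ in block $b+1$) rather than by abstract triples $(p,i,j)$, which is the same partition.
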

\begin{proof}
    We first give an obstruction-free implementation of an $n$-component unbounded counter object
    using a single location that supports
    $\id{read}()$ and $\id{multiply}(x)$.
    By~\lemmaref{lem:cntr}, this is sufficient for solving $n$-consensus. 
    The location is initialized with value $1$.
For each $v \in \{0, \ldots, n-1\}$,  let $p_v$ be the $(v+1)$'st prime number.
A process increments component $c_v$ by performing $\id{multiply}(p_v)$. 
A $\id{read}()$ instruction returns the value $x$ currently 
stored in the memory location.$\id{scan}$
This provides a $\id{scan}$ of all components:
component $c_v$ is the exponent of $p_v$ in the prime decomposition of $x$.
    
    A similar construction does not work using only 
    $\id{read}()$ and $\id{add}(x)$ instructions.
    For example, suppose one component is incremented by calling $\id{add}(a)$ 
    and another component is incremented by calling $\id{add}(b)$.
    Then, the value $ab$ can be obtained by incrementing the first component $b$ times 
    or incrementing the second component $a$ times.
    
    However, we can use a single memory location 
    that supports $\{\id{read}(), \id{add}(x)\}$
    to implement an $n$-component bounded counter.
    By~\lemmaref{lem:Bcntr}, this is sufficient for solving consensus.
We view the value stored in the location as a number written in base $3n$ and
interpret the $i$'th least significant digit of this number 
as the count of  component $c_{i-1}$.
The location is initialized with the value 0.
To increment $c_{i}$, a process performs $\id{add}((3n)^{i})$,
    to decrement $c_{i}$, it performs $\id{add}(-(3n)^{i})$
    and $\id{read}()$ provides a $\id{scan}$ of all $n$ components.
    
Finally, in systems supporting $\id{read}()$ and $\id{set-bit}(x)$, we can implement an $n$-component
unbounded counter by viewing the memory location as being partitioned into blocks, each consisting of $n^2$ bits.
Initially all bits are 0.
Each process locally stores the number of times it has incremented each component $c_v$.
To increment component $c_v$, process $i$ sets the $(vn+i)$'th bit in block $b+1$ to 1,
where $b$ is the number of times it has previously incremented component $c_v$.
It is possible to determine to current stored in each component
via a single $\id{read}()$: The count stored in component $c_v$ is simply the sum
of the number of times each process has incremented $c_v$.
\end{proof}
\section{Max-Registers}
\label{sec:maxreg}
A \emph{max-register} object~\cite{AAC09} supports two 
operations, $\id{write-max}(x)$ and $\id{read-max}()$.
The $\id{write-max}(x)$ operation sets the value of the max-register to $x$ if 
$x$ is larger than the current value and $\id{read-max}()$ returns the 
current value of the max-register (which is the largest amongst all values  previously written to it).
We show that two max-registers are necessary and sufficient for solving 
  $n$-consensus. 
\begin{theorem}
It is not possible to solve 
obstruction-free
  binary consensus for $n \geq 2$ processes using a single max-register.
\end{theorem}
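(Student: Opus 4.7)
My approach is a bivalency-style argument. It suffices to consider two processes $p_0$ (input $0$) and $p_1$ (input $1$), since impossibility for two processes implies impossibility for any $n \ge 2$ (the extra processes can be left idle). By obstruction freedom, the two solo executions from the initial configuration $C_0$ decide $0$ and $1$ respectively, so $C_0$ is bivalent in the sense that both $0$ and $1$ are reachable decisions.

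The key tool is a near-commutativity property of max-register operations. Any pair of write-max steps commute: the register ends at the maximum of the two written values in either order and neither operation returns anything. Any pair of read-max steps commute. A read-max and a write-max whose value does not exceed the current register also commute, since the write is a no-op. The only case in which the two orderings lead to genuinely different configurations is a read-max by one process paired with a write-max by the other whose value exceeds the current register; even then the two resulting configurations agree on the register value and on the writer's local state, and differ only in the value returned to the reader (hence only in the reader's local state).

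Using this I show that from any bivalent configuration $C$ in which both processes are enabled some step leads to another bivalent configuration. Assume for contradiction that every step from $C$ leads to univalent; since $C$ is bivalent, one of the two possible next steps, say $\sigma_0$ by $p_0$, leads to a $0$-valent configuration and the other $\sigma_1$ by $p_1$ to a $1$-valent one. The configurations $C_{01}$ and $C_{10}$ produced by the orderings $\sigma_0\sigma_1$ and $\sigma_1\sigma_0$ are then $0$-valent and $1$-valent respectively. By the case analysis above they are either identical (immediate contradiction) or differ only in the local state of one process $p$; in the latter case the other process's solo execution from $C_{01}$ and from $C_{10}$ starts with the same local state and the same register value, so it produces the same decision from both, contradicting the different valencies. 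Iterating this bivalency-preserving step from $C_0$ yields an infinite schedule along which every configuration is bivalent, and in which both processes must take infinitely many steps (otherwise the other would effectively run solo from a reachable configuration and decide by obstruction freedom, making the configuration univalent).

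The main obstacle is to derive a contradiction from this infinite non-terminating schedule; the crucial tool is the monotonicity of the max-register. If its value stabilizes at some finite $V$ after a finite time $T$, every subsequent write is a no-op and every read returns $V$; from $T$ onward each process's concurrent behavior coincides with its solo execution from the current configuration, because the other process's steps can neither alter the register nor change any read value. Obstruction freedom then forces each process to decide in finitely many of its own steps, contradicting non-termination. If instead the register grows unboundedly but one process's writes have a finite supremum, then from some point all of its writes are no-ops and the same argument applied to the other process finishes the proof. The remaining case, in which both processes' write values grow unboundedly, is the most delicate; I expect to resolve it by refining the choice of bivalent successor to prefer, whenever the near-commutativity analysis permits, a step that does not strictly increase the register, thereby reducing to one of the cases already handled.
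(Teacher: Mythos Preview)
Your bivalency approach is quite different from the paper's, which gives a short direct argument: take the two solo executions $\alpha$ (of $p$ with input $0$) and $\beta$ (of $q$ with input $1$), and interleave them by repeatedly advancing whichever process has the smaller pending $\id{write\text{-}max}$ argument until it is next poised to write. Because the smaller write is always absorbed by the larger one that is still pending, each process's view of the register coincides exactly with its solo execution, so both decide their solo values and agreement fails. No infinite schedule, no case analysis on register growth.

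Your proof has a genuine gap at case~3. Your critical-configuration analysis establishes only that \emph{some} successor of a bivalent configuration is bivalent; it does not let you choose which one. Concretely, when $p_0$ is poised to $\id{read\text{-}max}$ and $p_1$ is poised to $\id{write\text{-}max}$ a strictly larger value, nothing you proved rules out the possibility that $C\sigma_0$ is univalent while only $C\sigma_1$ is bivalent, forcing you to take the register-increasing step. Your proposed refinement (``prefer a non-increasing step when the analysis permits'') therefore does not help here, and there is no evident reason such forced increases cannot recur infinitely often with unbounded values for both processes. Since an infinite fair execution is not by itself a contradiction to obstruction-freedom, case~3 remains open and the proof is incomplete. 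The paper's ``let the smaller pending write go first'' trick is precisely the missing idea: it yields indistinguishability outright and sidesteps the need to control register growth.
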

\begin{proof}
Consider a solo terminating execution $\alpha$ of process $p$ 
  with input $0$ and a solo terminating execution $\beta$ 
  of process $q$ with input $1$.
We show how to interleave these two executions so that the resulting 
  execution is indistinguishable to both processes from their 
  respective solo executions.
Hence, both values will be returned, contradicting agreement.

To build the interleaved execution, run both processes until they 
  are first poised to perform $\id{write-max}$. 
Suppose $p$ is poised to perform $\id{write-max}(a)$ 
  and $q$ is poised to perform $\id{write-max}(b)$.
If $a \leq b$, let $p$ take steps until it is next poised 
  to perform $\id{write-max}$ or until the end of $\alpha$, 
  if it performs no more $\id{write-max}$ operations.
Otherwise, let $q$ take steps until it is next poised to perform 
  $\id{write-max}$ or until the end of $\beta$. 
Repeat this until one of the processes reaches the end of its execution
  and then let the other process finish.
\end{proof}
\begin{theorem}
\label{thm:racemaxr}
It is possible to solve $n$-consensus for any number of 
  processes using only two max-registers.
\end{theorem}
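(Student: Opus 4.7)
I plan to construct an obstruction-free $n$-consensus algorithm using two max-registers, $M_1$ and $M_2$, each storing pairs $(r, v)$ with round number $r \in \ints^+$ and value $v \in \{0, 1, \ldots, n-1\}$. These pairs are encoded as the single integer $rn + v$, so that the max-registers behave lexicographically with round taking precedence. Conceptually, $M_1$ will serve as a \emph{proposal} register and $M_2$ as a \emph{commit} register, in the spirit of Paxos.

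Each process with input $x$ maintains a local value $v$ initialized to $x$, and repeatedly runs the following round. (i) Set $r$ to one more than the larger of the current rounds in $M_1$ and $M_2$. (ii) Read $M_2$; if it contains a pair with value different from $v$, adopt that value and restart the loop. (iii) Write $(r, v)$ to $M_1$, read $M_1$ back, and adopt-and-restart if the read disagrees. (iv) Read $M_2$ again and adopt-and-restart on a value disagreement. (v) Write $(r, v)$ to $M_2$, read $M_2$ back, and adopt-and-restart on a disagreement. (vi) Re-read $M_1$; if it equals $(r, v)$, decide $v$, else adopt and restart. Obstruction-freedom is immediate, since a solo process passes every check in its first round; validity holds because each adopted value was previously written by some process and thus traces back to some input.

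The substantive work is agreement. The plan is to prove the invariant that, once any process decides $v$, every subsequent successful write to $M_2$ carries value $v$. The induction step considers the next writer $q$ to $M_2$ after the first decision: either $q$'s step (iv) read of $M_2$ sees value $v$ (forcing $q$ to adopt $v$ before reaching step (v)), or $q$'s step (iv) happened before the first decider wrote $v$ to $M_2$. The second case must then be ruled out by appealing to the step (vi) re-read of $M_1$ together with monotonicity.

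The main obstacle is the latter concurrent case, in which two processes $p$ and $q$ both find $M_2$ empty in step (iv) and race to write different values at the same round. The plan is to leverage both registers: if $p$ and $q$ write $(r, v)$ and $(r, v')$ to $M_1$ with $v \neq v'$, then by the lex ordering the process that wrote the smaller pair will see the larger one upon the step (vi) re-read of $M_1$, and is forced to restart. A short case analysis of the interleaving of the four critical operations (writes and reads on $M_1$ and $M_2$), relying only on the monotonicity of both max-registers, will then show that at most one of the two racers can pass step (vi), completing the agreement proof.
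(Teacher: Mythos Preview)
Your algorithm and proof plan are correct, and they take a genuinely different route from the paper. The paper's algorithm exploits the fact that max-registers are monotone to implement an obstruction-free \emph{scan} of the two registers via double collect, and then runs a racing protocol in which a process decides $x$ when a scan returns $m_1 = (r+1,x)$ and $m_2 = (r,x)$; agreement is proved by a backward induction on round numbers showing that the ``winning'' value must have been unanimous all the way down. Your algorithm instead is single-decree Paxos with $M_1$ as the ballot/proposal register and $M_2$ as the accepted-value register; you never need an atomic snapshot, only monotonicity of each register separately, and agreement follows from the classic Paxos invariant that any proposer whose proposal could supersede the decided one must have done its phase-2a read (your step~(iv)) after the decided value landed in $M_2$. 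The paper's approach buys a somewhat shorter algorithm description (three cases after a scan) at the cost of the double-collect subroutine; yours buys a self-contained six-step loop and a proof that readers familiar with Paxos will recognize immediately.

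Two small points to tighten. First, your stated invariant (``once any process decides $v$, every subsequent successful write to $M_2$ carries value $v$'') leaves a window between the first decider $p$'s step~(v) write and its step~(vi) decision; you should instead prove that any write that raises $M_2$ strictly above $(r,v)$ carries value $v$, and observe that any such writer $q$ had $M_1 = (r',v')$ at its step~(iii) readback with $(r',v') > (r,v)$, which by monotonicity of $M_1$ forces $q$'s step~(iii)---and hence its step~(iv)---to occur after $p$'s step~(vi). This is exactly the ``step~(vi) re-read of $M_1$ together with monotonicity'' argument you announce, and once stated this way it handles all rounds at once, so the separate same-round case analysis in your final paragraph is not actually needed. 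Second, be explicit about the initial contents of $M_1,M_2$ (e.g.\ a special $\bot$ below every pair) so that steps~(i), (ii), and (iv) are well-defined on the first iteration.
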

\begin{fullversion}
\begin{proof}
We describe a protocol for $n$-consensus using two max-registers, $m_1$ and $m_2$. 
Consider the lexicographic ordering $\prec$ on the set 
  $S = \nats \times \{0,\ldots,n-1\} = \{(r,x) : r \geq 0 \textrm{ and } x \in \{0,\dots,n-1\} \}$.
Let $y$  be a fixed prime that is larger than $n$. 
Note that, for $(r,x), (r',x') \in S$, $(r,x) \prec (r',x')$ 
  if and only if $(x+1)y^r < (x'+1)y^{r'}$. 
Thus, by identifying $(r,x) \in S$ with $(x+1)y^r$, we may assume that $m_1$ and $m_2$ 
  are max-registers defined on $S$ with respect to the lexicographic ordering $\prec$.

Since no operations decrease the value in a max-register, it is possible to 
  implement an obstruction-free \emph{scan} operation on $m_1$ and $m_2$ 
  using the double collect algorithm in~\cite{AADGMS93}:
  A process repeatedly collects the values in both locations (performing $\id{read-max}()$ on each location to obtain its value)
  until it observes two consecutive collects with the same values.

Initially, both $m_1$ and $m_2$ have value $(0,0)$. 
Each process alternately performs $\id{write-max}$ on one component and
takes a \emph{scan} of both components.
It begins by performing $\id{write-max}(0,x')$ to $m_1$, 
  where $x' \in \{0,\dots,n-1\}$ is its input value. 
If $m_1$ has value $(r+1,x)$ and $m_2$ has value $(r,x)$ in the \emph{scan},
 then it decides $x$ and terminates. 
If both $m_1$ and $m_2$ have value $(r,x)$ in the \emph{scan},
  then it performs $\id{write-max}((r+1,x)$ to $m_1$. 
Otherwise, it performs $\id{write-max}$ to $m_2$ with the value
  of $m_1$ in the \emph{scan}.

To obtain a contradiction suppose that there is an execution in which some 
  process $p$ decides value $x$ and another process $q$ decides value $x' \neq x$. 
Immediately before its decision, $p$ performed a \emph{scan} where 
  $m_1$ had value $(r+1,x)$ and $m_2$ had value $(r,x)$, for some $r \geq 0$.
Similarly, immediately before its decision, $q$ performed a \emph{scan} where $m_1$ 
  had value $(r'+1,x')$ and $m_2$ had value $(r',x')$, for some $r' \geq 0$. 
Without loss of generality, we may assume that $q$'s \emph{scan} occurs after $p$'s \emph{scan}.
In particular, $m_2$ had value $(r,x)$ before it had value $(r',x')$.
So, from the specification of a max-register, $(r,x) \preceq (r',x')$. 
Since $x' \neq x$, it follows that $(r,x) \prec (r',x')$. 

We show inductively, for $j = r',\ldots,0$, that some process performed a \emph{scan} 
  in which both $m_1$ and $m_2$ had value $(j,x')$. 
By assumption, $q$ performed a \emph{scan} where $m_1$ had value $(r'+1,x')$.
So, some process performed $\id{write-max}(r'+1,x')$ on $m_1$. 
From the algorithm, this process performed a \emph{scan} where $m_1$ and $m_2$ 
  both had value $(r',x')$.
Now suppose that $0 < j \leq r'$ and some process performed a \emph{scan} in which 
  both $m_1$ and $m_2$ had value $(j,x')$.  
So, some process performed $\id{write-max}(j,x')$ on $m_1$. 
From the algorithm, this process performed a \emph{scan} where $m_1$ and $m_2$ 
  both had value $(j-1,x')$.

Consider the smallest value of $j$ such that $(r,x) \prec (j,x')$. 
Note that $(r,x) \prec (r',x)$, so $j \leq r'$.
Hence, some process performed a \emph{scan} in which both $m_1$ and $m_2$ had value $(j,x')$.
Since $(r,x) \prec (j,x')$, this \emph{scan} occurred after the \emph{scan} by $p$, 
  in which $m_2$ had value $(r,x)$.
But $m_1$ had value $(j,x')$ in this \emph{scan} and $m_1$ had value $(r+1,x)$ 
  in $p$'s \emph{scan}, so $(r+1,x) \preceq (j,x')$. 
Since $x \neq x'$, it follows that $(r+1,x) \prec (j,x')$. 
Hence  $j \geq 1$  and $(r,x) \prec (j-1,x')$. 
This contradicts the choice of $j$.
\end{proof}
\end{fullversion}
\section{Increment}
\label{sec:increment}
Consider a system that supports only $\id{read}()$, $\id{write}(x)$, and 
$\id{fetch-and-increment}()$. We prove that it is not possible to solve $n$-consensus
using a single memory location.
We also consider a  weaker system that supports only $\id{read}()$, $\id{write}(x)$, and
$\id{increment}()$  and provide an algorithm using $O(\log n)$ memory locations.
\begin{theorem}
\label{thm:incnotone}
It is not possible to solve obstruction-free binary consensus 
  for $n \geq 2$ processes using a single memory location 
  that supports only $\id{read}()$, $\id{write}(x)$, and 
  $\id{fetch-and-increment}()$.
\end{theorem}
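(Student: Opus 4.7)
The plan is a valency argument adapted to a single memory location $\ell$. Assume for contradiction that there is an obstruction-free binary consensus algorithm that uses only one location $\ell$ supporting $\id{read}()$, $\id{write}(x)$, and $\id{fetch-and-increment}()$. Validity forces a solo execution of a process with input $b$ to decide $b$, so the standard adjacency argument over initial configurations that differ in a single input yields a reachable bivalent configuration $C$. From $C$, I would first extend to reach a \emph{critical} bivalent configuration $D$: one at which the next step of some process $p$ produces a $0$-valent child $Dp$ and the next step of some (possibly distinct) process $q$ produces a $1$-valent child $Dq$. The existence of such a $D$ follows by tracing a $0$-deciding execution and a $1$-deciding execution, both guaranteed by bivalency, until their valencies first diverge at a common bivalent ancestor.

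The core of the proof is the following contradiction from $D$. I would exhibit two configurations $C_0$ and $C_1$ reachable from $D$ of opposite valence that are indistinguishable to some process $r$ distinct from $p$ and $q$ (available when there are at least three processes), meaning that $\ell$ holds the same value and $r$'s local state is identical in both. Obstruction-freedom then forces $r$'s solo execution from $C_i$ to decide $i$, but those two solo executions are lock-step identical, giving the contradiction.

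The remaining work is a case analysis on the operations $p$ and $q$ are poised to perform on $\ell$. When both are reads, $Dpq$ and $Dqp$ serve as $(C_0, C_1)$: $\ell$ is unchanged and both $p$ and $q$ have just read the same value. When both are $\id{fetch-and-increment}$, the configurations $Dpq$ and $Dqp$ leave $\ell$ at the same twice-incremented value, while $p$ and $q$ differ only in which of two consecutive values each observed, invisible to $r$. When $p$ is a read and $q$ modifies $\ell$, the configurations $Dq$ and $Dpq$ agree on $\ell$ and on $q$'s state, differing only in $p$'s internal state. The delicate cases involve a $\id{write}(v)$: when $p$ performs $\id{write}(v)$, the configurations $Dp$ and $Dqp$ both have $\ell=v$, because $p$'s write overwrites any effect of $q$'s step, and they agree on $p$'s and $r$'s states; $Dp$ is $0$-valent while $Dqp$ inherits $1$-valence from $Dq$.

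I expect the main obstacle to be precisely this write-based case: one must argue that the overwriting of $\ell$ by $p$'s write is exactly what makes the two configurations indistinguishable to $r$, and that $1$-valence is preserved when $p$'s step is appended to $Dq$. A secondary technical point is the existence of the critical configuration $D$ in the obstruction-free setting, which relies on the two input-differing solo executions furnishing $0$- and $1$-deciding branches along which one can track the first valence-splitting step.
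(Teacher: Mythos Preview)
Your valency argument, as written, only works for $n \geq 3$: you explicitly rely on a witness process $r$ distinct from $p$ and $q$. The theorem, however, is stated for $n \geq 2$, and the case $n=2$ is not a formality here --- it is the crux. In the critical subcase where both $p$ and $q$ are poised to perform $\id{fetch-and-increment}()$, the configurations $Dpq$ and $Dqp$ agree on the contents of the single location, but \emph{both} $p$ and $q$ have different local states in the two configurations (each received a different return value). With only two processes there is no third party to run solo, and neither $p$ nor $q$ can serve as the indistinguishability witness. This is not an accident: $\id{fetch-and-increment}$ objects have consensus number~2, so a pure commutativity/overwriting valency argument cannot possibly rule out two-process consensus in this subcase. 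Something specific to the \emph{single-location} restriction must be used, and your case analysis never invokes it.

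The paper's proof is quite different and is tailored to $n=2$ (which suffices, since any algorithm for $n$ processes restricts to one for two). It avoids valency entirely: it takes the solo executions of one process $p$ with input~0 and with input~1, truncates each at its first $\id{write}$, matches the number of $\id{fetch-and-increment}$s in the two prefixes so that the single location holds the same value after each, and then lets the other process $q$ run solo --- $q$ cannot distinguish the two runs. Finally, $p$'s pending $\id{write}$ obliterates whatever $q$ did, letting $p$ finish its original solo execution and decide the wrong value. The single-location assumption is used precisely at this overwriting step. If you want to rescue your approach, you would need a separate argument for $n=2$ that exploits the single location in the both-$\id{fetch-and-increment}$ case; your current outline does not supply one.
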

\begin{proof}
Suppose there is a binary consensus algorithm for two processes, 
  $p$ and $q$, using only one memory location. 
Consider solo terminating executions $\alpha$ and $\beta$ by 
  $p$ with input $0$ and input $1$, respectively. 
Let $\alpha'$ and $\beta'$ be the longest prefixes of $\alpha$ and $\beta$, respectively,
that do not contain 
  a $\id{write}$.
Without loss of generality, suppose that  at least 
  as many $\id{fetch-and-increment}()$ instructions are performed in $\beta'$
  as in $\alpha'$.
Let $C$ be the configuration that results from executing $\alpha'$ 
  starting from the initial configuration in which 
  $p$ has input $0$ and the other process, $q$ has input $1$.

Consider the shortest prefix $\beta''$ of $\beta'$ 
  in which $p$ performs the same number of $\id{fetch-and-increment}()$ instructions 
  as it performs in $\alpha'$. 
Let $C'$ be the configuration that results from executing $\beta''$ 
  starting from the initial configuration in which both 
  $p$ and $q$ have input $1$.
Then $q$ must decide $1$ in its solo terminating execution $\gamma$ 
  starting from configuration $C'$.
However, $C$ and $C'$ are indistinguishable to process $q$, 
  so it must decide $1$ in $\gamma$ starting from configuration $C$.
  If $p$ has decided in configuration $C$, then it has decided 0,
  since $q$ takes no steps in $\alpha'$.
Then both 0 and 1 are decided in execution $\alpha' \gamma$ starting from 
the initial configuration  in which 
  $p$ has input $0$ and  $q$ has input $1$.
  This violates agreement.
Thus, $p$ cannot have decided in configuration $C$.
 
Therefore, $p$ is poised to perform a $\id{write}$ in configuration $C$. 
Let $\alpha''$ be the remainder of $\alpha$, so 
  $\alpha = \alpha'\alpha''$.
Since there is only one memory location,
  the configurations resulting from performing this $\id{write}$ starting 
  from $C$ and $C\gamma$ are indistinguishable to $p$.
Thus, $p$ also decides $0$ starting from $C\gamma$.
But in this execution, both $0$ and $1$ are decided, violating agreement.
\end{proof}
The following well-known construction converts any algorithm for solving binary 
consensus to an algorithm for solving $n$-valued consensus~\cite{HS12Book}. 
\begin{lemma}
\label{lem:bitbybit}
Consider a system that supports a set of instructions that includes 
$\id{read}()$ and $\id{write}(x)$. If it is possible solve obstruction-free 
binary consensus among $n$ processes using only $c$ memory locations, then it 
is possible to solve $n$-consensus using only 
  $(c+2) \cdot \lceil \log_2 {n}\rceil - 2$ locations.
\end{lemma}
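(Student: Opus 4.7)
The plan is to reduce $n$-valued consensus to $k = \lceil \log_2 n \rceil$ instances of binary consensus, one per bit of the binary encoding of inputs. Identify each input $v \in \{0, 1, \ldots, n-1\}$ with a length-$k$ bit string $v_1 v_2 \cdots v_k$; the $j$-th binary consensus instance $\textrm{BC}_j$ will produce the $j$-th bit $b_j$ of the $n$-consensus output, so that processes jointly decide the value with binary representation $b_1 b_2 \cdots b_k$. The $k$ binary consensus instances collectively use $ck$ memory locations. The main difficulty is \emph{validity}: the composite string $b_1 \cdots b_k$ must equal some process's input, but bits decided independently need not combine into the encoding of any actual input.

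To enforce validity, each process $i$ maintains a \emph{candidate} $u_i$, initially its own input, and I preserve the invariant that upon entering round $j$ its candidate is some process's input whose first $j-1$ bits match $b_1, \ldots, b_{j-1}$. For each of the first $k-1$ rounds, I introduce two additional read/write registers $R_j^0$ and $R_j^1$. In round $j \in \{1, \ldots, k-1\}$, process $i$ first writes $u_i$ to $R_j^{x}$, where $x$ is the $j$-th bit of $u_i$; then participates in $\textrm{BC}_j$ with input $x$; and finally, if the decided bit $b_j \ne x$, updates $u_i$ to the value it reads from $R_j^{b_j}$. In round $k$, process $i$ simply participates in $\textrm{BC}_k$ with the $k$-th bit of its candidate and decides the string $b_1 b_2 \cdots b_k$.

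Agreement is inherited from agreement of each binary consensus, so all processes compute the same bit-string. Obstruction-freedom follows because each round consists of only finitely many local steps plus an obstruction-free binary consensus, so a solo process always completes any remaining round. The heart of the proof is validity, which I would establish by induction on $j$. Every write to $R_j^{b_j}$ is made by some process whose candidate at the time of the write has first $j-1$ bits equal to $b_1, \ldots, b_{j-1}$ (by the induction hypothesis) and $j$-th bit equal to $b_j$ (by construction of the write target). Because $\textrm{BC}_j$ decides $b_j$, validity of $\textrm{BC}_j$ implies that at least one process had input $b_j$ to $\textrm{BC}_j$; that process wrote its candidate to $R_j^{b_j}$ strictly before taking its first step of $\textrm{BC}_j$, and hence before any process could observe the decision $b_j$. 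Consequently, whenever any process subsequently reads $R_j^{b_j}$, the returned value is a candidate consistent with $b_1, \ldots, b_j$, completing the induction step.

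The total space is $ck$ for the binary consensus instances plus $2(k-1)$ announce registers, yielding $(c+2)k - 2 = (c+2)\lceil \log_2 n \rceil - 2$ memory locations; note that no announce registers are needed for the last round because no further candidate update is ever required. The main obstacle is the validity invariant: one must carefully argue that whenever a process reads $R_j^{b_j}$, a \emph{consistent} write to that register has already occurred, which rests entirely on sequencing each write to $R_j^{b_j}$ strictly before the writer's first step in $\textrm{BC}_j$.
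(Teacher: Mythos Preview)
Your proposal is correct and follows essentially the same approach as the paper: agree on the output bit-by-bit over $\lceil \log_2 n \rceil$ rounds, with two announce registers per round (except the last) so that a process whose candidate's bit lost can adopt a candidate consistent with the decided prefix. Your validity argument (writing to $R_j^{x}$ before entering $\mathrm{BC}_j$, so that $R_j^{b_j}$ is guaranteed non-empty when read) is exactly the paper's reasoning, stated with a bit more care.
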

\begin{fullversion}
\begin{proof}
The processes agree bit-by-bit in $\lceil \log_2 {n}\rceil$ asynchronous rounds, 
  each using $c+2$ locations.
A process starts in the first round with its input value as its value for round $1$.
In round $i$, if the $i$'th bit of its value
  is $0$, a process writes its value in a designated $0$-location for the round. 
Otherwise, it writes its value in a designated $1$-location.
Then, it performs the obstruction-free binary consensus algorithm using $c$ locations
to agree on the $i$'th bit, $v_i$, of the output.
If this bit differs from the $i$'th bit of its value, 
  the process reads a recorded value from the designated $v_i$-location 
  for round $i$ and adopts its value for the next round. 
Note that some process must have already recorded a value to this
  location since, otherwise, the bit $\bar{v_i}$ would have been agreed upon.
  This ensures that the values used for round $i+1$ are all input values and
  they all agree in their first $i$ bits.
By the end, all processes have agreed on $\lceil \log_2 {n}\rceil$ bits,
  i.e.~on one of the at most $n$ different input values.scan

We can save two locations because the last round 
  does not require designated $0$ and $1$-locations.
\end{proof}
\end{fullversion}
We can implement a $2$-component unbounded counter,
  defined in~\sectionref{sec:racing},
  using two locations that support $\id{read}()$ and $\id{increment}()$.
The values in the two locations never decrease.
Therefore, as in the proof of~\theoremref{thm:racemaxr},
  a $\id{scan}()$ operation that returns the values of both counters
can be performed 
  using the double collect algorithm~\cite{AADGMS93}.
By~\lemmaref{lem:cntr}, $n$ processes can solve obstruction-free binary consensus
using a a $2$-component unbounded counter.
The next result then follows from~\lemmaref{lem:bitbybit}.
\begin{theorem}
It is possible to solve $n$-consensus using only $O(\log n)$ memory locations 
  that support only $\id{read}()$, $\id{write}(x)$, and $\id{increment}()$.
\end{theorem}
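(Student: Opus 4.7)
The plan is to chain \lemmaref{lem:cntr} with \lemmaref{lem:bitbybit} via a cheap implementation of a two-component unbounded counter using only $\id{read}()$ and $\id{increment}()$. Since the target instruction set already contains $\id{read}()$ and $\id{write}(x)$, \lemmaref{lem:bitbybit} tells us that it suffices to give an obstruction-free binary consensus algorithm using a constant number of memory locations that support $\{\id{read}(), \id{write}(x), \id{increment}()\}$. \lemmaref{lem:cntr}, specialized to $m=2$, in turn reduces this to implementing an obstruction-free $2$-component unbounded counter out of such locations.

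For the implementation, I would dedicate one memory location to each of the two components. An $\id{increment}()$ on a component is just an $\id{increment}()$ on its location, so the implementation never actually uses $\id{write}(x)$. The $\id{scan}()$ is obtained by the double-collect protocol of Afek et al., exactly as in the proof of \theoremref{thm:racemaxr}: a process repeatedly reads both locations until two consecutive collects return the same pair of values, then returns that pair. Correctness follows because each location is touched only by $\id{increment}()$, so the sequence of values stored in each location is strictly monotone in time; if two consecutive collects agree on both coordinates, then no $\id{increment}()$ occurred on either location between them, and the observed pair is a valid snapshot that can be linearized in that interval. Obstruction-freedom of the scan is immediate, since a solo process sees no changes and completes after one extra collect.

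Putting the pieces together: binary consensus among $n$ processes is solved obstruction-free using only $c = 2$ memory locations, so \lemmaref{lem:bitbybit} yields an $n$-consensus algorithm using $(c+2)\lceil \log_2 n\rceil - 2 = 4\lceil \log_2 n\rceil - 2$ locations of the desired kind, which is $O(\log n)$. There is no real obstacle: the only point that deserves a brief check is the correctness of double-collect on increment-only locations, and this is even simpler than in the max-register case, because $\id{increment}()$ strictly increases the stored value, so agreement of two consecutive collects genuinely certifies a quiet interval.
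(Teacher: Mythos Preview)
Your proposal is correct and follows essentially the same approach as the paper: implement a $2$-component unbounded counter from two locations using $\id{increment}()$ and a double-collect $\id{scan}()$ (justified, as in \theoremref{thm:racemaxr}, by monotonicity of the stored values), apply \lemmaref{lem:cntr} with $m=2$ to obtain obstruction-free binary consensus from a constant number of locations, and then invoke \lemmaref{lem:bitbybit} to lift this to $n$-consensus with $O(\log n)$ locations.
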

\section{Buffers}
\label{sec:buffer}
In this section, we consider the instructions $\id{\ell-buffer-read}()$ and  
  $\id{\ell-buffer-write}(x)$, for $\ell \geq 1$, which generalize read and write, 
  respectively. 
Specifically, an $\id{\ell-buffer-read}$  instruction returns the sequence of inputs 
  to the $\ell$ most recent $\id{\ell-buffer-write}$ instructions applied 
  to the memory location, in order from least recent to most recent. 
If the number of $\id{\ell-buffer-write}$ instructions previously applied to the memory
  location is $\ell' < \ell$, then the first $\ell - \ell'$ elements of this sequence are $\bot$.
Subsequent of the conference version of this paper~\cite{EGSZ16}, 
  Most{\'e}faoui, Perrin, and Raynal~\cite{MPR18} defined 
  a $k$-sliding window register, which is an object that supports only 
  $\id{k-buffer-read}$ and $\id{k-buffer-write}$ instructions.

We consider a system that supports the instruction set 
$\mathcal{B}_{\ell} = \{ \id{\ell-buffer-read}(), \id{\ell-buffer-write(x)} 
\}$, for some $\ell \geq 1$.
We call each memory location in such a system an $\id{\ell-buffer}$ and say
that each memory location has {\em capacity} $\ell$.
Note that a 1-buffer is simply a register. For $\ell > 1$, an 
$\ell$-buffer essentially maintains a buffer of the $\ell$ most recent writes 
to that location
and allows them to be read.

In~\sectionref{sec:buf-upper}, we show that a single $\ell$-buffer 
  can be used to simulate a powerful \emph{history} object that 
  can be updated by at most $\ell$ processes. 
This will allow us to simulate an obstruction-free variant of Aspnes 
  and Herlihy's algorithm for $n$-consensus~\cite{AH90} and, hence, 
  solve $n$-consensus, using only $\lceil n/\ell \rceil$ $\ell$-buffers.
In Section \ref{buf-lower}, we prove that $\lceil (n-1)/\ell \rceil$ 
$\ell$-buffers 
are necessary, which matches the upper bound whenever $n-1$ is not a multiple 
of $\ell$.
 
\subsection{Simulations Using Buffers}
\label{sec:buf-upper}

A \emph{history} object, $H$, supports two operations, $\id{get-history}()$ and 
$\id{append}(x)$, where $\id{get-history}()$ returns the sequence of all values 
appended to $H$ by prior $\id{append}$ operations, in order. We first show 
that, using a single $\ell$-buffer, $B$, we can simulate a history object, $H$,
that 
supports arbitrarily many readers and at most $\ell$ different appenders.

\begin{lemma}
\label{lem:historysim}
A single $\ell$-buffer can simulate a history object on which at most $\ell$ 
different processes
can perform $\id{append}()$ and any number of processes can perform 
$\id{get-history}()$.
\end{lemma}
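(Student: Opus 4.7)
The plan is to have each appender publish, on every $\id{append}$, the entire history that it currently believes the object holds, so that a $\id{get-history}$ can reconstruct the full history by combining the $\ell$ published histories it retrieves from the buffer. Each appender $P_i$ maintains a local counter $c_i$ and a local sequence $h_i$ of tagged triples $(j, s, v)$, meaning ``$P_j$'s $s$-th append has value $v$''. To perform $\id{append}(v)$, $P_i$ first does an $\id{\ell-buffer-read}$ to obtain histories $h^{(1)}, \ldots, h^{(\ell)}$ (in buffer order, oldest first); merges these with $h_i$ by processing them from oldest to newest and appending, at each step, any tagged triples not already in the merged result; increments $c_i$, appends $(i, c_i, v)$ to the merged sequence, and stores the result as $h_i$; and finally performs $\id{\ell-buffer-write}(h_i)$. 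A $\id{get-history}$ performs one $\id{\ell-buffer-read}$ and applies the same merge to the $\ell$ returned histories, then strips the tags before returning the value sequence.

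For linearizability I would linearize each $\id{append}$ at its $\id{\ell-buffer-write}$ step and each $\id{get-history}$ at its $\id{\ell-buffer-read}$ step. The central correctness claim is that, for any $\id{get-history}$ with read step at time $T$, at least one of the $\ell$ histories currently in the buffer already contains every completed $\id{append}$ whose write step precedes $T$; once that is established, the deterministic merge reconstructs the exact sequence of linearized appends in write-step order. I would prove this from a monotonicity invariant: whenever $P_j$ writes, its published history extends both its own prior local history and every history it just read from the buffer, so the histories present in the buffer at time $T$ collectively cover everything that has been written before $T$.

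The main obstacle is proving the containment claim tightly, and this is exactly where the bound of $\ell$ appenders is essential. Fix a pushed-out append $A$ by appender $P_i$ with write $W_A$, and consider the $\ell$ entries in the buffer at time $T$. Every entry by $P_i$ is automatically ``safe'' (i.e.~its stored history contains $A$), since $A$ enters $P_i$'s local state at $W_A$ and local histories never shrink. For any other appender $P_j$, if $P_j$ has more than one entry in the current buffer, then all but its earliest one are safe: the later write's $\id{\ell-buffer-read}$ happens strictly after $P_j$'s earlier write, which is itself more recent than $W_A$ (because that earlier write is still in the buffer while $W_A$ has been pushed out), and a short induction on write count ensures that at that later read $A$ is already recoverable from the buffer, so $P_j$'s later write also contains $A$. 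Hence each of the at most $\ell - 1$ appenders other than $P_i$ contributes at most one unsafe entry, so among $\ell$ entries at least one is safe and contains $A$. The most delicate point I expect is verifying that the merge rule, combined with the linearization I chose, produces an ordering identical to the real-time write order of all linearized appends; in particular, when two appenders act concurrently without observing each other, I would have to show that the canonical ``buffer-position-oldest-first'' rule agrees with the actual write-step order in every subsequent $\id{get-history}$.
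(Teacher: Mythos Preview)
Your architecture is close to the paper's, but the merge procedure has a genuine gap---exactly the point you flagged as ``most delicate''. The per-append claim you argue (each pushed-out append survives in at least one of the $\ell$ current histories) is correct and guarantees the \emph{set} of appends is recoverable, but your rule ``process buffer entries oldest to newest, appending any triple not already present'' does not recover the \emph{order}. Counterexample with $\ell=2$: $P_1$ and $P_2$ each read an empty buffer; $P_1$ writes $[(1,1,a_1)]$, then $P_2$ writes $[(2,1,b_1)]$; next $P_1$ reads, correctly merges to $[(1,1,a_1),(2,1,b_1)]$, appends $(1,2,a_2)$, and writes $[(1,1,a_1),(2,1,b_1),(1,2,a_2)]$. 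The buffer now holds $[(2,1,b_1)]$ in the older slot and $[(1,1,a_1),(2,1,b_1),(1,2,a_2)]$ in the newer slot. A $\id{get-history}$ processes the older entry first, obtaining $[(2,1,b_1)]$, then appends the missing triples from the newer entry, returning $b_1,a_1,a_2$ instead of the correct $a_1,b_1,a_2$. The failure mode is that an older buffer entry can be a proper \emph{infix}, not a prefix, of a newer one, so ``append what is missing'' mis-orders. (Separately, your stated central claim---that a \emph{single} buffered history contains \emph{every} append before $T$---is stronger than what your argument proves and is false whenever the $\ell$ current entries are pairwise concurrent; your pigeonhole only yields the per-append version.)

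The paper sidesteps the ordering problem by writing pairs $(\mathbf{h},x)$ rather than merged histories. On an $\ellbr$ returning $(\mathbf{h}_1,x_1),\ldots,(\mathbf{h}_\ell,x_\ell)$, the values $x_1,\ldots,x_\ell$ are the last $\ell$ appends \emph{already in write order}---the buffer supplies this for free---so the only remaining task is to recover the history strictly before $x_1$. For that the paper takes the longest $\mathbf{h}_m$ and shows, via the same $\ell$-appender pigeonhole you sketched, that $\mathbf{h}_m$ covers everything up to $x_1$ (either $\mathbf{h}_m$ already contains $x_1$, or all $\ell$ current appends are concurrent and hence by distinct processes, so nothing else was appended between $\mathbf{h}_m$'s read and $x_1$'s write). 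Separating the new value from the prior history is what makes ordering trivial; your bundled representation discards that separation and then cannot reconstruct the order by appending.
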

\begin{proof}
Without loss of generality, assume that no value is appended to $H$
more than once. This can be achieved by having a process include
its process identifier and a sequence number along with the value
that it wants to append. 

In our implementation, $B$ is initially $\bot$ and each value written to
$B$ is of the form $(\mathbf{h},x)$, where $\mathbf{h}$ is a history of 
appended values and $x$ is a single
appended value.

To implement $\id{append}(x)$ on $H$, a process obtains a history, 
  $\mathbf{h}$, by performing $\id{get-history}()$ on $H$ and then 
  performs $\id{\ell-buffer-write}(\mathbf{h},x)$ on $B$. 
The operation is linearized at this $\id{\ell-buffer-write}$ step.

To implement $\id{get-history}()$ on $H$, a process simply performs an 
  $\id{\ell-buffer-read}$ of $B$ to obtain a vector $(a_1,\dots,a_{\ell})$,
  where $a_{\ell}$ is the most recently written value.
The operation is linearized at this $\id{\ell-buffer-read}$.
We describe how the return value of the $\id{get-history}()$ operation is computed.

We prove
that each $\id{get-history}()$ operation, $G$, on $H$ returns the sequence of inputs 
  to all $\id{append}$ operations on $H$ that were linearized before it, 
  in order from least recent to most recent.
Let $R$ be the $\id{\ell-buffer-read}$ step performed by $G$ and let $(a_1,\ldots,a_\ell)$ be the vector returned by $R$.

Note that   $(a_1,\ldots,a_\ell) = (\bot,\ldots,\bot)$ if and only if no
$\id{\ell-buffer-write}$ steps were performed before $R$
i.e.~if and only if no $\id{append}$ operations are linearized before $G$.
In this case, the empty sequence is returned by the $\id{get-history}()$ operation, as required.

Now suppose that $k \geq 1$ $\id{\ell-buffer-write}$ steps were performed on $B$ before $R$, i.e.~$k$
$\id{append}$ operations were linearized before $G$.
Inductively assume that each $\id{get-history}()$ operation
which has fewer than $k$ $\id{append}$ operations linearized before it
returns the sequence of inputs to those $\id{append}$ operations.

If $a_i \neq \bot$, then $a_i = (\mathbf{h}_i,x_i)$ was the input to an  $\id{\ell-buffer-write}$ step $W_i$ on $B$
performed before $R$. Consider the $\id{append}$ operation $A_i$ that performed step $W_i$. It appended the value $x_i$
to $H$ and the $\id{get-history}()$ operation, $G_i$, that $A_i$ performed returned the history
$\mathbf{h}_i$  of appended values .
Let $R_i$ be the $\id{\ell-buffer-read}$ step performed by $G_i$.
Since $R_i$ occurred before $W_i$, which occurred before $R$,
fewer than $k$ $\id{\ell-buffer-write}$ steps occurred before $R_i$.
Hence, fewer than $k$ $\id{append}$ operations are linearized before $G_i$.
By the induction hypothesis, $\mathbf{h}_i$ is the sequence of inputs to the 
$\id{append}$ operations linearized before $G_i$.

If $k < \ell$, then $a_1 = \cdots = a_{\ell -k} = \bot$.
In this case, $G$ returns the sequence $x_{\ell-k+1},\ldots,x_\ell$.
Since each $\id{append}$ operation is linearized at its 
  $\id{\ell-buffer-write}$ step and $x_{\ell-k+1},\ldots,x_\ell$ are the inputs
to these $k$ $\id{append}$ operations, 
  in order from least recent to most recent,
  $G$ returns the sequence of inputs to the $\id{append}$ operations
  linearized before it.
  
So, suppose that $k \geq \ell$.
 Let $\mathbf{h} = \mathbf{h}_m$ be the longest 
  history amongst $\mathbf{h}_1,\dots,\mathbf{h}_\ell$. 
If $\mathbf{h}$ contains $x_1$, then $G$
  returns $\mathbf{h}' , x_1,\ldots,x_\ell$, where $\mathbf{h}'$ 
  is the prefix of $\mathbf{h}$ up to, but not including, $x_1$.
By definition, $a_1, \ldots, a_\ell$ are the inputs to the last
$\id{\ell-buffer-write}$ operations prior to $R$,
so
$x_1,\dots,x_\ell$ are the last $\ell$ values appended
  to $H$ prior to $G$. 
Since $\mathbf{h}$ contains $x_1$, it 
  also contains all values appended to $H$ prior to $x_1$. 
It follows that $\mathbf{h}' \cdot (x_1,\dots,x_\ell)$ is the 
the sequence of inputs to the $\id{append}$ operations
  linearized before $G$.
  
\begin{figure*}[ht]
\centering
\begin{adjustbox}{max width = \textwidth}
{
\includegraphics[scale=1.35]{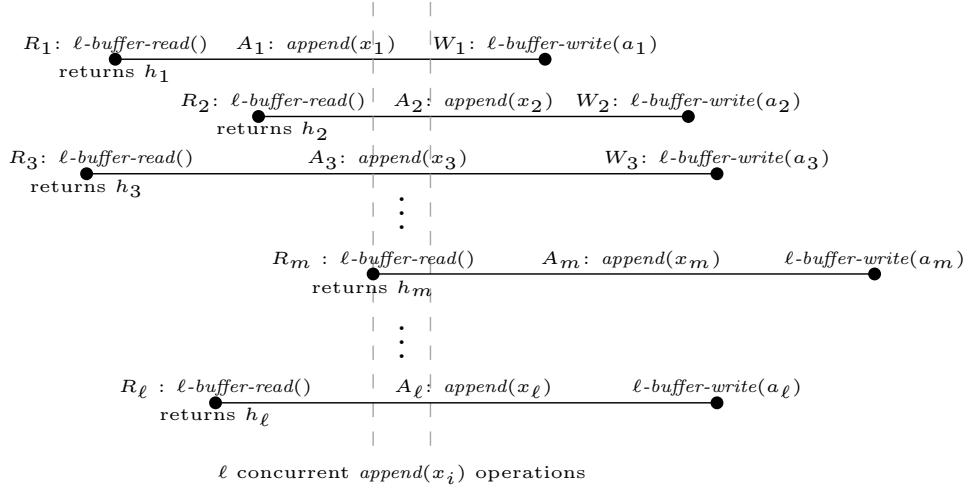}
}
\end{adjustbox}
\caption{When $\mathbf{h}$ does not contain $x_1$, there are $\ell$ concurrent $\id{append}$ operations.}
\label{fig:case2}
\end{figure*}

Now suppose that $\mathbf{h}$ does not contain $x_1$.
Then none of $\mathbf{h}_1, \ldots, \mathbf{h}_{\ell}$ contain $x_1$.
Hence $G_1, \ldots, G_{\ell}$ were linearized before $A_1$
and $R_1, \ldots,R_{\ell}$ were performed prior to $W_1$.
Since step $W_1$ occurred before $W_2,\ldots,W_{\ell}$,
the operations $A_1,\ldots,A_{\ell}$ are all concurrent with one another.
This is illustated in~\figureref{fig:case2}.
Therefore $A_1,\ldots,A_{\ell}$ are performed by different processes.
Only $\ell$ different processes can perform $\id{append}$ operations on $H$,
so no other $\id{append}$ operations on $H$ are  linearized between
$R_m$ and $W_1$.
Therefore, $\mathbf{h}$ contains all values appended to $H$ prior to $x_1$. 
It follows that $\mathbf{h} \cdot (x_1,\dots,x_\ell)$ 
  is the  sequence of inputs to the $\id{append}$ operations
  linearized before $G$.
\end{proof}

This lemma allows us to simulate \emph{any} object that supports 
  at most $\ell$ updating processes using only a single $\ell$-buffer. 
This is because the state of an object is determined by the history 
  of the non-trivial operations performed on it. 
In particular, we can simulate an array of $\ell$ single-writer 
  registers using a single $\ell$-buffer. 

\begin{lemma}
\label{lem:buffersim}
A single $\ell$-buffer can simulate $\ell$ single-writer registers.
\end{lemma}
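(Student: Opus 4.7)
The plan is to reduce the claim to Lemma~\ref{lem:historysim}. That lemma already gives a simulation, from a single $\ell$-buffer, of a history object $H$ that supports unbounded readers and up to $\ell$ distinct appenders; so it suffices to implement $\ell$ single-writer registers on top of such an $H$.

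To do this, I would fix a bijection between the $\ell$ writers and the $\ell$ registers: writer $i$ owns register $R_i$. To perform a $\id{write}(v)$ on $R_i$, writer $i$ invokes $\id{append}((i,v))$ on $H$, tagging the value with its identifier and (for uniqueness, as already assumed inside the history simulation) an incrementing local sequence number. To perform a $\id{read}()$ on $R_i$, any process invokes $\id{get-history}()$ on $H$, scans the returned sequence from most recent to least recent, and returns the value $v$ of the last entry whose first coordinate is $i$; if no such entry exists, it returns the initial value of $R_i$. Since only writer $i$ ever appends entries tagged with $i$, $H$ is updated by at most $\ell$ distinct processes, matching the precondition of Lemma~\ref{lem:historysim}.

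For correctness, I would linearize each simulated $\id{write}$ at the linearization point of its underlying $\id{append}$, and each simulated $\id{read}$ at the linearization point of its underlying $\id{get-history}()$. Because $\id{get-history}()$ returns, in order, exactly the inputs to all $\id{append}$ operations linearized before it, the last entry tagged $i$ in that sequence corresponds to the $\id{write}$ on $R_i$ whose linearization point is most recent but still precedes the $\id{read}$, which is what a single-writer register read must return. If there is no tagged-$i$ entry, then no $\id{write}$ to $R_i$ has been linearized yet, so returning the initial value is also correct. Progress is inherited directly from the $\id{get-history}()$ and $\id{append}$ implementations in Lemma~\ref{lem:historysim}, which themselves execute a single $\ell$-buffer instruction (after bounded local work).

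The only subtlety, and the step I would treat most carefully, is confirming that the at-most-$\ell$-appender hypothesis of Lemma~\ref{lem:historysim} is respected: this is immediate from the single-writer semantics of the simulated registers, since writer $i$ is the unique process that can invoke $\id{append}$ with first coordinate $i$, and the number of register indices is $\ell$. Beyond that, the argument is a routine linearization proof and no new combinatorial ideas are needed.
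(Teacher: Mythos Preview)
Your proposal is correct and follows essentially the same approach as the paper: reduce to Lemma~\ref{lem:historysim}, have writer $i$ perform $\id{append}((i,v))$ to simulate a write to $R_i$, and have a reader perform $\id{get-history}()$ and extract the last entry tagged $i$. The paper's proof is terser (it omits the explicit linearization discussion and the initial-value case), but the construction and reasoning are the same.
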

\begin{fullversion}
\begin{proof}
Suppose that register $R_i$ is owned by process $p_i$, for $1 \leq i \leq \ell$.
By~\lemmaref{lem:historysim}, it is possible to simulate a history object $H$ that can
  be updated by $\ell$ processes and read by any number of processes.
To write value $x$ to $R_i$, process $p_i$ appends $(i,x)$ to $H$. 
To read $R_i$, a process reads $H$ and finds the value of the most recent write to $R_i$.
This is the second component of the last pair in the history whose first component is $i$.
\end{proof}
\end{fullversion}

Thus, we can use $\lceil \frac{n}{\ell} \rceil$ $\ell$-buffers 
  to simulate $n$ single-writer registers. 
An $n$-component unbounded counter shared by $n$ processes can be implemented 
  in an obstruction-free way from $n$ single-writer registers.
Each process records the number of times it has incremented each component
  in its single-writer register.
An obstruction-free $\id{scan}()$ can be performed using 
  the double collect algorithm~\cite{AADGMS93} and summing.
Hence, by~\lemmaref{lem:cntr} we get the following result.

\begin{theorem}
It is possible to solve $n$-consensus using only $\lceil n/\ell \rceil$ 
$\ell$-buffers.
\end{theorem}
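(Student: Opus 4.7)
The plan is to chain the two simulation lemmas above with the racing-counters algorithm of Lemma~\ref{lem:cntr}. First, I would partition the $n$ processes into $\lceil n/\ell \rceil$ groups of size at most $\ell$ and allocate one $\ell$-buffer to each group. By Lemma~\ref{lem:buffersim}, each $\ell$-buffer simulates $\ell$ single-writer registers, one per process in its group. This yields $n$ single-writer registers, $R_0,\ldots,R_{n-1}$, with process $i$ owning $R_i$.

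Next, I would use these single-writer registers to implement an $n$-component unbounded counter object. Each process $i$ maintains in $R_i$ a vector $(c^{(i)}_0, \ldots, c^{(i)}_{n-1})$, where $c^{(i)}_v$ is the number of times process $i$ has invoked $\id{increment}$ on component $c_v$. To perform $\id{increment}$ on component $c_v$, process $i$ increments its local copy of $c^{(i)}_v$ and writes the updated vector to $R_i$. To perform $\id{scan}()$, a process runs the double-collect algorithm of~\cite{AADGMS93}: it repeatedly reads $R_0,\ldots,R_{n-1}$ until two consecutive collects return identical vectors, and then returns, for each $v$, the sum $\sum_{i=0}^{n-1} c^{(i)}_v$ as the value of component $c_v$. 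Since values in each $R_i$ only grow (each process only increments its local counts), a double-collect in a solo execution terminates after two collects, giving obstruction freedom of the $\id{scan}$. Linearizability of the simulated counter follows by the standard monotone double-collect argument.

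Finally, I would invoke Lemma~\ref{lem:cntr} to solve obstruction-free $n$-valued consensus among $n$ processes using this $n$-component unbounded counter. Since each primitive operation of the counter is implemented from the single-writer registers in an obstruction-free manner, and each of those registers is in turn obstruction-freely implemented from the $\ell$-buffers by Lemma~\ref{lem:buffersim}, composing these obstruction-free implementations yields an obstruction-free $n$-consensus algorithm using $\lceil n/\ell \rceil$ $\ell$-buffers.

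The only real concern is the composition of obstruction-free simulations: a priori, nested double-collects could fail to terminate in a solo execution because the lower-level simulation performs multiple steps per high-level operation. However, this is not an issue here because in a solo execution the lone process sees no concurrent writes at any level — its own writes are deterministic and monotone, so each nested loop terminates after a bounded number of iterations. Once this is verified, the theorem follows immediately.
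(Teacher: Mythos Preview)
Your proposal is correct and follows essentially the same route as the paper: partition processes into groups of size at most $\ell$, use Lemma~\ref{lem:buffersim} to get $n$ single-writer registers from $\lceil n/\ell\rceil$ $\ell$-buffers, build an $n$-component unbounded counter via per-process increment vectors and a double-collect $\id{scan}$, and then apply Lemma~\ref{lem:cntr}. Your extra discussion of composing obstruction-free layers is sound (and slightly over-cautious, since the register simulation from Lemmas~\ref{lem:historysim}--\ref{lem:buffersim} is in fact wait-free, so no nested unbounded loops arise).
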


\subsection{A Lower Bound}
\label{buf-lower}
In this section, we prove a lower bound on the number of 
  memory locations (supporting $\id{\ell-buffer-read}()$ and $\id{\ell-buffer-write}(x)$) necessary for solving obstruction-free 
  binary consensus among $n \geq 2$ processes.
  

In any configuration, location $r$ is \emph{covered} by process $p$ 
  if $p$ is poised to perform $\ellbw$ on $r$.
A location is \emph{$k$-covered} by a set of processes $\mathcal{P}$ in a configuration
if there are exactly $k$ processes in $\mathcal{P}$ that cover it.
A configuration is \emph{at most $k$-covered} by $\mathcal{P}$, if every process in $\mathcal{P}$ covers some location
and no location
  is $k'$-covered by $\mathcal{P}$,
  for any $k' > k$.

Let $C$ be a configuration and let $\mathcal{Q}$ be a set of processes, each of which is poised to perform
$\ellbw$ in $C$.
A \emph{block write} by $\mathcal{Q}$ \emph{from} $C$ is an execution,
  starting from $C$, in which each process in $\mathcal{Q}$ takes 
  exactly one step.
If a block write is performed that includes $\ell$ different
$\id{\ell-buffer-write}$ instructions to the same location,
and then some process performs $\id{\ell-buffer-read}$ on that location,
the process gets the same result
regardless of the value of that location
in $C$.


We say that a set of processes $\mathcal{P}$ \emph{can decide} 
  $v \in \{ 0, 1 \}$ \emph{from} a configuration $C$ if there exists 
  a $\mathcal{P}$-only execution from $C$ in which $v$ is decided.
If $\mathcal{P}$ can decide both $0$ and $1$ from $C$, 
  then $\mathcal{P}$ is \emph{bivalent} from $C$.

To obtain the lower bound, we extend the proof of the $n-1$ 
  lower bound on the number of registers required for solving 
  $n$-process consensus~\cite{Zhu16}.
We also borrow intuition 
  about reserving executions from the $\Omega(n)$ lower
  bound for anonymous consensus~\cite{Gel15}.
The following auxiliary lemmas are largely unchanged from~\cite{Zhu16}.
The main difference is that we only perform block writes on $\ell$-buffers
  that are $\ell$-covered by $\mathcal{P}$.
\begin{lemma}
  \label{lem:initbi}
There is an initial configuration from which the set of 
  all processes in the system is bivalent.
\end{lemma}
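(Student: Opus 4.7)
The plan is to exhibit an explicit initial configuration $C$ that is bivalent from the set of all processes. Since $n \geq 2$, I can choose $C$ so that at least one process, call it $p_0$, starts with input $0$ and at least one process, call it $p_1$, starts with input $1$. I need to show two things: starting from $C$, the full process set can decide $0$ in some execution, and it can decide $1$ in another. Since $\mathcal{P}$ is the set of all processes, any execution is a $\mathcal{P}$-only execution, so it suffices to produce two executions from $C$ with different decisions.

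To produce an execution from $C$ that decides $0$, I will run $p_0$ solo. By obstruction-freedom, the solo execution of $p_0$ from $C$ terminates with some decision $v$. The key observation is that this solo execution is step-for-step indistinguishable to $p_0$ from its solo execution starting from the initial configuration $C'$ in which \emph{every} process has input $0$: $p_0$ only ever inspects its own state and the contents of memory locations, and the inputs of the other processes do not affect either. Hence $p_0$ decides the same value $v$ in the solo execution from $C'$. But in $C'$ only the input $0$ is present, so validity forces $v = 0$. Thus $\mathcal{P}$ can decide $0$ from $C$.

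Symmetrically, running $p_1$ solo from $C$ produces an execution indistinguishable to $p_1$ from its solo execution starting from the all-$1$ initial configuration, where validity forces the decision to be $1$. Hence $\mathcal{P}$ can decide $1$ from $C$ as well, and $C$ is bivalent from $\mathcal{P}$.

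There is essentially no obstacle here: the argument reduces to the standard validity-plus-indistinguishability observation that a process running alone cannot tell the inputs of processes that never take a step, combined with the fact that obstruction-freedom guarantees termination of solo executions. The only thing to be careful about is that the definition of ``can decide'' requires a $\mathcal{P}$-only execution, which is automatic when $\mathcal{P}$ is the entire set of processes.
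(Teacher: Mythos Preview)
Your proof is correct and follows essentially the same approach as the paper: both pick an initial configuration with a process $p_0$ having input $0$ and a process $p_1$ having input $1$, then use indistinguishability from the all-$0$ (respectively all-$1$) initial configuration together with validity to conclude that $p_v$'s solo execution decides $v$. You spell out the roles of obstruction-freedom and the definition of ``$\mathcal{P}$-only'' a bit more explicitly, but the argument is the same.
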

\begin{proof}
Consider an initial configuration, $I$, with two processes $p_0$ and $p_1$,
such that $p_v$ starts with input $v$,  for  $v \in \{0,1\}$.
Observe that $\{p_v\}$ can decide $v$ from $I$ since, 
  initially, $I$ is indistinguishable to $p_v$ from the configuration 
  where every process starts with input $v$.
Thus, $\{p_0, p_1\}$ is bivalent from $I$ and, therefore, 
  so is the set of all processes.
\end{proof}

\begin{lemma}
\label{lem:cover}
Let $C$ be a configuration and $\mathcal{Q}$ be a set processes
  that is bivalent from $C$.
Suppose $C$ is at most $\ell$-covered by a set of processes 
  $\mathcal{R}$, where $\mathcal{R} \cap \mathcal{Q} = \emptyset$.
Let $L$ be a set of locations that are $\ell$-covered 
  by $\mathcal{R}$ in $C$.
Let $\beta$ be a block write from $C$ by the set of 
  $\ell \cdot |L|$ processes in $\mathcal{R}$ that cover $L$.
Then there exists a $\mathcal{Q}$-only execution $\xi$ from $C$
  such that $\mathcal{R} \cup \mathcal{Q}$ is bivalent from 
 $C \xi \beta$ and, in configuration $C \xi$, some process 
  in $\mathcal{Q}$ covers a location not in $L$.
\end{lemma}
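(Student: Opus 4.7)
My plan is a proof by contradiction, adapting Zhu's $n-1$ register argument~\cite{Zhu16} to the $\ell$-buffer setting. The key property of $\beta$ that I will exploit is that, since each location in $L$ is $\ell$-covered by $\mathcal{R}$ in $C$, after $\beta$ each $r \in L$ holds only values pushed in during $\beta$; in particular, any $\mathcal{Q}$-only step $s$ that acts on a location outside $L$ commutes with $\beta$, so $C\eta s\beta$ and $C\eta\beta s$ coincide whenever $s$ touches a non-$L$ location. Using the bivalence of $\mathcal{Q}$ from $C$, I fix $\mathcal{Q}$-only executions $\alpha_0, \alpha_1$ from $C$ that decide $0$ and $1$, respectively.

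Suppose for contradiction that no $\xi$ satisfies both required properties. Pick $w \in \{0,1\}$ such that $C\beta$ is not $(1-w)$-valent with respect to $\mathcal{R}\cup\mathcal{Q}$; if $C\beta$ is itself bivalent, then the contradictory assumption forces no $q \in \mathcal{Q}$ to cover a non-$L$ location in $C$, and either choice of $w$ works. Walk along $\alpha_{1-w}$ with prefixes $\eta_0 = \epsilon, \eta_1, \ldots, \eta_k = \alpha_{1-w}$. The $\mathcal{R}\cup\mathcal{Q}$-valence of $C\eta_k\beta$ is $\{1-w\}$, because some process in $\mathcal{Q}$ has already decided $1-w$ at $C\eta_k$, and agreement forbids any subsequent decision of $w$. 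Let $i^*$ be the smallest index at which $C\eta_{i^*}\beta$ is no longer $w$-valent, and let $s$ be the $i^*$-th step, performed by some $q \in \mathcal{Q}$. If $s$ acts on a location outside $L$, the commuting property gives $C\eta_{i^*}\beta = C\eta_{i^*-1}\beta\cdot s$, making $C\eta_{i^*}\beta$ reachable from the $w$-valent configuration $C\eta_{i^*-1}\beta$ by a single $\mathcal{Q}$-step and hence itself $w$-valent --- a contradiction. Therefore $s$ acts on a location in $L$.

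I then extend the walk past $\eta_{i^*-1}$ to find the earliest prefix $\xi$ of $\alpha_{1-w}$ at which some process in $\mathcal{Q}$ is poised on a location outside $L$. The existence of such a $\xi$ is forced by the following observation: if no $\mathcal{Q}$-process ever became poised on non-$L$ throughout $\alpha_{1-w}$, then every $\mathcal{Q}$-write in $\alpha_{1-w}$ would lie in $L$ and be overwritten by $\beta$, so $C\alpha_{1-w}\beta$ and $C\alpha_w\beta$ would agree on the contents of all non-$L$ locations, contradicting that the two executions lead to different decisions via $\mathcal{R}\cup\mathcal{Q}$-only continuations. For property (C) at this $\xi$: the continuation of $\alpha_{1-w}$ after $\xi$ witnesses a $(1-w)$-decision from $C\xi\beta$, while the $w$-valence of $C\eta_{i^*-1}\beta$, propagated through the commuting non-$L$ steps lying between $\eta_{i^*-1}$ and $\xi$, yields an $\mathcal{R}\cup\mathcal{Q}$-only execution from $C\xi\beta$ deciding $w$.

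I expect the hardest step to be rigorously establishing property (C) at the chosen $\xi$. The $w$-deciding continuation from $C\xi\beta$ must be constructed by carefully interleaving a $w$-deciding execution from $C\eta_{i^*-1}\beta$ with the $L$-touching steps of $\alpha_{1-w}$ lying strictly between $\eta_{i^*-1}$ and $\xi$, and one must verify that the $\ellbw$ steps among these do not disrupt the $w$-deciding path. This is precisely where the $\ell$-coverage hypothesis on $L$ (versus mere coverage) becomes essential, since each location in $L$ can absorb up to $\ell$ overwrites without leaking information about the intermediate writes, mirroring the role played by single-writer overwriting in Zhu's original register argument.
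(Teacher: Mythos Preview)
Your proposal has a genuine structural gap. The walk argument you set up concludes that the critical step $s$ lies \emph{inside} $L$, which is the opposite of what you need: the lemma asks for a process in $\mathcal{Q}$ poised on a location \emph{outside} $L$. You then try to recover by separately searching for the earliest prefix $\xi$ where some $\mathcal{Q}$-process is poised on a non-$L$ location, but at that point you have lost the connection to bivalence at $C\xi\beta$. Your claimed $(1{-}w)$-witness, ``the continuation of $\alpha_{1-w}$ after $\xi$,'' is an execution from $C\xi$, not from $C\xi\beta$; since $\beta$ rewrites every location in $L$, the $\mathcal{Q}$-processes see different buffer contents and need not behave the same way. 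Your claimed $w$-witness is supposed to propagate from $C\eta_{i^*-1}\beta$ through ``commuting non-$L$ steps,'' but you just argued that the very first step after $\eta_{i^*-1}$ is on $L$, so no such commuting chain exists. The final paragraph essentially concedes that this construction is not worked out.

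The underlying issue is that you track $(\mathcal{R}\cup\mathcal{Q})$-valence of $C\eta_i\beta$. Univalence is preserved forward under reachability, so your commuting argument only rules out non-$L$ steps at the critical index---the wrong direction. The paper avoids this by fixing a \emph{single} process $p\in\mathcal{R}$ and tracking the predicate ``$p$ can decide $v$ in its solo run from $C\xi\beta$.'' Because $p\notin\mathcal{Q}$ takes no steps in $\xi$ or $\delta$, a step $\delta$ that is an $\ellbr$ or an $\ellbw$ to $L$ leaves $C\xi\beta$ and $C\xi\delta\beta$ indistinguishable \emph{to $p$}, so such a $\delta$ cannot be critical. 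Hence the critical $\delta$ is an $\ellbw$ to some $r\notin L$; this simultaneously gives the covering condition, and since $C\xi\beta\delta = C\xi\delta\beta$, it yields bivalence of $\{p,q\}$ (and hence $\mathcal{R}\cup\mathcal{Q}$) from $C\xi\beta$ directly. Replacing your global valence predicate with this single-process solo-decision predicate collapses your multi-stage argument into a one-step case analysis and eliminates the need for the unfinished interleaving construction.
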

\begin{proof}
Suppose some process $p \in \mathcal{R}$ can decide some value 
  $v \in \{ 0, 1\}$ from configuration $C \beta$ and $\zeta$ is a 
  $\mathcal{Q}$-only execution from $C$ in which $\bar{v}$ is decided.
Let $\xi$ be the longest prefix of $\zeta$ such that 
  $p$ can decide $v$ from $C\xi\beta$.
Let $\delta$ be the next step by $q \in \mathcal{Q}$ 
  in $\zeta$ after $\xi$.

If $\delta$ is an $\ellbr$ or is an $\ellbw$ to a location in $L$,
  then $C\xi\beta$ and $C\xi\delta\beta$ 
  are indistinguishable to $p$.
Since $p$ can decide $v$ from $C\xi\beta$, 
  but $p$ can only decide $\bar{v}$ from $C\xi\delta\beta$, 
  $\delta$ must be an $\ellbw$ to a location not in $L$.
Thus, in configuration in $C\xi$, $q$ covers a location not in $L$
  and $C\xi\beta\delta$ is indistinguishable from 
  $C\xi\delta\beta$ to process $p$.
Therefore, by definition of $\xi$, 
  $p$ can only decide $\bar{v}$ from $C\xi\beta\delta$
  and $p$ can decide $v$ from $C\xi\beta$.
This implies that $\{p, q\}$ is bivalent 
  from $C\xi\beta$.
\end{proof}

The next result says that if a set of processes is bivalent in some
  configuration, then it is possible to reach a configuration 
  from which 0 and 1 can be decided
  in solo executions. 
It does not depend on what instructions are supported by the memory.
\begin{lemma}
\label{lem:twosolo}
Suppose $\mathcal{U}$ is a set of at least two processes that is bivalent 
  from configuration $C$.
Then it is possible to reach, via a $\mathcal{U}$-only execution from $C$, 
  a configuration, $C'$, such that, for $i = 0,1$, there is a process $q_i \in \mathcal{U}$
that decides $i$ from $C'$.
\end{lemma}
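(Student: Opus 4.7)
My plan is to argue by contradiction using a valency walk along $\alpha_1$, upgrading the execution-level bivalency of $\mathcal{U}$ into the existence of solo witnesses from a single common configuration. By obstruction-freedom, for every configuration $D$ reachable from $C$ by a $\mathcal{U}$-only execution and every $q \in \mathcal{U}$, I can define $\lit{solo}(q, D) \in \{0,1\}$ to be $q$'s decided value if $q$ has already decided at $D$, and otherwise the value $q$ decides in its (terminating) solo execution from $D$. Let $V(D) = \{\lit{solo}(q, D) : q \in \mathcal{U}\}$. The lemma is then equivalent to finding a configuration $D$ reachable from $C$ by a $\mathcal{U}$-only execution with $V(D) = \{0,1\}$; picking $q_0, q_1 \in \mathcal{U}$ witnessing the two values gives the required $C' = D$.

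Assume for contradiction that $V(D) \neq \{0,1\}$ for every such $D$. By bivalency, fix $\mathcal{U}$-only executions $\alpha_0$ and $\alpha_1$ from $C$ deciding $0$ and $1$ respectively, each truncated so that its last step is the first decision step of that execution. Agreement prevents a bivalent $\mathcal{U}$ from having all its members already decided at $C$, so $V(C) \in \{\{0\},\{1\}\}$; without loss of generality $V(C) = \{0\}$, swapping the roles of $\alpha_0$ and $\alpha_1$ otherwise. At the penultimate configuration of $\alpha_1$, the unique process poised to decide $1$ witnesses $1 \in V$. Walking along $\alpha_1$ from $C$, let $\sigma$ be the first step, from $D_i$ to $D_{i+1}$, for which $V(D_{i+1}) \not\subseteq \{0\}$; then by the standing assumption $V(D_{i+1}) = \{1\}$, and $V(D_i) = \{0\}$.

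Let $p^* \in \mathcal{U}$ be the process taking step $\sigma$, so $p^*$ is undecided at $D_i$ with $\lit{solo}(p^*, D_i) = 0$. Because $V(D_i) = \{0\}$ while $V$ at the penultimate configuration of $\alpha_1$ already contains $1$, the step $\sigma$ is not the last (decide) step of the truncated $\alpha_1$; hence $p^*$ is still undecided at $D_{i+1}$, giving $\lit{solo}(p^*, D_{i+1}) = 1$. But the solo execution of $p^*$ from $D_i$ begins with $\sigma$, reaches $D_{i+1}$, and then coincides with the solo execution of $p^*$ from $D_{i+1}$; by determinism both executions yield the same decision, contradicting $0 \neq 1$.

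The delicate points to dispatch cleanly will be ruling out the degenerate cases ($V(D_i) = \emptyset$ and $\sigma$ being the last step of $\alpha_1$), both handled by the truncation together with the penultimate-configuration observation, and being explicit that determinism identifies $\lit{solo}(p^*, D_i)$ with $\lit{solo}(p^*, D_{i+1})$. The argument is entirely model-independent: it uses only obstruction-freedom, agreement, and determinism, and so applies uniformly regardless of which instructions the memory locations support.
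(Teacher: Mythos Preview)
Your proof is correct and takes a genuinely different route from the paper's. The paper argues via a minimality trick: it finds a smallest subset $\mathcal{U}' \subseteq \mathcal{U}$ (of size $k \geq 2$) that is still bivalent from some reachable configuration, peels off one process $p$ so that the remaining $k-1$ processes become univalent, and then does a case analysis along a $\mathcal{U}'$-only execution to locate a configuration where $p$ and some $q \in \mathcal{U}' - \{p\}$ have opposite solo decisions. Your argument instead defines the solo-valency set $V(D)$ for the whole of $\mathcal{U}$ and walks along $\alpha_1$, exploiting that a single step by $p^*$ cannot change $\lit{solo}(p^*,\cdot)$; this forces $V$ to equal $\{0,1\}$ somewhere. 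Your approach is more elementary---no minimal-subset extraction, no case split---and makes the model-independence transparent. The paper's version is slightly more constructive (it names $C'$ and $q_0,q_1$ directly rather than via contradiction), but for how the lemma is used downstream that buys nothing. One small expository point: the appeal to agreement when concluding $V(C) \in \{\{0\},\{1\}\}$ is unnecessary, since the standing contradiction hypothesis already gives $V(C) \neq \{0,1\}$ and $V(C)$ is nonempty by definition.
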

\begin{proof}
Let $\mathcal{D}$ be the set of all configurations 
  from which $\mathcal{U}$ is bivalent and which are reachable 
  from $C$ by a $\mathcal{U}$-only execution.
Let $k$ be the smallest integer such that there exist a configuration 
  $C'' \in \mathcal{D}$ and a set $\mathcal{U}' \subseteq \mathcal{U}$ of $k$ processes 
that is bivalent from $C''$.
Pick any such $C'' \in \mathcal{D}$ and let 
$\mathcal{U}' \subseteq \mathcal{U}$ be a set of $k$ processes that is bivalent from $C''$.
Since each process $p$ has only one terminating solo execution from $C''$
and it decides only one value in this execution, it follows that $k \geq 2$.

Consider a process $p \in \mathcal{U}'$ and 
  let $\mathcal{U}'' = \mathcal{U}' - \{p\}$ be 
  the set of remaining processes in $\mathcal{U}'$.
Since $|\mathcal{U}''| = k-1$, there exists $v \in \{0,1\}$ such that
  $\mathcal{U}''$ can only decide $v$ from $C''$.
  Let $q \in \mathcal{U}''$. Then $q$ decides $v$ from $C''$.
If $p$ decides $\bar{v}$ from $C''$, then $p$ and $q$ satisfy the claim
for $C' = C''$.
So, suppose that $p$ decides $v$ from $C''$.

Since $\mathcal{U}'$ is bivalent from $C''$, 
  there is a $\mathcal{U}'$-only execution 
  $\alpha$ from $C''$ that decides $\bar{v}$.
Let $\alpha'$ be the longest prefix of $\alpha$ such that both 
  $p$ and $\mathcal{U}''$ can only decide $v$ from $C''\alpha'$. 
Note that $\alpha' \neq \alpha$, because $\bar{v}$ is decided in $\alpha$.
Let $\delta$ be the next step in $\alpha$ after $\alpha'$.
Then either $p$ or $\mathcal{U}''$ can decide $\bar{v}$ 
  from $C''\alpha'\delta$.

First, suppose that $\delta$ is a step by a process in $\mathcal{U}''$.
Since $\mathcal{U}''$ can only decide $v$ from $C''\alpha'$, 
$\mathcal{U}''$ can only decide $v$ from $C''\alpha'\delta$.
Therefore, $p$ decides $\bar{v}$ from $C''\alpha'\delta$. 
Since  $q \in \mathcal{U}''$  decides $v$ from $C''\alpha'\delta$,
$p$ and $q$ satisfy the claim
for $C' = C''\alpha'\delta$.

Finally, suppose that $\delta$ is a step by $p$.
Since $p$ decides $v$ from $C''\alpha'$,
$p$ decides $v$ from $C''\alpha'\delta$.
Therefore, $\mathcal{U}''$ can decide $\bar{v}$ from $C''\alpha'\delta$.
However, $|\mathcal{U}''| = k-1$. 
By definition of $k$, 
$\mathcal{U}''$ is not bivalent from $C''\alpha'\delta$.
Therefore $\mathcal{U}''$ can only decide $\bar{v}$ from $C'\alpha'\delta$.
Since  $q \in \mathcal{U}''$ decides $\bar{v}$ from $C''\alpha'\delta$,
$p$ and $q$ satisfy the claim
for $C' = C''\alpha'\delta$.
\end{proof}

Similar to the induction used by Zhu~\cite{Zhu16},
from a configuration that is at most $\ell$-covered 
  by a set of processes $\mathcal{R}$,
we show how to reach another configuration that 
is at most $\ell$-covered 
  by $\mathcal{R}$ and in which another process 
  $z \not \in \mathcal{R}$ covers a location that is 
  not $\ell$-covered by $\mathcal{R}$. 
  
\begin{ignore}
 allows the processes 
  to reach configuration that is at most $\ell$-covered 
  by a set of processes $\mathcal{R}$, while another process 
  $z \not \in \mathcal{R}$ covers a location that is 
  not $\ell$-covered by $\mathcal{R}$. 
This implies that the configuration is also at most $\ell$-covered 
  by $\mathcal{R} \cup \{z\}$, allowing the inductive step to go through.
\end{ignore}
  
\begin{lemma}
\label{lem:induct}
Let $C$ be a configuration and let $\mathcal{P}$ be a set of 
  $n \geq 2$ processes.
If $\mathcal{P}$ is bivalent from $C$, then there is a 
  $\mathcal{P}$-only execution $\alpha$ starting from $C$ and a set $\mathcal{Q} \subseteq \mathcal{P}$
of two processes  such that $\mathcal{Q}$ 
  is bivalent from $C\alpha$ and $C\alpha$ is at most $\ell$-covered 
  by the remaining processes $\mathcal{P} - \mathcal{Q}$.
\end{lemma}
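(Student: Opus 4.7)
The plan is to prove the statement by induction on $k \in \{0, 1, \ldots, n-2\}$, maintaining the invariant that there exists a $\mathcal{P}$-only execution $\alpha_k$ from $C$ and a set $\mathcal{R}_k \subseteq \mathcal{P}$ of size $k$ such that $C\alpha_k$ is at most $\ell$-covered by $\mathcal{R}_k$, and such that $\mathcal{P} - \mathcal{R}_k$ is bivalent from $C\alpha_k$. Setting $k = n-2$ gives $\mathcal{Q} := \mathcal{P} - \mathcal{R}_{n-2}$ of size $2$ and yields the lemma. The base case $k = 0$ is immediate: take $\alpha_0$ to be empty and $\mathcal{R}_0 = \emptyset$, so that $\mathcal{P} - \mathcal{R}_0 = \mathcal{P}$ is bivalent from $C$ by hypothesis and the empty set trivially at most $\ell$-covers.

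For the inductive step from $k$ to $k+1$ (with $k < n-2$), let $\mathcal{Q}_k := \mathcal{P} - \mathcal{R}_k$, which has size at least $3$ and is bivalent from $C\alpha_k$ by the induction hypothesis. Let $L$ be the set of locations that are $\ell$-covered by $\mathcal{R}_k$ in $C\alpha_k$, and let $\beta$ be the block write from the $\ell|L|$ processes in $\mathcal{R}_k$ covering $L$. Apply \lemmaref{lem:cover} (with $C = C\alpha_k$, $\mathcal{R} = \mathcal{R}_k$, $\mathcal{Q} = \mathcal{Q}_k$) to obtain a $\mathcal{Q}_k$-only execution $\xi$ from $C\alpha_k$ such that $\mathcal{P} = \mathcal{R}_k \cup \mathcal{Q}_k$ is bivalent from $C\alpha_k \xi \beta$ and, in $C\alpha_k\xi$, some process $z \in \mathcal{Q}_k$ covers a location not in $L$. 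Set $\alpha_{k+1} := \alpha_k \xi$ and $\mathcal{R}_{k+1} := \mathcal{R}_k \cup \{z\}$. The at-most-$\ell$-covered property is easy to verify: processes in $\mathcal{R}_k$ did not move during $\xi$ and thus continue to cover the same locations with the same multiplicities, while $z$ covers a location that was not in $L$, so adding $z$ increases its multiplicity to at most $\ell$.

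The main obstacle is establishing the bivalence clause of the invariant, namely that $\mathcal{P} - \mathcal{R}_{k+1} = \mathcal{Q}_k - \{z\}$ is bivalent from $C\alpha_k \xi$. The conclusion of \lemmaref{lem:cover} only gives bivalence of the full set $\mathcal{P}$ from $C\alpha_k \xi \beta$, which is weaker on two counts: it includes processes in $\mathcal{R}_k$ (which are not permitted to help in our target bivalence), and it evaluates bivalence after the additional block write $\beta$. I would address this by strengthening the choice of $\xi$: not only choose it to satisfy \lemmaref{lem:cover}, but further extend it via a FLP/valency-style argument as in \lemmaref{lem:twosolo}, applied to $\mathcal{Q}_k$-only executions, until $\mathcal{Q}_k - \{z\}$ itself becomes bivalent while $z$ still covers a location outside $L$. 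Obstruction freedom ensures that each solo extension by a process in $\mathcal{Q}_k - \{z\}$ eventually decides some value, and a careful tracking of univalency boundaries (analogous to the proof of \lemmaref{lem:twosolo}, but confined to $\mathcal{Q}_k$-only executions that do not disturb $z$'s covering location) should yield a configuration witnessing the required bivalence.

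The delicate part is guaranteeing that the FLP-style extension does not cause $z$ to stop covering a location not in $L$; this is handled by choosing the process to extend carefully and, if no such extension preserves $z$'s role, by re-applying \lemmaref{lem:cover} to select a different witness. In this way, the inductive step maintains both clauses of the invariant simultaneously, and after $n-2$ iterations we obtain the desired configuration with a bivalent pair $\mathcal{Q}$ and a covering set $\mathcal{P} - \mathcal{Q}$ that is at most $\ell$-covered.
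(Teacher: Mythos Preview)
Your induction on $k$ (the size of the covering set) has a genuine gap at the inductive step, and the fix you sketch does not close it. After applying \lemmaref{lem:cover} you obtain that $\mathcal{R}_k \cup \mathcal{Q}_k = \mathcal{P}$ is bivalent from $C\alpha_k\xi\beta$, and that some $z \in \mathcal{Q}_k$ covers a location outside $L$ in $C\alpha_k\xi$. You set $\alpha_{k+1} = \alpha_k\xi$; but at $C\alpha_k\xi$ you do not know that $\mathcal{Q}_k - \{z\}$ is bivalent, and you do not even know that $\mathcal{Q}_k$ itself is bivalent there. From the proof of \lemmaref{lem:cover}, $\xi$ is a prefix of a $\mathcal{Q}_k$-only execution $\zeta$ deciding $\bar v$, so $\mathcal{Q}_k$ can decide $\bar v$ from $C\alpha_k\xi$; the only witness for $v$ is the process $p \in \mathcal{R}_k$ deciding $v$ from $C\alpha_k\xi\beta$, which uses both $p \notin \mathcal{Q}_k$ and the block write $\beta$. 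Taking $\alpha_{k+1} = \alpha_k\xi\beta$ instead is not an option either: the processes in $\mathcal{R}_k$ that participated in $\beta$ have taken their step and no longer cover anything known, so the at-most-$\ell$-covered property for $\mathcal{R}_{k+1}$ fails. Your proposed remedy, an FLP-style extension ``until $\mathcal{Q}_k - \{z\}$ itself becomes bivalent'', presupposes exactly the conclusion you lack; \lemmaref{lem:twosolo} consumes bivalence, it does not manufacture it for a smaller set. Re-applying \lemmaref{lem:cover} to pick a different $z$ does not help for the same reason.

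The paper's argument has a different shape that sidesteps this obstacle. It inducts on $|\mathcal{P}|$: fix one process $z$, apply the induction hypothesis to $\mathcal{P}-\{z\}$ \emph{repeatedly} (interleaved with \lemmaref{lem:cover}) to produce an infinite sequence of configurations $D_0, D_1, \ldots$, each with a bivalent pair $\mathcal{Q}_i \subseteq \mathcal{P}-\{z\}$ and with $\mathcal{R}_i = (\mathcal{P}-\{z\}) - \mathcal{Q}_i$ at most $\ell$-covering; moreover, the passage $D_i \to D_{i+1}$ contains a block write $\beta_i$ to the set $L_i$ of $\ell$-covered locations. By pigeonhole on the finitely many possible $L_i$'s, some $L_i = L_j$. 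One then inserts a solo prefix of $z$ just before $\beta_i$ (hidden from $\mathcal{P}-\{z\}$ by $\beta_i$), leaving $z$ poised on a location outside $L_i = L_j$; since $z$ is idle thereafter, at $D_j$ the set $\mathcal{R}_j \cup \{z\}$ is at most $\ell$-covered and $\mathcal{Q}_j$ remains bivalent. The crucial ingredients missing from your outline are this repeated use of the inductive hypothesis together with the pigeonhole argument that lets $z$ be slotted in without disturbing either bivalence or the covering structure.
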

\begin{proof}
By induction on $|\mathcal{P}|$. 
The base case is when $|\mathcal{P}| = 2$. Let $\mathcal{Q} = \mathcal{P}$ and
let $\alpha$ be the empty execution.
Since $\mathcal{P} - \mathcal{Q}= \emptyset$,  the claim holds.
 
Now let $|\mathcal{P}| > 2$ and suppose the claim holds for 
  $|\mathcal{P}| - 1$. 
By~\lemmaref{lem:twosolo}, there exist a $\mathcal{P}$-only 
  execution $\gamma$ starting from $C$ and a set $\mathcal{Q} \subset \mathcal{P}$ of two processes 
   that is bivalent from $D = C\gamma$.
Pick any process $z \in \mathcal{P} - \mathcal{Q}$. 
Then $\mathcal{P} - \{z\}$ is bivalent from $D$
  because $\mathcal{Q}$ is bivalent from $D$.

We construct a sequence of configurations $D_0, D_1, \ldots$
  reachable from $D$ such that, for all $i \geq 0$, 
  the following properties hold:
\begin{itemize}
\item[1.] there exists a set of two processes 
  $\mathcal{Q}_i \subseteq \mathcal{P} - \{z\}$ such that
  $\mathcal{Q}_i$ is bivalent from $D_i$,
\item[2.] $D_i$ is at most $\ell$-covered by the remaining processes 
  $\mathcal{R}_i = (\mathcal{P} - \{z\}) - \mathcal{Q}_i$, and
\item[3.] if $L_i$ is the set of locations that are $\ell$-covered by $\mathcal{R}_i$ in $D_i$,
then $D_{i+1}$ is reachable from $D_i$ by a 
  $(\mathcal{P}-\{z\})$-only execution $\alpha_i$
  which contains a block write $\beta_i$ to $L_i$ by $\ell \cdot |L_i|$ processes in $\mathcal{R}_i$.
\end{itemize}

By  the induction hypothesis applied to $D$ 
and $\mathcal{P}-\{z\}$, there is a $(\mathcal{P}-\{z\})$-only execution $\eta$ starting from $D$
  and a set $\mathcal{Q}_0 \subseteq (\mathcal{P}-\{z\})$ of  two processes such that 
$\mathcal{Q}_0$ is bivalent from $D_0 = D\eta$ and $D_0$ is at most $\ell$-covered by 
  $\mathcal{R}_0 = (\mathcal{P} - \{z\}) - \mathcal{Q}_0$.

Now suppose that $D_i$ is a configuration reachable from $D$ and $\mathcal{Q}_i$ and
  $\mathcal{R}_i$ are sets of processes that satisfy all three conditions.
  
By~\lemmaref{lem:cover} applied to configuration $D_i$,
  there is a $\mathcal{Q}_i$-only execution $\xi_i$ such that 
  $\mathcal{R}_i \cup \mathcal{Q}_i = \mathcal{P} - \{z\}$
  is bivalent from $D_i\xi_i\beta_i$, where $\beta_i$ is a block 
  write to $L_i$ by $\ell \cdot |L_i|$ processes in $\mathcal{R}_i$.
Applying the induction hypothesis to $D_i\xi_i\beta_i$ 
  and $\mathcal{P} - \{z\}$, we get a $(\mathcal{P} - \{z\})$-only 
  execution $\psi_i$ leading to a configuration 
  $D_{i+1} = D_i\xi_i\beta_i\psi_i$, in which there is 
  a set, $\mathcal{Q}_{i+1}$, of two processes  such that 
  $\mathcal{Q}_{i+1}$ is bivalent from $D_{i+1}$.
Additionally, $D_{i+1}$ is at most $\ell$-covered by the set of remaining 
  processes $\mathcal{R}_{i+1} = (\mathcal{P} - \{z\}) - \mathcal{Q}_{i+1}$.
Note that
  the execution $\alpha_i = \xi_i \beta_i \psi_i$ contains
  the block write $\beta_i$ to $L_i$ by $\ell \cdot |L_i|$ processes in $\mathcal{R}_i$.

Since there are only finitely many locations, 
  there exists $0 \leq i < j$ such that $L_i = L_j$.
Next, we  insert steps of $z$ that cannot be detected by any process in 
  $\mathcal{P} - \{z\}$. 
Consider any $\{z\}$-only execution $\zeta$ from $D_i\xi_i$ 
  that decides a value $v \in \{0,1\}$. 
If $\zeta$ does not contain any $\ellbw$ to locations outside $L_i$, then
$D_i\xi_i\zeta\beta_i$ is indistinguishable 
  from $D_i\xi_i\beta_i$ to processes in $\mathcal{P} - \{z\}$.
Since $D_i\xi_i\beta_i$ is bivalent for $\mathcal{P} - \{z\}$,
there exists a  $\mathcal{P} - \{z\}$ execution from $D_i\xi_i\beta_i$
and, hence, from $D_i\xi_i\zeta\beta_i$
that decides  $\bar{v}$, contradicting agreement.
Thus $\zeta$  contains an $\ellbw$ to a location outside $L_i$.
Let $\zeta'$ be the longest prefix of $\zeta$ that does not contain
an  $\ellbw$ to a location outside $L_i$.
  Then,  in $D_i\xi_i\zeta'$, $z$ is poised to perform 
  an $\ellbw$ to a location outside $L_i = L_j$.

$D_i\xi_i\zeta'\beta_i$ is indistinguishable 
  from $D_i\xi_i\beta_i$ to $\mathcal{P} - \{z\}$, 
  so the $(\mathcal{P} - \{z\})$-only execution 
  $\psi_i \alpha_{i+1} \cdots \alpha_{j-1}$ can be applied from
$D_i\xi_i\zeta'\beta_i$. 
Let $\alpha = \gamma\eta\alpha_0 \cdots \alpha_{i-1} \xi_i\zeta'\beta_i\psi_i \alpha_{i+1} \cdots \alpha_{j-1}$. 
Every process in $\mathcal{P} - \{z\}$ is in the same state 
  in $C\alpha$ as it is in $D_j$. 
In particular, $\mathcal{Q}_j \subseteq \mathcal{P} - \{z\}$ 
  is bivalent from $D_j$ and, hence, from $C\alpha$. 
Every location is at most $\ell$-covered 
  by $\mathcal{R}_j = (P - \{z\}) - \mathcal{Q}_j$
  in $D_j$ and, hence, in $C\alpha$. 
Moreover, since $z$ takes no steps after $D_i\xi_i\zeta'$, 
  $z$ covers a location not  in $L_j$ in configurations 
$D_j$ and $C\alpha$. 
Therefore, every location is at most $\ell$-covered by 
  $\mathcal{R}_j \cup \{z\} = \mathcal{P} - \mathcal{Q}_j$ in $C\alpha$.
\end{proof}
Finally, we can prove the main theorem.
\begin{theorem}
\label{thm:nocons}
Consider a memory consisting of $\ell$-buffers.
Then any
obstruction-free binary consensus algorithm for $n$ processes
  uses at least $\lceil (n-1)/\ell \rceil$ locations.
\end{theorem}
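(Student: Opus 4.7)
The plan is to combine the three preceding lemmas (\lemmaref{lem:initbi}, \lemmaref{lem:induct}, and \lemmaref{lem:cover}) with a final pigeonhole count. The whole point of the extra covering process that \lemmaref{lem:cover} produces is exactly to push the count from $\lceil (n-2)/\ell \rceil$ up to $\lceil (n-1)/\ell \rceil$.

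First, by \lemmaref{lem:initbi}, there is an initial configuration $I$ from which the set $\mathcal{P}$ of all $n$ processes is bivalent. Apply \lemmaref{lem:induct} to $I$ and $\mathcal{P}$ to obtain a $\mathcal{P}$-only execution $\alpha$ from $I$ and a two-element set $\mathcal{Q} = \{p, q\} \subseteq \mathcal{P}$ such that $\mathcal{Q}$ is bivalent from $C = I\alpha$ and $C$ is at most $\ell$-covered by $\mathcal{R} = \mathcal{P} - \mathcal{Q}$, a set of $n - 2$ processes, each covering some location.

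Next, apply \lemmaref{lem:cover} with configuration $C$, bivalent set $\mathcal{Q}$, covering set $\mathcal{R}$, and $L$ equal to the set of locations that are $\ell$-covered by $\mathcal{R}$ in $C$. This produces a $\mathcal{Q}$-only execution $\xi$ from $C$ such that some process of $\mathcal{Q}$, say $p$, covers a location $r \notin L$ in configuration $C\xi$. Because $\xi$ is $\mathcal{Q}$-only, no process in $\mathcal{R}$ takes a step during $\xi$, so every process of $\mathcal{R}$ still covers the same location in $C\xi$ that it covered in $C$, and every location is covered by at most $\ell$ processes of $\mathcal{R}$ in $C\xi$, just as in $C$. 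Since $r \notin L$, $r$ is covered by at most $\ell - 1$ processes of $\mathcal{R}$ in $C\xi$, so adding $p$ yields at most $\ell$ coverers from $\mathcal{R} \cup \{p\}$ at $r$, while every other location is still covered by at most $\ell$ processes of $\mathcal{R} \cup \{p\}$ (since $p$ only covers $r$). Thus $C\xi$ is at most $\ell$-covered by the $n-1$ processes in $\mathcal{R} \cup \{p\}$.

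Finally, the pigeonhole bound: if the memory has $m$ locations, then at most $\ell \cdot m$ processes can cover it (counting with multiplicity bounded by $\ell$ per location), so $n - 1 \le \ell \cdot m$, giving $m \ge \lceil (n-1)/\ell \rceil$. I do not expect a significant obstacle: the two inductive lemmas already do the heavy lifting, and the only subtlety is the bookkeeping argument in the previous paragraph—that the newly added covering process $p$ must go to a location not already $\ell$-saturated by $\mathcal{R}$—which is precisely the content of \lemmaref{lem:cover}.
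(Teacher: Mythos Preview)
Your proof is correct and follows essentially the same approach as the paper: invoke \lemmaref{lem:initbi} and \lemmaref{lem:induct} to reach a configuration at most $\ell$-covered by the $n-2$ processes in $\mathcal{R}$, then use \lemmaref{lem:cover} to add one more coverer from $\mathcal{Q}$ at a non-saturated location, and finish with pigeonhole. The only cosmetic difference is that the paper first pigeonholes on the $n-2$ processes in $\mathcal{R}$ to get $\lceil (n-2)/\ell\rceil$ locations and invokes \lemmaref{lem:cover} only in the boundary case $\ell \mid (n-2)$, whereas you apply \lemmaref{lem:cover} unconditionally to obtain $n-1$ coverers before pigeonholing; both arrive at the same bound.
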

\begin{proof}
Consider any obstruction-free binary consensus algorithm for $n$ processes.
By~\lemmaref{lem:initbi}, there exists
an initial configuration from which the set of all $n$ processes,
  $\mathcal{P}$, is bivalent. 
\lemmaref{lem:induct} implies that there is a configuration, $C$, 
  reachable from this initial configuration 
  and a set $\mathcal{Q} \subseteq \mathcal{P}$, of two processes  such
that $\mathcal{Q}$ is bivalent from $C$ and
$C$ is at most $\ell$-covered by the remaining processes
  $\mathcal{R} = \mathcal{P} - \mathcal{Q}$.
By the pigeonhole principle, $\mathcal{R}$ covers at least 
$\lceil (n-2)/\ell\rceil \geq \lceil (n-1)/\ell\rceil - 1$ different locations. 

Suppose that  $\mathcal{R}$ covers exactly $\lceil (n-2)/\ell\rceil$ different locations
and $\lceil (n-2)/\ell \rceil < \lceil (n-1)/\ell\rceil$.
Then $n-2$ is a multiple of $\ell$ and every location covered
by $\mathcal{R}$ is, in fact, $\ell$-covered by $\mathcal{R}$.
Since $\mathcal{Q}$ is bivalent from $C$, \lemmaref{lem:cover} implies that there is a $\mathcal{Q}$-only
execution $\xi$ such that some process in $\mathcal{Q}$ covers a location 
that is not covered by $\mathcal{R}$.
Hence, there are at least $\lceil (n-2)/\ell\rceil + 1 = \lceil (n-1)/\ell \rceil$  locations.
\end{proof}

The lower bound in \theoremref{thm:nocons} can 
  be extended to a {\em heterogeneous setting}, where the capacities of 
  different memory locations are not necessarily the same.
To do so,
we extend the definition of
a configuration $C$ being at most $\ell$-covered by a set of processes $\mathcal{P}$.
Instead, we require that the number of processes in $\mathcal{P}$ covering each location is at most its capacity.
Then we consider block writes to a set of locations containing $\ell$ different
  $\ellbw$ operations to each $\ell$-buffer in the set.
The general result is that, 
  for any algorithm which solves consensus for $n$ processes and
  satisfies nondeterministic solo termination,
  the sum of capacities of all buffers must be at least $n-1$.

The lower bound also applies to systems in which the return value 
  of every non-trivial instruction on a memory location does not depend 
  on the value of that location and the return value of any trivial instruction 
  is a function of the sequence of the preceding $\ell$ 
  non-trivial instructions performed on the location.
This is because such instructions can be implemented by $\id{\ell-buffer-read}$ and 
  $\id{\ell-buffer-write}$ instructions.
We record each invocation of a non-trivial instruction using $\id{\ell-buffer-write}$.
The return values of these instructions can be determined locally.
To implement a trivial instruction, we perform $\id{\ell-buffer-read}$, 
  which returns a sequence containing the description of the last $\ell$ 
  non-trivial instructions performed on the location. This is
  sufficient to determine the correct return value.
\section{Multiple Assignment}
\label{sec:transact}
With $m$-register multiple assignment, 
  we can atomically write to $m$ locations.
This instruction plays an important role in the Consensus Hierarchy~\cite{Her91}, 
  as $m$-register multiple assignment can used to solve wait-free 
  consensus for $2m-2$ processes, but not for $2m-1$ processes.

In this section, we explore whether multiple assignment could improve the
  space complexity of solving obstruction-free consensus.
A practical motivation for this question is that obstruction-free multiple
  assignment can be easily implemented using a simple transaction.

We prove a lower bound that is similar to the lower bound 
  in~\sectionref{buf-lower}.
Suppose $\id{\ell-buffer-read}()$ and $\id{\ell-buffer-write}(x)$ 
  instructions are supported on every memory location in a system and, 
  for any subset of locations, a process can atomically perform 
  one $\ellbw$ instruction per location.
Then $\lceil n/2 \ell \rceil$ locations are necessary for $n$ processes 
  to solve binary consensus.
As in~\sectionref{buf-lower}, this result can be further generalized to 
  a heterogeneous setting and different sets of instructions.

The main technical difficulty is proving an analogue
  of~\lemmaref{lem:cover}.
In the absence of multiple assignment, if $\beta$ is a block write
  to a set of $\ell$-covered locations, $L$, and $\delta$ is an $\ellbw$ 
  to a location not in $L$, then $\beta$ and $\delta$ commute 
  (in the sense that the configurations resulting from performing $\beta\delta$ and $\delta\beta$ are indistinguishable 
  to all processes).
However, a multiple assignment $\delta$ can atomically perform
  $\ellbw$ to many locations, including locations in $L$. 
Thus, it may be possible for processes to distinguish between 
  $\beta\delta$ and $\delta\beta$.
Using a careful combinatorial argument, we construct
  two blocks of multiple assignments, $\beta_1$ and $\beta_2$, such that, 
in each block,
  $\ellbw$ is performed at least $\ell$ times on each location in $L$
  and is not performed on any location outside of $L$. 
Given this, we can show that $\beta_1\delta\beta_2$ 
  and $\delta\beta_1\beta_2$ are indistinguishable to all processes. 
This is enough to prove an analogue of~\lemmaref{lem:cover}.

First, we define a notion of covering for this setting. 
In configuration $C$, process $p$ \emph{covers} location $r$ 
if $p$ is poised to perform a multiple assignment that includes an $\ellbw$ to $r$.
The next definition is key to our proof.
Suppose that, in some configuration $C$, each process in $\mathcal{P}$ is poised to perform a multiple assignment.
A $k$-\emph{packing} of $\mathcal{P}$  in $C$ 
  is a function $\pi$ mapping each process 
  in $\mathcal{P}$ to some memory location it covers such that 
  no location $r$ has more than $k$ processes mapped to it 
  (i.e., $|\pi^{-1}(r)| \leq k$).
When $\pi(p) = r$ we say that $\pi$ \emph{packs} $p$ in location $r$.
A $k$-packing may not always exist or there may be many $k$-packings, depending 
  on the configuration, the set of processes, and the value of $k$.
A location $r$ is \emph{fully $k$-packed} by $\mathcal{P}$ 
  in configuration $C$, if there is a $k$-packing of $\mathcal{P}$ in $C$ 
  and all $k$-packings  of $\mathcal{P}$ in $C$ pack exactly $k$ processes in $r$.

Suppose that, in some configuration, there are two $k$-packings 
  of the same set of processes, but the first packs more processes 
  in some location $r$ than the second.
We show there is a location $r'$ in which the first packing packs 
  fewer processes than the second and there is a $k$-packing which,
  as compared to the first packing, packs 
  one less process in location $r$, one more process in location $r'$,
  and the same number of processes in all other locations.
The proof relies on existence of a certain Eulerian path in a 
  multigraph that we build to represent these two $k$-packings.
  
\begin{lemma}
\label{lem:euler}
Suppose $g$ and $h$ are two $k$-packings of the same set of processes 
  $\mathcal{P}$ in some configuration $C$ and $r_1$ is a location 
  such that $|g^{-1}(r_1)| > |h^{-1}(r_1)|$
(i.e., $g$ packs more processes in $r_1$ than $h$ does).
Then, there exists a sequence of locations, $r_1, r_2, \ldots, r_t$, and 
  a sequence of distinct processes, $p_1, p_2, \ldots, p_{t-1}$, such that
$|h^{-1}(r_t)| > |g^{-1}(r_t)|$
(i.e., $h$ packs more processes in $r_t$ than $g$),
and  $g(p_i) = r_i$ and $h(p_i) = r_{i+1}$ for $1 \leq i \leq t-1$.
Moreover, for $1 \leq j <t$, there exists a $k$-packing $g'$ such that
$g'$ packs one less process than $g$ in $r_j$, $g'$ packs one more process than $g$ in $r_t$, 
$g'$ packs the same number of processes as $g$ in all other locations,
and $g'(q) = g(q)$ for all $q \not\in \{ p_j, \ldots, p_{t-1}\}$.
\end{lemma}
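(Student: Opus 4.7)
My plan is to encode the two packings $g$ and $h$ as a directed multigraph $G$ whose vertices are the memory locations and whose edges are the processes in $\mathcal{P}$. Specifically, for each process $p \in \mathcal{P}$ with $g(p) \neq h(p)$, I would place a directed edge from $g(p)$ to $h(p)$ labelled by $p$; processes with $g(p) = h(p)$ contribute no edges. The key observation is that such ignored processes contribute equally to $g^{-1}(r)$ and to $h^{-1}(r)$ at $r = g(p)$, so for every location $r$,
\[
\mathrm{out\mbox{-}deg}(r) - \mathrm{in\mbox{-}deg}(r) = |g^{-1}(r)| - |h^{-1}(r)|.
\]
Under this translation, $r_1$ has strictly positive ``excess'' out-degree, and the goal becomes finding a vertex $r_t$ reachable from $r_1$ with strictly positive excess in-degree.

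The next step is to produce the path. Let $R$ be the set of vertices reachable from $r_1$ in $G$. Every out-edge from $R$ terminates inside $R$, while in-edges to $R$ may also come from $V \setminus R$, so summing the degree identity over $R$ yields $\sum_{r \in R}(|g^{-1}(r)| - |h^{-1}(r)|) \leq 0$. Since $r_1 \in R$ contributes strictly positively to this sum, some $r_t \in R$ must satisfy $|h^{-1}(r_t)| > |g^{-1}(r_t)|$, as required. I would then take a \emph{shortest} path in $G$ from $r_1$ to $r_t$: this gives distinct locations $r_1, r_2, \ldots, r_t$, and because successive edges in a shortest path join distinct ordered pairs of vertices, the corresponding edge labels $p_1, \ldots, p_{t-1}$ are distinct processes satisfying $g(p_i) = r_i$ and $h(p_i) = r_{i+1}$ for $1 \leq i \leq t-1$.

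For the ``moreover'' clause, given $1 \leq j < t$ I would construct $g'$ by rerouting a tail of the path. Define $g'(p_i) = h(p_i) = r_{i+1}$ for $j \leq i \leq t-1$ and $g'(q) = g(q)$ for every other process $q$. Each $p_i$ indeed covers $r_{i+1}$ because $h$ is a valid packing, so $g'$ is well-defined. A direct count, using that $r_1, \ldots, r_t$ are distinct, shows that location $r_j$ loses exactly one process ($p_j$ departs, nobody arrives), intermediate locations $r_{j+1}, \ldots, r_{t-1}$ break even (each $p_i$ leaves and $p_{i-1}$ arrives), $r_t$ gains exactly one process ($p_{t-1}$), and all other locations are untouched. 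The capacity constraint is preserved at $r_t$ because $|g^{-1}(r_t)| + 1 \leq |h^{-1}(r_t)| \leq k$, so $g'$ is a valid $k$-packing.

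The main obstacle I anticipate is setting up the multigraph encoding so that the degree identity cleanly tracks $|g^{-1}(r)| - |h^{-1}(r)|$, and then arguing that the path can be chosen to produce \emph{distinct} processes (hence the appeal to a shortest path rather than an arbitrary walk). The authors hint at an Eulerian-path argument, and one could equivalently extract a simple trail from $r_1$ to $r_t$ in an augmented graph; I expect the plain reachability-plus-shortest-path route above to suffice, with the Eulerian framing offering a tidier way to package the same combinatorial content.
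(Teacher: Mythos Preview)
Your argument is correct and reaches the same conclusion as the paper, but by a genuinely different route. The paper puts an edge from $g(p)$ to $h(p)$ for \emph{every} $p\in\mathcal{P}$ (self-loops included) and then follows a \emph{maximal trail} from $r_1$: maximality forces the terminal vertex $r_t$ to have all out-edges used, whence in-degree exceeds out-degree there. The sequence $r_1,\ldots,r_t$ is the node sequence of this trail, so locations may repeat while the edge labels $p_1,\ldots,p_{t-1}$ are distinct. Your approach instead first \emph{locates} a good $r_t$ by a reachability-and-counting argument (the degree identity summed over the reachable set $R$), and only then extracts a \emph{shortest} path to it, which gives distinct locations for free. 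Both constructions feed into exactly the same rerouting $g'(p_i)=r_{i+1}$ for $i\ge j$. What your version buys is simplicity: you never invoke trails, and the distinct-locations property makes the ``one less at $r_j$, one more at $r_t$, same elsewhere'' count immediate. What the paper's version buys is directness: $r_t$ is discovered as the point where the greedy walk gets stuck, with no separate existence argument needed. Either is a complete proof of the lemma.
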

\begin{proof}
Consider a  multigraph with one node for each memory location 
  in the system and one directed edge from node $g(p)$ 
  to node $h(p)$ labelled by $p$, for each process $p \in \mathcal{P}$.
The in-degree of any node $v$ is $|h^{-1}(v)|$, which 
  is the number of processes that are packed into memory location $v$ 
  by $h$, and the out-degree of node $v$ is $|g^{-1}(v)|$, which 
  is the number of processes that are packed in $v$ by $g$.

Now, consider any maximal Eulerian path in this multigraph starting 
  from the node $r_1$.
This path consists of a sequence of distinct edges, but may visit the same node multiple times.
Let $r_1,\ldots,r_t$ be the sequence of nodes visited 
  and let $p_i$ be the labels of the 
  traversed edges, in order.
Then $g(p_i) = r_i$ and $h(p_i) = r_{i+1}$ for $1 \leq i \leq t-1$.
The edges in the path are all different and each is labelled by a different process,
  so the path has length at most $|\mathcal{P}|$.
By maximality, the last node in the sequence must have 
  more incoming edges than outgoing edges, so  $|h^{-1}(r_t)| > |g^{-1}(r_t)|$.

Let $1 \leq j < t$.
We construct $g'$ from $g$ by re-packing each process $p_i$ from $r_i$ 
  to $r_{i+1}$ for all $j \leq i < t$.
Then $g'(p_i) = r_{i+1}$ for $j \leq i < t$ and $g'(p) = g(p)$ 
  for all other processes $p$.
Notice that $p_i$ covers $r_{i+1}$, since $h(p_i) = r_{i+1}$ and 
  $h$ is a $k$-packing.
As compared to  $g$, $g'$ packs one less process in $r_j$, 
  one more process in $r_t$, and the same number of processes 
  in every other location.
Since $h$ is a $k$-packing, it packs at most $k$ processes in $r_t$.
Because $g$ is a $k$-packing that packs less processes in $r_t$ than $h$,
  $g'$ is also a $k$-packing.
\end{proof}

\begin{ignore}
\begin{corollary}
\label{cor:repack}
Let the $k$-packings $g$ and $h$ and the sequences $r_i$ and $p_i$ 
  be defined as in~\lemmaref{lem:euler}.
For $1 \leq j <t$, there exists a $k$-packing $g'$, such that $g'$ packs 
  one less process than $g$ in $r_j$, one more process than $g$ in $r_t$, 
  and the same number of processes as $g$ in all other locations. 
\end{corollary}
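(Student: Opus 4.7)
The plan is essentially to invoke \lemmaref{lem:euler} directly, since its conclusion already constructs exactly the $k$-packing $g'$ demanded by the corollary; the corollary merely omits the extra information that $g'$ agrees with $g$ on every process outside $\{p_j,\ldots,p_{t-1}\}$. So the real work has already been done inside the lemma.

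If I were to re-derive the construction from scratch, I would define $g'$ by the repacking rule $g'(p_i) = r_{i+1}$ for each $j \leq i \leq t-1$, and $g'(q) = g(q)$ for every other process $q$. Three verifications are then needed: (i) each $p_i$ actually covers $r_{i+1}$, which holds because $h(p_i) = r_{i+1}$ and $h$ is itself a $k$-packing; (ii) the count at every intermediate location $r_{j+1},\ldots,r_{t-1}$ is unchanged, since exactly one process leaves and one arrives; and (iii) the count at $r_t$ grows by exactly one but stays within the bound $k$, because $|g^{-1}(r_t)| < |h^{-1}(r_t)| \leq k$.

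The main (and essentially only) obstacle is establishing the existence of the sequence $r_1, \ldots, r_t$ together with the matching distinct process labels $p_1, \ldots, p_{t-1}$; this is precisely the Eulerian-path argument already carried out inside \lemmaref{lem:euler}. With that in hand, the corollary is immediate, which is why I do not expect any new difficulty beyond bookkeeping.
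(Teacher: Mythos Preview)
Your proposal is correct and matches the paper's own argument essentially verbatim: define $g'(p_i)=r_{i+1}$ for $j\le i<t$ and $g'(q)=g(q)$ otherwise, then check that each $p_i$ covers $r_{i+1}$ (since $h(p_i)=r_{i+1}$), that interior counts are unchanged, and that the count at $r_t$ stays at most $k$ because $|g^{-1}(r_t)|<|h^{-1}(r_t)|\le k$. Your additional remark that the corollary is already subsumed by the ``moreover'' clause of \lemmaref{lem:euler} is also accurate.
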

\begin{proof}
We construct $g'$ from $g$ by re-packing each process $p_i$ from $r_i$ 
  to $r_{i+1}$ for all $j \leq i < t$.
Then $g'(p_i) = r_{i+1}$ for $j \leq i < t$ and $g'(p) = g(p)$ 
  for all other processes $p$.
Notice that $p_i$ covers $r_{i+1}$, since $h(p_i) = r_{i+1}$ and 
  $h$ is a $k$-packing.

As compared to  $g$, $g'$ packs one less process in $r_j$, 
  one more process in $r_t$, and the same number of processes 
  in every other location.
Since $h$ is a $k$-packing, it packs at most $k$ processes in $r_t$.
Because $g$ is a $k$-packing that packs less processes in $r_t$ than $h$,
  $g'$ is also a $k$-packing.
\end{proof}
\end{ignore}

Let $\mathcal{P}$ be a set of processes, each of which is poised 
  to perform a multiple assignment in some configuration $C$.
A \emph{block multi-assignment} by $\mathcal{P}$ from $C$ is an execution 
  starting at $C$, in which each process in $\mathcal{P}$ 
  takes exactly one step.

Consider some configuration $C$ and a set of processes $\mathcal{R}$
  such that there is a $2 \ell$-packing $\pi$ of $\mathcal{R}$ in $C$.
Let $L$ be the set of all locations that are fully $2 \ell$-packed 
  by $\mathcal{R}$ in $C$, so $\pi$ packs exactly $2 \ell$ processes 
  from $\mathcal{R}$ in each location $r \in L$.
Partition the $2 \ell \cdot |L|$ processes packed by $\pi$ in $L$ 
   into two sets,
  $\mathcal{R}^1$ and $\mathcal{R}^2$,
each containing 
  $\ell \cdot |L|$ processes, such that, for each location $r \in L$,
$\ell$ of the processes packed in $r$ by $\pi$ belong to $\mathcal{R}^1$
and the other $\ell$ belong to $\mathcal{R}^2$.
For $i \in \{1,2\}$, let $\beta_i$ be a block multi-assignment by $\mathcal{R}^i$. 
  
Notice that, for any location $r \in L$,  
  the outcome of any $\ellbr$ on $r$ after $\beta_i$ does not depend on 
  multiple assignments that occurred prior to  $\beta_i$.
Moreover, we can prove the following crucial property 
  about these block multi-assignments to fully packed locations.
\begin{lemma}
\label{lem:satnotout}
Neither $\beta_1$ nor $\beta_2$ involves an $\ellbw$ to a location 
  outside of $L$.
\end{lemma}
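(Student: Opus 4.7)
The plan is a proof by contradiction. Suppose some process $p \in \mathcal{R}^1 \cup \mathcal{R}^2$ performs an $\ellbw$ to a location $r' \notin L$ as part of its multiple assignment. Then $p$ covers $r'$, and since $p \in \mathcal{R}^1 \cup \mathcal{R}^2$ we also have $\pi(p) \in L$. My goal is to exhibit some $2\ell$-packing of $\mathcal{R}$ in $C$ that packs strictly fewer than $2\ell$ processes in some location of $L$, contradicting the fact that every location in $L$ is fully $2\ell$-packed.

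If $|\pi^{-1}(r')| < 2\ell$, this is immediate: defining $\pi'(p) = r'$ and $\pi'(q) = \pi(q)$ for $q \neq p$ gives a $2\ell$-packing with only $2\ell - 1$ processes at $\pi(p) \in L$. The interesting case is $|\pi^{-1}(r')| = 2\ell$, where $r'$ is at capacity under $\pi$ and cannot absorb $p$ directly. Here I would use the hypothesis $r' \notin L$ to pick a $2\ell$-packing $h$ with $|h^{-1}(r')| < 2\ell$ and invoke~\lemmaref{lem:euler} with $g = \pi$, this $h$, and $r_1 = r'$. This yields a sequence of locations $r' = r_1, r_2, \ldots, r_t$ and distinct processes $p_1, \ldots, p_{t-1}$ with $\pi(p_i) = r_i$ and $|h^{-1}(r_t)| > |\pi^{-1}(r_t)|$. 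The latter forces $|\pi^{-1}(r_t)| < 2\ell$, so $r_t \notin L$. The lemma also provides, for every $1 \leq j < t$, a $2\ell$-packing $g'$ that differs from $\pi$ only on $\{p_j, \ldots, p_{t-1}\}$, packing one fewer process at $r_j$ and one more at $r_t$.

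The argument concludes with a case split on whether $p$ appears in $\{p_1, \ldots, p_{t-1}\}$. If $p = p_i$ for some $i$, then $r_i = \pi(p_i) = \pi(p) \in L$, so invoking~\lemmaref{lem:euler} with $j = i$ already yields a $2\ell$-packing in which $r_i$ holds only $2\ell - 1$ processes, contradicting that $r_i$ is fully $2\ell$-packed. Otherwise $p \notin \{p_1, \ldots, p_{t-1}\}$, so invoking~\lemmaref{lem:euler} with $j = 1$ produces a $2\ell$-packing $g'$ with $g'(p) = \pi(p) \in L$ and $|g'^{-1}(r')| = 2\ell - 1$; defining $g''(p) = r'$ and $g''(q) = g'(q)$ otherwise then gives a $2\ell$-packing with only $2\ell - 1$ processes at $\pi(p) \in L$, which is again the desired contradiction. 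I expect the main conceptual obstacle to be spotting this particular use of the flexibility in choosing $j$ in~\lemmaref{lem:euler}: once one observes that locations in $L$ have matching in- and out-degree in the lemma's multigraph (since every $2\ell$-packing of $\mathcal{R}$ puts exactly $2\ell$ processes there), and so the Eulerian path from $r' \notin L$ must terminate at some $r_t \notin L$, the two subcases close cleanly.
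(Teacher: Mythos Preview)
Your proposal is correct and follows essentially the same argument as the paper: the contradiction setup, the easy case $|\pi^{-1}(r')| < 2\ell$, the invocation of \lemmaref{lem:euler} with $g=\pi$ and $r_1=r'$, and the two-case split on whether the offending process appears among $p_1,\ldots,p_{t-1}$ (using $j=i$ if it does, $j=1$ followed by a one-process re-pack if it does not) all match the paper's proof step for step, modulo variable names.
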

\begin{proof}
Assume the contrary. Let $q \in \mathcal{R}^1 \cup \mathcal{R}^2$ 
  be a process with $\pi(q) \in L$ such that, in $C$, $q$ also covers some 
  location $r_1 \not \in L$.
If $|\pi^{-1}(r_1)| < 2 \ell$, then there is another $2\ell$ packing of 
$\mathcal{R}$ in $C$, which is the same as $\pi$, except that it packs $q$ in location $r_1$
instead of $\pi(q)$. However, this packing packs fewer than $2 \ell$ processes 
in $\pi(q) \in L$, contradicting the definition of $L$.
Therefore $|\pi^{-1}(r_1)| = 2 \ell$,   i.e.,
$\pi$ packs exactly $2 \ell$ processes in $r_1$.

Since $L$ is the set of all fully $2 \ell$-packed locations,
  there exists a $2 \ell$-packing $h$, which packs strictly fewer than 
  $2 \ell$ processes in $r_1 \not \in L$.
From~\lemmaref{lem:euler} with $g = \pi$ and $k = 2\ell$,
there is a sequence of locations, $r_1, \ldots, r_t$, and 
  a sequence of processes, $p_1, \ldots, p_{t-1}$,
such that $|h^{-1}(r_t)| > |\pi^{-1}(r_t)|$.
Since $h$ is a $2\ell$-packing, it packs at most $2\ell$ processes in $r_t$
and, hence, $\pi$ packs strictly less than $2\ell$ processes in $r_t$.
Thus, $r_t \not\in L$.
We consider two cases.

First, suppose that $q \neq p_i$ for all $i = 1,\ldots,t-1$, i.e.,
$q$ does not occur in the sequence $p_1, \ldots, p_{t-1}$.
By the second part of \lemmaref{lem:euler} with $j = 1$,
there is a $2\ell$-packing $\pi'$ that packs less than $2 \ell$ 
processes in $r_1$, one more process than $\pi$ in $r_t$,
and the same number of processes as $\pi$ in all other locations.
In particular, $\pi'$ packs exactly $2\ell$ processes in each location in $L$, including $\pi(q)$.
Moreover, $\pi'(q) = \pi(q)$,
since $q$ does not occur in the sequence $p_1, \ldots, p_{t-1}$.
Consider another $2\ell$ packing of 
$\mathcal{R}$ in $C$, which is the same as $\pi'$, except that it packs $q$ in location $r_1$
instead of location $\pi(q)$. However, this packing packs fewer than $2 \ell$ processes 
in $\pi(q) \in L$, contradicting the definition of $L$.

Now, suppose that $q = p_s$, for some $s \in \{1,\ldots, t-1\}$.
Since $r_s = \pi(p_s) = \pi(q) \in L$, it follows that $|\pi^{-1}(r_s)| = 2\ell$.
By the second part of \lemmaref{lem:euler} with $j = s$,
there is a $2\ell$-packing that packs less than $2 \ell$ 
processes in $r_s$, one more process than $\pi$ in $r_t$,
and the same number of processes as $\pi$ in all other locations.
Since $r_s \in L$, this contradicts the definition of $L$.

Thus, in configuration $C$, every process in 
$\mathcal{R}^1 \cup \mathcal{R}^2$ only covers locations in $L$.
\end{proof}
We can now prove a lemma that replaces~\lemmaref{lem:cover} in the main argument.

\begin{lemma}
\label{lem:multcover}
Let $\mathcal{Q}$ be a set of processes disjoint from $\mathcal{R}$ that is bivalent from $C$.
Then there exists a $\mathcal{Q}$-only execution $\xi$ from
  $C$ such that $\mathcal{R} \cup \mathcal{Q}$ is bivalent from $C \xi \beta_1$
  and, in configuration $C \xi$, some process in $\mathcal{Q}$ covers
  a location not in $L$.
\end{lemma}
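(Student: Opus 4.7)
The plan is to adapt the proof of~\lemmaref{lem:cover}, inserting a second block multi-assignment $\beta_2$ as a ``shield'' that restores the commutativity lost when the next step of $\mathcal{Q}$ is an atomic multi-assignment touching both $L$ and its complement.

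First I would set up the contradiction skeleton exactly as in~\lemmaref{lem:cover}, but around $\beta_1\beta_2$ rather than $\beta_1$. By obstruction-freedom, pick $p \in \mathcal{R}$ whose solo execution $\sigma$ from $C\beta_1\beta_2$ decides some $v \in \{0,1\}$; since $\mathcal{Q}$ is bivalent from $C$, choose a $\mathcal{Q}$-only execution $\zeta$ from $C$ deciding $\bar{v}$. Let $\xi$ be the longest prefix of $\zeta$ such that $p$ still decides $v$ from $C\xi\beta_1\beta_2$ via $\sigma$, and let $\delta$ be the next step of some $q' \in \mathcal{Q}$ in $\zeta$ after $\xi$. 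We must show (i) $\delta$ includes an $\ellbw$ to a location outside $L$, and (ii) $\mathcal{R}\cup\mathcal{Q}$ is bivalent from $C\xi\beta_1$.

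For (i), suppose instead that $\delta$ is either an $\ellbr$ or a multi-assignment whose $\ellbw$s all target locations of $L$. By~\lemmaref{lem:satnotout}, $\beta_1$ and $\beta_2$ write only to $L$, and by construction $\beta_2$ performs at least $\ell$ $\ellbw$s to each $r \in L$, which bury any $\ellbw$ by $\delta$ to $r$. Hence the memory state after $C\xi\delta\beta_1\beta_2$ coincides with that after $C\xi\beta_1\beta_2$. Since multi-assignments return nothing, the local state of $p$ at the start of $\sigma$ is identical in both runs, so $p$ again decides $v$---contradicting the maximality of $\xi$. This also gives the second conclusion of the lemma: in $C\xi$, $q' \in \mathcal{Q}$ covers a location outside $L$.

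For (ii), to witness the $v$-valency of $C\xi\beta_1$, execute $\beta_2$ (the processes of $\mathcal{R}^2$ are still poised because $\xi$ is $\mathcal{Q}$-only and $\beta_1$ involves only $\mathcal{R}^1$) and then run $\sigma$; by definition of $\xi$, $p$ decides $v$. To witness $\bar{v}$-valency, run $\delta$ (by $q' \in \mathcal{Q}$), then $\beta_2$, then $\sigma$. The crucial observation is that $C\xi\beta_1\delta\beta_2$ and $C\xi\delta\beta_1\beta_2$ are indistinguishable to every process: for $r \in L$ the last $\ell$ writes are those of $\beta_2$ in either ordering, for $r \notin L$ only $\delta$ writes, and because every multi-assignment step is write-only the local states of $\mathcal{R}^1$, $\mathcal{R}^2$, and $q'$ do not depend on when within the interleaving their steps occur. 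Therefore $p$'s outcome under $\sigma$ from $C\xi\beta_1\delta\beta_2$ equals its outcome from $C\xi\delta\beta_1\beta_2$, which by maximality of $\xi$ cannot be $v$ and is therefore $\bar{v}$.

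The main obstacle is exactly this last indistinguishability: in~\lemmaref{lem:cover}, a single-location $\ellbw$ $\delta$ to a location outside $L$ commutes with the block write $\beta$ trivially, whereas here an atomic multi-assignment $\delta$ may simultaneously touch locations in $L$ and outside $L$, so $\delta$ and $\beta_1$ need not commute. Appending $\beta_2$ after $\delta$ re-saturates every $\ell$-buffer in $L$ and thereby erases any trace of $\delta$'s writes to $L$; this is precisely what makes $\beta_1\delta\beta_2$ and $\delta\beta_1\beta_2$ look identical to every process, and is the combinatorial reason that both $\beta_1$ and $\beta_2$---and hence a $2\ell$-packing rather than an $\ell$-packing---are needed.
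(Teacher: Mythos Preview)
Your proposal is correct and follows essentially the same approach as the paper: define $\xi$ as the longest prefix of a $\mathcal{Q}$-only $\bar v$-deciding execution such that $p$ can still decide $v$ after $\beta_1\beta_2$, argue the next step $\delta$ must write outside $L$, and then use the indistinguishability of $C\xi\beta_1\delta\beta_2$ and $C\xi\delta\beta_1\beta_2$ (via \lemmaref{lem:satnotout} for $r\notin L$ and the saturating $\beta_2$ for $r\in L$) to exhibit both valencies from $C\xi\beta_1$. One small remark: your use of the fixed execution ``$\sigma$'' is slightly imprecise, since $p$'s solo execution from $C\xi\beta_1\beta_2$ need not coincide step-for-step with its solo execution from $C\beta_1\beta_2$; the paper phrases this as ``$p$ can decide $v$,'' and your argument goes through cleanly once ``via $\sigma$'' is read as ``in $p$'s (deterministic) terminating solo execution.''
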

\begin{proof}
Suppose some process $p \in \mathcal{R}$ can decide some value $v \in \{0, 1\}$
  from configuration $C \beta_1 \beta_2$ and $\zeta$ is a $\mathcal{Q}$-only execution
  from $C$ in which $\bar{v}$ is decided.
Let $\xi$ be the longest prefix of $\zeta$ such that $p$ can decide
  $v$ from $C \zeta \beta_1 \beta_2$.
Let $\delta$ be the next step by $q \in \mathcal{Q}$ in $\zeta$ after $\xi$.

If $\delta$ is an $\ellbr$ or a  multiple assignment involving only $\ellbw$ operations
  to locations in $L$,
  then $C\xi\beta$ and $C\xi\delta\beta$ 
  are indistinguishable to $p$.
Since $p$ can decide $v$ from $C\xi \beta_1 \beta_2$, but $p$ can only 
  decide $\bar{v}$ from $C\xi \delta \beta_1 \beta_2$, $\delta$ must be 
  a multiple assignment that includes an $\ellbw$ to a location not in $L$.
Thus, in configuration $C\xi$, $q$ covers a location not in $L$.
For each location $r \in L$, the value of $r$ is the
same in $C\xi\delta\beta_1\beta_2$ as it is in 
  $C\xi\beta_1\delta\beta_2$ due to the block multi-assignment $\beta_2$.
By~\lemmaref{lem:satnotout},  for each location $r \not\in L$, 
neither $\beta_1$ nor $\beta_2$ performs an $\ellbw$ to $r$, so the
value of $r$ is the same in $C\xi\delta\beta_1\beta_2$ as it is in
$C\xi\beta_1\delta\beta_2$.
Since the state of process $p$ is the same in configuration $C\xi\beta_1\delta\beta_2$ and $C\xi\delta\beta_1\beta_2$,
these two configurations are  indistinguishable to $p$.

Therefore, by definition of $\xi$, $p$ can only decide $\bar{v}$ from $C\xi\beta_1\delta\beta_2$
and $p$ can decide $v$ from $C\xi\beta_1\beta_2$.
This implies that $\mathcal{R} \cup \mathcal{Q}$ is bivalent from $C \xi \beta_1$.
\end{proof}
Using these tools, we can prove the following analogue of~\lemmaref{lem:induct}:
\begin{lemma}\label{lem:inductmult}
Let $C$ be a configuration and let $\mathcal{P}$ be a set of 
  $n \geq 2$ processes.
If $\mathcal{P}$ is bivalent from $C$, then there is a $\mathcal{P}$-only
  execution $\alpha$ and a set 
  $\mathcal{Q} \subseteq \mathcal{P}$ of at mostwot two processes such that $\mathcal{Q}$ is bivalent 
  from $C\alpha$ and there exists a $2\ell$-packing $\pi$ of the remaining 
  processes $\mathcal{P} - \mathcal{Q}$ in $C\alpha$.
\end{lemma}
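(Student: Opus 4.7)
The plan is to mimic the induction used for Lemma~\ref{lem:induct}, substituting the multi-assignment analogues developed in this section (block multi-assignments $\beta^1,\beta^2$ by $\mathcal{R}^1,\mathcal{R}^2$ on the fully $2\ell$-packed set $L$, and Lemma~\ref{lem:multcover}) in place of single-location block writes and Lemma~\ref{lem:cover}. I induct on $|\mathcal{P}|$. For the base case $|\mathcal{P}|=2$, take $\mathcal{Q}=\mathcal{P}$ and $\alpha$ empty, noting that $\mathcal{P}-\mathcal{Q}=\emptyset$ has a trivial $2\ell$-packing. For the step, apply Lemma~\ref{lem:twosolo} to produce a $\mathcal{P}$-only execution $\gamma$ reaching a configuration $D=C\gamma$ in which some two-process subset $\mathcal{Q}\subset\mathcal{P}$ is bivalent, and pick any $z\in\mathcal{P}-\mathcal{Q}$; the superset $\mathcal{P}-\{z\}\supseteq\mathcal{Q}$ is then bivalent from $D$ as well.

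Next I build a sequence $D_0,D_1,\ldots$ reachable from $D$ by $(\mathcal{P}-\{z\})$-only executions such that, for every $i$, there is a two-element $\mathcal{Q}_i\subseteq\mathcal{P}-\{z\}$ bivalent from $D_i$ together with a $2\ell$-packing $\pi_i$ of $\mathcal{R}_i=(\mathcal{P}-\{z\})-\mathcal{Q}_i$ in $D_i$. I produce $D_0$ by applying the inductive hypothesis to $D$ and $\mathcal{P}-\{z\}$. To step from $D_i$ to $D_{i+1}$, let $L_i$ be the set of locations fully $2\ell$-packed by $\mathcal{R}_i$ in $D_i$, and, for each $r\in L_i$, split the $2\ell$ processes that $\pi_i$ packs into $r$ into two halves of size $\ell$, yielding $\mathcal{R}_i^1,\mathcal{R}_i^2$. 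Apply Lemma~\ref{lem:multcover} to obtain a $\mathcal{Q}_i$-only execution $\xi_i$ after which $\mathcal{P}-\{z\}$ is bivalent from $D_i\xi_i\beta_i^1$, then apply the inductive hypothesis to that configuration and $\mathcal{P}-\{z\}$ to obtain $\psi_i$ and $D_{i+1}=D_i\xi_i\beta_i^1\psi_i$, together with fresh $\mathcal{Q}_{i+1}, \mathcal{R}_{i+1}, \pi_{i+1}$.

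Since the number of memory locations is finite, the sets $L_i$ must eventually repeat: by pigeonhole there exist $i<j$ with $L_i=L_j$. The key observation enabling the insertion of steps of $z$ is that $\beta_i^1$ performs at least $\ell$ $\ellbw$ operations on each location in $L_i$ (the $\ell$ processes from $\mathcal{R}_i^1$ that $\pi_i$ packs there), so it saturates every $\ell$-buffer in $L_i$ and masks any earlier writes to those locations; moreover, by Lemma~\ref{lem:satnotout}, $\beta_i^1$ writes nowhere outside $L_i$. Consider any solo execution $\zeta$ of $z$ from $D_i\xi_i$ that decides a value $v$: if $\zeta$ contained no multi-assignment with an $\ellbw$ to a location outside $L_i$, then $D_i\xi_i\zeta\beta_i^1$ would be indistinguishable from $D_i\xi_i\beta_i^1$ to $\mathcal{P}-\{z\}$, and a $(\mathcal{P}-\{z\})$-only continuation deciding $\bar v$ (guaranteed by the bivalence of $D_i\xi_i\beta_i^1$) would violate agreement. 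Let $\zeta'$ be the longest prefix of $\zeta$ with no such step; then in $D_i\xi_i\zeta'$, $z$ is poised to perform a multi-assignment covering some $r\notin L_i=L_j$. Setting $\alpha=\gamma\eta\alpha_0\cdots\alpha_{i-1}\xi_i\zeta'\beta_i^1\psi_i\alpha_{i+1}\cdots\alpha_{j-1}$, every process in $\mathcal{P}-\{z\}$ ends in the same state in $C\alpha$ as in $D_j$, so $\mathcal{Q}_j$ remains bivalent from $C\alpha$.

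The final obstacle is producing a $2\ell$-packing of $\mathcal{P}-\mathcal{Q}_j=\mathcal{R}_j\cup\{z\}$ in $C\alpha$. Since $r\notin L_j$, the definition of $L_j$ guarantees the existence of some $2\ell$-packing $\pi_j'$ of $\mathcal{R}_j$ with $|(\pi_j')^{-1}(r)|<2\ell$; extending by $z\mapsto r$ yields a valid $2\ell$-packing of $\mathcal{R}_j\cup\{z\}$. I expect the most delicate bookkeeping to be in the partitioning of $\pi_i$'s $2\ell$ processes per fully packed location into $\mathcal{R}_i^1,\mathcal{R}_i^2$ and in verifying, via Lemma~\ref{lem:satnotout}, that multi-assignments in $\zeta'$ which may simultaneously touch several locations in $L_i$ are all fully masked by the saturating writes of $\beta_i^1$.
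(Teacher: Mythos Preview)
Your proposal is correct and follows essentially the same approach as the paper's proof: the same induction on $|\mathcal{P}|$, the same use of Lemma~\ref{lem:twosolo} and the inductive hypothesis to build the sequence $D_0,D_1,\ldots$, the same pigeonhole on the fully $2\ell$-packed sets $L_i$, the same insertion of $z$'s solo prefix $\zeta'$ before the saturating block multi-assignment, and the same final extension of a $2\ell$-packing of $\mathcal{R}_j$ to $\mathcal{R}_j\cup\{z\}$ via a location $r\notin L_j$. Your notation $\beta_i^1$ corresponds to the paper's $\beta_i$, and you are slightly more explicit than the paper in invoking Lemma~\ref{lem:satnotout} to justify the indistinguishability after inserting $\zeta'$.
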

\begin{fullversion}
\begin{proof}
By induction on $|\mathcal{P}|$. 
The base case is when $|\mathcal{P}| = 2$.
Let $\mathcal{Q} = \mathcal{P}$ and
let $\alpha$ be the empty execution.
Since $\mathcal{P} - \mathcal{Q}= \emptyset$,  the claim holds.
 
Now let $|\mathcal{P}| > 2$ and suppose the claim holds for 
  $|\mathcal{P}| - 1$. 
By~\lemmaref{lem:twosolo}, there exists a $\mathcal{P}$-only execution $\gamma$ 
starting from $C$
  and a set $\mathcal{Q} \subset \mathcal{P}$ of two processes  
  that is bivalent from $D = C\gamma$.
Pick any process $z \in \mathcal{P} - \mathcal{Q}$. 
Then $\mathcal{P} - \{z\}$ is bivalent from $D$ because $\mathcal{Q}$ is bivalent from $D$.

We construct a sequence of configurations $D_0, D_1, \ldots$
  reachable from $D$, such that, for all $i \geq 0$, 
  the following properties hold:
\begin{itemize}
\item[1.] there exists a set of two processes 
  $\mathcal{Q}_i \subseteq \mathcal{P} - \{z\}$ such that
  $\mathcal{Q}_i$ is bivalent from $D_i$,
\item[2.] there exists a $2\ell$-packing $\pi_i$ of the remaining processes 
  $\mathcal{R}_i = (\mathcal{P} - \{z\}) - \mathcal{Q}_i$ in $D_i$, and
\item[3.]
if $L_i$ is the set of all locations that are fully $2\ell$-packed by $\mathcal{R}_i$ in $D_i$,
then $D_{i+1}$ is reachable from $D_i$ by a 
  $(\mathcal{P}-\{z\})$-only execution $\alpha_i$
which contains a block multi-assignment $\beta_i$ such that, for each location $r \in L_i$,
there are at least $\ell$ multiple assignments in $\beta_i$ that perform $\ellbw$ on $r$.
\end{itemize}

By  the induction hypothesis applied to $D$ 
and $\mathcal{P}-\{z\}$, there is a $(\mathcal{P}-\{z\})$-only execution $\eta$ starting from $D$
and a set $\mathcal{Q}_0 \subseteq (\mathcal{P}-\{z\})$ of two processes such that
$\mathcal{Q}_0$ is
bivalent from $D_0 = D\eta$  and
and there exists a $2\ell$-packing $\pi_0$ of the remaining 
processes $\mathcal{R}_0 = (\mathcal{P} - \{z\}) - \mathcal{Q}_0$ in $D_0$.

Now suppose that $D_i$ is a configuration reachable from $D$ and $\mathcal{Q}_i$ and
  $\mathcal{R}_i$ are sets of processes that satisfy all three conditions.

By~\lemmaref{lem:multcover} applied to configuration $D_i$,
  there is a $\mathcal{Q}_i$-only execution $\xi_i$ such that 
  $\mathcal{R}_i \cup \mathcal{Q}_i = \mathcal{P} - \{z\}$
  is bivalent from $D_i\xi_i\beta_i$, where $\beta_i$ is a block 
  multi-assignment in which
  $\ellbw$ is performed exactly $\ell$ times on $r$, for each location $r \in L_i$.
Applying the induction hypothesis to $D_i\xi_i\beta_i$ 
  and $\mathcal{P} - \{z\}$, we get a $(\mathcal{P} - \{z\})$-only 
  execution $\psi_i$ leading to a configuration 
  $D_{i+1} = D_i\xi_i\beta_i\psi_i$, in which there is 
  a set, $\mathcal{Q}_{i+1}$, of two processes such that 
  $\mathcal{Q}_{i+1}$ is bivalent from $D_{i+1}$.
Additionally, there exists a $2 \ell$-packing $\pi_{i+1}$ of the remaining processes 
  $\mathcal{R}_{i+1} = (\mathcal{P} - \{z\}) - \mathcal{Q}_{i+1}$ in $D_{i+1}$.
Note that the execution $\alpha_i = \xi_i \beta_i \psi_i$ contains
  the block multi-assignment $\beta_i$.

Since there are only finitely many locations, 
  there exists $0 \leq i < j$ such that $L_i = L_j$, i.e.,
the set of fully $2\ell$-packed locations by $\mathcal{R}_i$ in $D_i$ is 
  the same as the set of fully $2\ell$-packed locations by $\mathcal{R}_j$ in $D_j$. 
Next, we  insert steps of $z$ that cannot be detected by any process in 
  $\mathcal{P} - \{z\}$. 
Consider any $\{z\}$-only execution $\zeta$ from $D_i\xi_i$ that decides 
  a value $v \in \{0,1\}$. 
If $\zeta$ does not contain any $\ellbw$ to locations outside $L_i$, then
$D_i\xi_i\zeta\beta_i$ is indistinguishable 
  from $D_i\xi_i\beta_i$ to processes in $\mathcal{P} - \{z\}$.
Since $D_i\xi_i\beta_i$ is bivalent for $\mathcal{P} - \{z\}$,
there exists a  $\mathcal{P} - \{z\}$ execution from $D_i\xi_i\beta_i$
and, hence, from $D_i\xi_i\zeta\beta_i$
that decides  $\bar{v}$, contradicting agreement.
Thus
  $\zeta$  contains an $\ellbw$ to a location not in $L_i$.
Let $\zeta'$ be the longest prefix of $\zeta$ that does not contain
an  $\ellbw$ to a location outside $L_i$.
Then,  in $D_i\xi_i\zeta'$, $z$ is poised to perform a multiple assignment
containing  an $\ellbw$ to a location outside $L_i = L_j$.

$D_i\xi_i\zeta'\beta_i$ is indistinguishable from 
  $D_i\xi_i\beta_i$ to $\mathcal{P} - \{z\}$, 
  so the $(\mathcal{P} - \{z\})$-only execution 
  $\psi_i \alpha_{i+1} \cdots \alpha_{j-1}$ can be applied from
$D_i\xi_i\zeta'\beta_i$. 
Let $\alpha = \gamma\eta\alpha_0 \cdots \alpha_{i-1} \xi_i\zeta'\beta_i\psi_i\alpha_{i+1} \cdots \alpha_{j-1}$. 
Every process in $\mathcal{P} - \{z\}$ is in the same state in $C\alpha$ 
  as it is in $D_j$. 
In particular, $\mathcal{Q}_j \subseteq \mathcal{P} - \{z\}$ 
  is bivalent from $D_j$ and, hence, from $C\alpha$. 
The $2\ell$-packing $\pi_j$ of $\mathcal{R}_j$ in $D_j$ is 
  a $2\ell$-packing of $\mathcal{R}_j$ in $C\alpha$ and $L_i = L_j$ is the set of locations 
  that are fully $2\ell$-packed by $\mathcal{R}_j$ in $C\alpha$. 
Since $z$ takes no steps after $D_i\xi\zeta'$,
$z$ covers a location $r$ not in $L_j$ in configurations $D_j$ and  $C\alpha$. 
Since $r \not \in L_j$, there is a $2\ell$-packing $\pi_j'$ 
  of $\mathcal{R}_j$ in $C\alpha$ which packs less than $2\ell$ processes into $r$. 
Let $\pi$ be the packing that packs $z$ into location $r$ and packs
  each process in $\mathcal{R}_j$ in the same location as $\pi'_j$ does.
Then $\pi$ is a
$2\ell$-packing of $\mathcal{R}_j \cup \{z\} = \mathcal{P} - \mathcal{Q}_j$
 in $C\alpha$.
\end{proof}
\end{fullversion}
We can now prove the main theorem.
\begin{theorem}
\label{thm:noconsmult}
Consider a memory consisting of $\ell$-buffers, in which each
process can atomically perform $\ellbw$ to any subset of the 
  $\ell$-buffers. 
Then any obstruction-free binary consensus algorithm for $n$ processes uses 
at least $\lceil (n-1)/2\ell \rceil$ locations.
\end{theorem}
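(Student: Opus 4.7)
The plan is to mimic the proof of \theoremref{thm:nocons}, replacing covering by $2\ell$-packing and invoking the multiple-assignment versions of the auxiliary lemmas. First I would use \lemmaref{lem:initbi} to obtain an initial configuration from which the set $\mathcal{P}$ of all $n$ processes is bivalent. Then I would apply \lemmaref{lem:inductmult} starting from this initial configuration: this gives a reachable configuration $C$, a set $\mathcal{Q} \subseteq \mathcal{P}$ of two processes that is bivalent from $C$, and a $2\ell$-packing $\pi$ of the remaining $n-2$ processes $\mathcal{R} = \mathcal{P} - \mathcal{Q}$ in $C$.

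Next I would use the pigeonhole principle. Since $\pi$ maps $n-2$ processes into locations with at most $2\ell$ processes per location, $\mathcal{R}$ must cover at least $\lceil (n-2)/(2\ell)\rceil \geq \lceil (n-1)/(2\ell) \rceil - 1$ distinct memory locations. This already proves the theorem unless we are in the boundary case where $\lceil (n-2)/(2\ell)\rceil = \lceil (n-1)/(2\ell)\rceil - 1$, i.e.\ when $n - 2$ is a multiple of $2\ell$ and $\pi$ packs exactly $2\ell$ processes into each covered location.

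The main work is the boundary case, handled in analogy with the last paragraph of the proof of \theoremref{thm:nocons}. In this case every location covered by $\mathcal{R}$ is fully $2\ell$-packed by $\mathcal{R}$ in $C$, so the set $L$ from the setup of \lemmaref{lem:multcover} equals the set of locations covered by $\mathcal{R}$. Because $\mathcal{Q}$ is bivalent from $C$, \lemmaref{lem:multcover} yields a $\mathcal{Q}$-only execution $\xi$ such that, in $C\xi$, some process of $\mathcal{Q}$ covers a location not in $L$. Hence the algorithm uses at least $|L| + 1 = \lceil (n-2)/(2\ell)\rceil + 1 = \lceil (n-1)/(2\ell)\rceil$ locations.

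The only subtle point, which is where I expect to spend the most care, is verifying that the boundary case really does allow \lemmaref{lem:multcover} to be invoked: I need the fully $2\ell$-packed set $L$ to be nonempty and to coincide with the locations covered by $\mathcal{R}$, which is exactly what being a tight $2\ell$-packing of $n-2 = 2\ell\cdot |L|$ processes gives. Everything else is a direct transcription of the register argument, with ``$\ell$-covered'' replaced by ``fully $2\ell$-packed'' and block writes replaced by block multi-assignments.
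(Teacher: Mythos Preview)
Your proposal is correct and follows essentially the same approach as the paper: both invoke \lemmaref{lem:initbi} and \lemmaref{lem:inductmult} to reach a configuration with a bivalent pair $\mathcal{Q}$ and a $2\ell$-packing of the remaining $n-2$ processes, apply pigeonhole, and then use \lemmaref{lem:multcover} in the boundary case $2\ell \mid (n-2)$ to produce one additional location. Your extra care about why the covered locations coincide with the fully $2\ell$-packed set $L$ in the boundary case is exactly the point the paper glosses over with ``every location is fully $2\ell$-packed by $\mathcal{R}$.''
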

\begin{proof}
Consider any obstruction-free binary consensus algorithm for $n$ processes.
By~\lemmaref{lem:initbi}, there exists
an initial configuration from which the set of all $n$ processes,
  $\mathcal{P}$, is bivalent. 
\lemmaref{lem:inductmult} implies that there is a  configuration, $C$, reachable from this initial configuration, 
a set of two processes $\mathcal{Q} \subseteq \mathcal{P}$
such that $\mathcal{Q}$ is bivalent from $C$,
and a $2\ell$-packing $\pi$ of the remaining processes
$\mathcal{R} = \mathcal{P} - \mathcal{Q}$ in $C$.
By the pigeonhole principle, $\mathcal{R}$ covers at least  $\lceil (n-2)/2\ell \rceil$ different locations.

Suppose that  $\mathcal{R}$ covers exactly $\lceil (n-2)/2\ell\rceil$ different locations
and $\lceil (n-2)/2\ell \rceil < \lceil (n-1)/2\ell\rceil$.
Then $n-2$ is a multiple of $2\ell$ and every location is fully
 $2\ell$-packed by $\mathcal{R}$.
Since $\mathcal{Q}$ is bivalent from $C$, \lemmaref{lem:multcover} implies that there is a $\mathcal{Q}$-only
execution $\xi$ such that some process in $\mathcal{Q}$ covers a location 
that is not fully $2\ell$-packed by $\mathcal{R}$. 
Hence, there are at least $\lceil (n-2)/2\ell\rceil + 1 = \lceil (n-1)/2\ell \rceil$  locations.
 \end{proof}

\section{Swap and Read}
\label{sec:swap}

In this section, we present an anonymous obstruction-free algorithm for solving $n$-consensus using $n-1$ shared memory locations, $X_1,\ldots,X_{n-1}$, which support $\id{read}$ and $\id{swap}$.
The $\id{swap(v)}$ instruction atomically
sets the memory location to have value $v$ and returns the value that it previously contained.

Intuitively, values $0, 1, \ldots, n-1$ are competing to complete \emph{laps}. If $v$ gets a substantial lead on all other values, then the value $v$ is decided. Each process has a local variable $\ell_v$,  for each $v \in \{0,1,\ldots,n-1\}$,
in which it stores its view of $v$'s current lap.
Initially, these are all 0.
If the process has input $v$, then its first step is to set $\ell_v = 1$.
The process also has $n$ local variables $a_1, \ldots, a_{n-1}$ and $s$.
In $a_i$,  it stores the last value it read from $X_i$, for $i \in \{1,\ldots,n-1\}$,
and, in $s$ it stores the  value returned by its last $\id{swap}$ operation.

When a process performs $\id{swap}$, it includes its process identifier and a strictly increasing sequence number
as part of its argument. Then it is possible to implement a linearizable, obstruction-free $\id{scan}$ of
the $n-1$ shared memory locations
using the double collect algorithm [AAD+93]:  A process repeatedly collects the values in all the locations (using $read$)
until it observes two consecutive collects with the same values.
In addition to a process identifier and a sequence number (which we will henceforth ignore),
each shared memory location stores a vector of $n$ components, all of which are initially 0.
Likewise, $a_1, \ldots, a_{n-1}$ and $s$ are initially $n$-component vectors of 0's.

A process begins by performing a $\id{scan}$ of all $n-1$ memory locations.
Then, for each value $v$, it updates its view, $\ell_v$, of $v$'s current lap
to be the maximum among $\ell_v$, the $v$'th component of $s$,
and the $v$'th component of the vector in each memory location when its last $\id{scan}$ was performed.
If there is a memory location that does not contain $(\ell_0,\ell_1,\ldots,\ell_{n-1})$, then the process performs
$\id{swap}((\ell_0,\ell_1,\ldots,\ell_{n-1}))$ on the first such location.
Now suppose all the memory locations contain  $(\ell_0,\ell_1,\ldots,\ell_{n-1})$.
If there is a value $v$ 
such that $\ell_v$ is at least 2 bigger than every other component in this vector,
then the process decides $v$. Otherwise, it picks the value $v$ with the largest current lap (breaking ties in favour of smaller values) and considers $v$ to have completed lap $\ell_v$. 
Then it performs
$\id{swap}((\ell_0,\ldots,\ell_{v-1},\ell_v+1,\ell_{v+1},\ldots,\ell_{n-1}))$
on  $X_1$. 
If the process doesn't decide, it repeats this sequence of steps.

\begin{algorithm} [!ht]
	\begin{algorithmic} [1]
		\State $\ell_x \gets 1$
				\Loop
		\State $(a_1,\ldots,a_{n-1}) \gets \mathit{scan}(X_1,\ldots,X_{n-1})$
		\For{$v \in \{0,1,\ldots,n-1\}$}
		\State \label{line:lapnum} $\ell_v \gets \max (\{\ell_v, s[v]\}  \cup \{ a_j[v] : 0 \leq j \leq n-1 \} )$ 
		\EndFor
		\State $\ell^* \gets \max \{ \ell_0,\ldots, \ell_{n-1}\}$
		\State $v^* \gets \min \{ v : \ell_v  = \ell^* \}$
		\If {$a_j = (\ell_0,\ldots,\ell_{n-1})$ for all $1 \leq j \leq n-1$} \Comment{$v^*$ has completed lap $\ell^*$}
		\If {$\ell^* \geq \ell_{v} + 2$ for all $v \neq v^*$} \Comment {$v^*$ is at least 2 laps ahead of all other values}
		\State \textbf{decide} $v^*$ and \textbf{terminate} 
		\EndIf
		\State $\ell_{v^*} \gets \ell_{v^*} + 1$ \Comment {value $v^*$ is now on the next lap}
		\EndIf
		\State $j \gets \min \{ j : a_j \neq (\ell_0,\ldots,\ell_{n-1}) \}$
		\State $s \gets \textit{swap}(X_j,(\ell_0,\ldots,\ell_{n-1}))$
		\EndLoop
	\end{algorithmic}
\caption{An $n$-consensus algorithm for a process with input value $x$}
	\label{consensus-racing-algorithm}
\end{algorithm}

We now prove that our protocol is correct, i.e.~it satisfies validity, agreement, and obstruction-freedom. Each step in the execution is either a $\id{swap}$ performed on some $X_j$ or a $\id{scan}$ of $X_1,\ldots,X_{n-1}$. 
For each $\id{scan}$, $S$, by a process $p$ and for each $v \in \{0,\ldots,n-1\}$, we define $\ell_v(S)$ to be the value of $p$'s local variable $\ell_v$ computed on line~\ref{line:lapnum} following $S$. Similarly, for each $\id{swap}$  $U$ and each $v \in \{0,\ldots,n-1\}$, if $U$ swaps the contents of $X_j$ with value $(\ell_0,\ldots,\ell_{n-1})$, then we define $\ell_v(U) = \ell_v$.

We begin with an easy observation, which follows from inspection of the code. 

\begin{observation}
	\label{race-algorithm-observation}
	Let $U$ be a $\id{swap}$ by some process $p$ and let $S$ be the last $\id{scan}$ that $p$ performed before $U$. Then, for each $v \in \{0,\ldots,n-1\}$, $\ell_v(U) \geq \ell_v(S)$. If there exists $v \in \{0,\ldots,n-1\}$ such that $\ell_v(U) > \ell_v(S)$, then $\ell_v(U) = \ell_v(S)+1$, $\ell_{v'}(S) \leq \ell_{v}(S)$  for all $v' \neq v$, and $S$ returned
the same value, $(\ell_0(S),\ldots,\ell_{n-1}(S))$, from each shared memory location.
\end{observation}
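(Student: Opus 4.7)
The statement is essentially a direct bookkeeping claim about what happens in the fragment of code between a scan $S$ on line~3 and the next swap $U$ on line~15. My plan is to track exactly which local variables are modified in that fragment and argue that only one coordinate of the $\ell$-vector can change, and by at most one.

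First, I would fix the swap $U$ by process $p$ and let $S$ be the scan $p$ performed in the same loop iteration (this must be the last scan before $U$, since every iteration has its scan immediately followed by its swap with no intervening scans). By definition of $\ell_v(S)$, immediately after line~5 has been executed for $S$, the local variables satisfy $\ell_v = \ell_v(S)$ for every $v \in \{0,\ldots,n-1\}$. I would then observe that between line~5 and the swap on line~15, the only line that modifies any $\ell_v$ is line~13, which increments $\ell_{v^*}$ by exactly one, where $v^* = \min\{v : \ell_v = \ell^*\}$ and $\ell^* = \max_v \ell_v$ (computed from the post-line~5 values, hence from $\ell_0(S),\ldots,\ell_{n-1}(S)$). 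In particular, no other coordinate changes. Since line~15 performs $\id{swap}$ with argument $(\ell_0,\ldots,\ell_{n-1})$, the definition of $\ell_v(U)$ gives $\ell_v(U) \geq \ell_v(S)$ for every $v$, with equality unless $v = v^*$ and line~13 executed, in which case $\ell_v(U) = \ell_v(S)+1$. This establishes the first sentence and the first clause of the second sentence.

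For the remaining two clauses, suppose some $v$ has $\ell_v(U) > \ell_v(S)$. Then by the above, line~13 was executed and $v = v^*$. Execution of line~13 requires passing the test on line~8, so at that point $a_j = (\ell_0,\ldots,\ell_{n-1})$ for all $j$; since the $\ell$-vector at line~8 is still $(\ell_0(S),\ldots,\ell_{n-1}(S))$ (line~13 has not yet run), this gives that $S$ returned the common value $(\ell_0(S),\ldots,\ell_{n-1}(S))$ from each shared memory location. Finally, because $v^*$ was chosen on line~7 as an argmax of $\ell_\bullet$ over $\{0,\ldots,n-1\}$ using the $\ell_v(S)$ values, $\ell_{v'}(S) \leq \ell_{v^*}(S) = \ell_v(S)$ for all $v' \neq v$, which is the last clause.

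There is no real obstacle here: the whole argument is a single careful walk-through of one loop body, using only the definitions of $\ell_v(S)$ and $\ell_v(U)$ and the observation that line~13 is the unique write to an $\ell$-variable between them. The one thing to be a bit careful about is ensuring that $S$ really is the scan from the same iteration as $U$ — this follows from the fact that each iteration performs exactly one scan (line~3) followed by exactly one swap (line~15), with no other scans or swaps in between, and iterations are executed sequentially by the same process.
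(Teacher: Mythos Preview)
Your proposal is correct and is exactly the approach the paper takes: the paper does not give a proof at all beyond the remark that the observation ``follows from inspection of the code,'' and your walk-through of one loop iteration (noting that line~13 is the only place an $\ell$-variable is modified between the scan and the swap, and that reaching line~13 requires the test on line~8 to have passed) is precisely that inspection spelled out.
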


The next lemma follows from Observation \ref{race-algorithm-observation}. It says that if there was a $\id{scan}$, $S$, where value $v$ is on lap $\ell > 0$, i.e. $\ell_v(S) = \ell$, then there was a $\id{scan}$ where $v$ is on lap $\ell-1$ and all the swap objects contained this information.

\begin{lemma}
	\label{race-algorithm-incremental}
	Let $S$ be any $\id{scan}$ and let $v \in \{0,\ldots,n-1\}$. If $\ell_v(S) > 0$, then there was a $\id{scan}$, $T$,
	prior to $S$
such that $T$ returned
the same value from each shared memory location,
$\ell_v(T) = \ell_v(S) - 1$, and $\ell_{v'}(T) \leq \ell_v(T)$, for all $v' \neq v$.
\end{lemma}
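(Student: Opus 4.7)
The plan is to find the earliest swap in the execution that ever writes a $v$-component equal to $\ell := \ell_v(S)$ and then apply Observation~\ref{race-algorithm-observation} to that swap together with the scan that immediately precedes it by the same process.

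First I would establish a single-step growth property: let $M(t)$ denote the largest $v$-component appearing in $\ell_v(U')$ over all swaps $U'$ performed strictly before time $t$. At any point, the $v$-component of every shared memory location, of every process's local $\ell_v$, and of every process's local $s[v]$ is at most the current $M$, since these values are all inherited (via assignments from swap results, from reads of memory, or through the max on line~\ref{line:lapnum}) from the arguments of prior swaps. Consequently, for any swap $U'$ the preceding scan $S'$ by the same process computes $\ell_v(S')$ as a maximum of quantities bounded by $M$ at that moment, and Observation~\ref{race-algorithm-observation} forces $\ell_v(U') \leq \ell_v(S')+1 \leq M+1$. Since $\ell_v(S)=\ell>0$, the same tracing shows that some swap before $S$ must reach $v$-component exactly $\ell$, so there is a unique earliest such swap $U$, and $U$ is performed before $S$.

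Let $T$ be the scan that the process performing $U$ executed immediately before $U$. By the minimality of $U$, every $v$-component anywhere in the system just before $U$ is at most $\ell-1$; in particular $\ell_v(T) \leq \ell-1 < \ell = \ell_v(U)$. Applying Observation~\ref{race-algorithm-observation} to this strict inequality yields all three conclusions of the lemma at once: $\ell_v(T)=\ell-1$, $\ell_{v'}(T) \leq \ell_v(T)$ for every $v' \neq v$, and $T$ returned the same vector from each of $X_1,\ldots,X_{n-1}$. Since $T$ precedes $U$ and $U$ precedes $S$, $T$ precedes $S$. The main obstacle I anticipate is the careful bookkeeping required to verify the single-step growth property across all sources (shared memory, swap return values, and carried-over local $\ell_v$); once that monotonicity is in hand, pinpointing the witnessing scan $T$ and extracting its properties from Observation~\ref{race-algorithm-observation} is immediate.
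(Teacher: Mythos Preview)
Your approach is essentially the same as the paper's: both locate the earliest swap $U$ with $\ell_v(U)=\ell_v(S)$, take $T$ to be the scan immediately preceding $U$ by the same process, use minimality of $U$ to force $\ell_v(T)<\ell_v(U)$, and then invoke Observation~\ref{race-algorithm-observation} to read off all three conclusions. Your $M(t)$ bookkeeping just makes explicit the tracing that the paper compresses into the single sentence ``since each swap object initially contains an $n$-component vector of $0$'s and $\ell_v(S)>0$, there was a swap $U$ prior to $S$ such that $\ell_v(U)=\ell_v(S)$''; the argument structure is identical.
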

\begin{proof}
	Since each swap object initially contains an $n$-component vector of 0's
and $\ell_v(S) > 0$, there was $\id{swap}$ $U$ prior to $S$ such that $\ell_v(U) = \ell_v(S)$. Consider the first such $\id{swap}$. Let $p$ be the process that performed $U$ and let $T$ be the last $\id{scan}$ performed by $p$ before $U$. By the first part of Observation \ref{race-algorithm-observation}, $\ell_v(U) \geq \ell_v(T)$ and, by definition of $U$, $\ell_v(U) > \ell_v(T)$ (otherwise, there would have been an earlier $\id{swap}$ $U'$ with $\ell_v(U') = \ell_v(T) = \ell_v(U) = \ell_v(S)$). By the second part of Observation \ref{race-algorithm-observation}, it follows that $\ell_v(U) = \ell_v(T) + 1$, $\ell_{v'}(T) \leq \ell_{v}(T)$ for all $v' \neq v$, and $T$ returns a vector whose components all contain the same pair $(\ell_0(T),\ldots,\ell_{n-1}(T))$. Since $\ell_v(U) = \ell_v(S)$, it follows that $\ell_v(T) = \ell_v(U)-1 = \ell_v(S) - 1$.
\end{proof}

The following lemma is key. In particular, it says that if a process considers value $v$ to have completed  lap $\ell$ as a result of performing a $\id{scan}$ $S$ where all the components have the same value, then every process will think that $v$ is at least on lap $\ell$ when it performs any $\id{scan}$ after $S$.

\begin{lemma} \label{race-algorithm-key-lemma}
Suppose $S$ is a $\id{scan}$ that returns the same value from each shared memory location.
	 If $T$ is a $\id{scan}$ performed after $S$, then, for each $v \in \{0,\ldots,n-1\}$, 
	$\ell_v(T) \geq \ell_v(S)$.
\end{lemma}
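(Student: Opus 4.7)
The plan is to exploit the structure of the double-collect implementation of $\id{scan}$ and then induct on the chronological order of scans after $S$. Since $S$ returns the same vector $V_S = (\ell_0(S),\ldots,\ell_{n-1}(S))$ from each of the $n-1$ locations, the two matching collects inside $S$ cannot have any $\id{swap}$ linearized between them, so there is a configuration $C_S$ reached during $S$'s execution at which every memory location simultaneously contains $V_S$. Taking $S$ itself as the base case (for which $\ell_v(T) = \ell_v(S)$ is immediate), I would prove the lemma by strong induction on the linearization order of scans at or after $C_S$.

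For the inductive step, let $T$ be a scan by process $q$ performed strictly after $S$, and assume the claim for all scans strictly between $S$ and $T$. If $q$ already performed some scan $T'$ after $S$ and before $T$, then the inductive hypothesis gives $\ell_v(T') \geq \ell_v(S)$; since line~5 only ever replaces $\ell_v$ with a value at least as large, $q$'s local $\ell_v$ variable is monotonically non-decreasing throughout $q$'s execution, so its value at the start of $T$'s iteration is already at least $\ell_v(T')$, and this survives line~5 of $T$'s iteration.

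The more delicate case is when $T$ is $q$'s first scan performed after $S$. Here I would show that at the linearization point of $T$ either some location $X_j$ still contains a vector whose $v$-th coordinate is at least $V_S[v]$, so that $a_j^q[v] \geq V_S[v]$ forces $\ell_v(T) \geq \ell_v(S)$, or else $q$'s local $s[v]$ (populated by an intervening swap $q$ performed between $C_S$ and $T$) is at least $V_S[v]$. Arguing by contradiction, if $\ell_v(T) < \ell_v(S)$ then every location must have been overwritten after $C_S$ by a $\id{swap}$ whose writer held $\ell_v < V_S[v]$ at the time, and the inductive hypothesis applied to the preceding scan of each such writer forces that preceding scan to be linearized before $C_S$.

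The main obstacle I anticipate is the bookkeeping required to rule out this last scenario: bounding what a wave of such ``stale'' writers (whose scans predate $C_S$ but whose swaps follow it) can do to the memory. To close the argument I would combine Observation~\ref{race-algorithm-observation} with the target-selection rule on line~14. Any subsequent swap by the same writer is preceded by a fresh scan that, by the inductive hypothesis, restores its $\ell_v$ to at least $V_S[v]$, so each stale writer can perform at most one bad swap before being ``refreshed''. Combined with a careful accounting of where these bad swaps can target, which exploits the structural constraint that the written vector $\ell$ always has $\ell_x = 1$ for the writer's own input $x$ (a value that must be absent from at least one $a_j$ at the stale scan), this should force at least one of the $n-1$ locations to retain $v$-coordinate at least $V_S[v]$ at the linearization of $T$, yielding the desired contradiction.
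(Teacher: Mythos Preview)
Your skeleton matches the paper's proof almost exactly: take the first offending scan $T$, observe that each process contributes at most one ``stale'' swap (one whose preceding scan was before $S$), and use the inductive hypothesis to show that any ``fresh'' swap writes a vector with $v$-th coordinate at least $\ell_v(S)$. The gap is in your closing step.

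Your plan to finish via the target-selection rule on line~14 and the property ``the written vector always has $\ell_x \geq 1$ for the writer's input $x$'' does not work. Nothing prevents the stale swaps from covering all $n-1$ locations: distinct processes may share an input, and even if they did not, line~14 places no useful constraint on which location a stale writer targets relative to $C_S$. So your claimed conclusion that ``at least one of the $n-1$ locations retains $v$-coordinate at least $V_S[v]$'' is false in general, and the argument collapses precisely in the case you flagged as the main obstacle.

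The paper closes the argument with a straight counting step that uses the $s$ variable you mentioned but then abandoned. The process that performed $S$ contributes no stale swap (all its post-$S$ swaps are fresh), so there are at most $n-1$ stale swaps, one per remaining process. If these touch fewer than $n-1$ locations, some location's value at $T$ comes from $V_S$ or from a fresh swap, and you are done via the inductive hypothesis and Observation~\ref{race-algorithm-observation}. If they touch all $n-1$ locations, then there is exactly one stale swap per location and every process other than the $S$-performer---in particular $q$---performed one. The value $q$'s stale swap displaced into $s$ is therefore either $V_S$ itself or the argument of a fresh swap (since the only stale swap on that location was $q$'s own). Either way $s[v] \geq \ell_v(S)$, contradicting $\ell_v(T) < \ell_v(S)$. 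This pigeonhole on stale swaps versus locations, not any property of line~14 or of input bits, is the missing idea.
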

\begin{proof}
	Suppose, for a contradiction, that there is a \id{scan} $T$ after $S$ such that $\ell_v(T) < \ell_v(S)$ for some $v \in \{0,\ldots,n-1\}$. Consider the first such \id{scan} $T$.
	Since the the value of $\ell_v$ computed on line~\ref{line:lapnum} is the maximum of a set that includes $\ell_v$,
the value of each local variable $\ell_v$ is non-decreasing.
Since $\ell_v(T) < \ell_v(S)$, the process, $q$, that performed $T$ is not the process that performed $S$. 
	
The value returned by $T$ from each shared memory location is either the value returned by $S$ from that location
or is the argument of a $\id{swap}$ performed on that location between $S$ and $T$.
By assumption, the value returned by $S$ from each shared memory location is $(\ell_0(S),\ldots,\ell_{n-1}(S))$.
Since $\ell_v(T) < \ell_v(S)$ and the value of $\ell_v$ computed by $q$ on line~\ref{line:lapnum} after $T$
is at least as large as the $v$'th components
of the values returned by $T$ from each shared memory location, 
it follows that the value returned by $T$ from each shared memory location
is the argument of a $\id{swap}$ performed on that location between $S$ and $T$.

Partition the \id{swap}s that occur between $S$ and $T$ into two sets, $W$ and $W'$.
For any $\id{swap}$ $Y$ performed by a process $p$  between $S$ and $T$,  $Y \in W$ if $p$'s last \id{scan} prior to $Y$ occurred before $S$.
Otherwise $Y \in W'$.
In particular, if process $p$ performed $S$, then all of the $\id{swap}$s between $S$ and $T$ performed by $p$ are in $W'$.

Each process alternately performs $\id{scan}$ and $\id{swap}$.
Therefore, if a process performs more than one $\id{swap}$ between $S$ and $T$,
then all of them, except possibly the first, are in $W'$.
It follows that each $\id{swap}$ in $W$ is by a different process and
$|W| \leq n-1$.

If the \id{swap}s in $W$ modify fewer than $n-1$ different swap objects,
then the value returned by $T$ from some shared memory location is the argument of a $\id{swap}$ $U' \in W'$.
such that $\ell_v(U') < \ell_v(S)$.

Otherwise, the \id{swap}s in $W$ modify exactly $n-1$ shared memory locations.
Then the process, $q$, that performed $T$ performed a $\id{swap}$ $U \in W$.
Each $\id{swap}$ in $W$ modifies a different location. Therefore,
the value that $U$ returns is either the value returned by $S$ from that memory location
or is the argument of a $\id{swap}$ $U' \in W'$ (performed at the same location).
Since $\ell_v(T) < \ell_v(S)$ and the value of $\ell_v$ computed by $q$ on line~\ref{line:lapnum} after $T$
is at least as large as the $v$'th component 
of the result of every $\id{swap}$ that $q$ performed prior to $T$,
it follows that $U$ returns the argument of a $\id{swap}$ $U' \in W'$
such that $\ell_v(U') < \ell_v(S)$.

In either case, let $p'$ be the process that performed $U'$ and let $T'$ be the last $\id{scan}$
that $p'$ performed prior to $U'$.
By definition of $W'$, $T'$ occurs between $S$ and $U'$ and, hence, before $T$. By definition of $T$,  $\ell_v(T') \geq \ell_v(S)$, for each $v \in \{0,1,\dots,n-1\}$. Therefore, by Observation~\ref{race-algorithm-observation}, $\ell_v(U') \geq \ell_v(T')$, so  $\ell_v(U') \geq \ell_v(S)$. This is a contradiction.
\end{proof}

The previous lemma allows us to prove that once a value $v$ is at a lap $\ell$ that is 2 laps ahead of $\overline{v}$ and every swap object contains this information, then $\overline{v}$ will never reach lap $\ell$, i.e.~$v$ will always be at least one lap ahead of $\overline{v}$.

\begin{lemma}
	\label{race-algorithm-stay-ahead}
Suppose $S$ is a $\id{scan}$ that returns the same value $(\ell_0(S),\ldots,\ell_{n-1}(S))$ from each shared memory location and there is some $v \in \{0,\ldots,n-1\}$ such that $\ell_v(S) \geq \ell_{v'}(S) + 2$ for all $v' \neq v$.
Then, for every $\id{scan}$ $T$ and every value $v' \neq v$,  $\ell_{v'}(T) \leq \ell_{v'}(S)+1$.
\end{lemma}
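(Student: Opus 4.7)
The plan is to argue by contradiction and then reduce everything to a single invocation of Lemma~\ref{race-algorithm-key-lemma} applied in two different directions (before $S$ and after $S$). Specifically, suppose there is some scan $T$ and some $v' \neq v$ with $\ell_{v'}(T) \geq \ell_{v'}(S) + 2$. Since each $\ell_{v'}$ value ever observed arises as a component of some swap's argument, and since those components can only grow through the update $\ell_{v^*} \gets \ell_{v^*}+1$, I can use Lemma~\ref{race-algorithm-incremental} as a ``descent'' lemma: starting from any scan whose $v'$-component is $\ell_{v'}(S)+2$, it produces an earlier scan $T^*$ in which the $v'$-component equals $\ell_{v'}(S)+1$, every shared memory location returns the same value in $T^*$, and $\ell_{v''}(T^*) \leq \ell_{v'}(T^*) = \ell_{v'}(S)+1$ for every $v'' \neq v'$. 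In particular $\ell_{v}(T^*) \leq \ell_{v'}(S)+1$.

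Now I would split on whether $T^*$ occurs before or after $S$. If $T^*$ occurs after $S$, Lemma~\ref{race-algorithm-key-lemma} applied to $S$ (which by hypothesis returns the same value from each location) gives $\ell_v(T^*) \geq \ell_v(S) \geq \ell_{v'}(S)+2$, contradicting $\ell_v(T^*) \leq \ell_{v'}(S)+1$. If $T^*$ occurs before $S$, I apply Lemma~\ref{race-algorithm-key-lemma} in the other direction, using $T^*$ as the ``quiescent'' scan and $S$ as the later scan: this gives $\ell_{v'}(S) \geq \ell_{v'}(T^*) = \ell_{v'}(S)+1$, again a contradiction. Either way the assumption is untenable, which proves the lemma.

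The only subtle point is the first sentence of the descent argument, namely justifying that a scan with $v'$-component exactly $\ell_{v'}(S)+2$ actually exists once some scan $T$ achieves at least this value. This follows from a straightforward induction on the lap number: Lemma~\ref{race-algorithm-incremental} produces, for any scan whose $v'$-component is $k>0$, an earlier scan whose $v'$-component is $k-1$, so walking this chain down from $\ell_{v'}(T)$ eventually hits the value $\ell_{v'}(S)+2$. I expect this bookkeeping and the careful application of Lemma~\ref{race-algorithm-key-lemma} in both temporal directions to be the only real obstacle; the rest is mechanical.
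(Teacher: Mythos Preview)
Your proof is correct and follows essentially the same approach as the paper. The paper avoids your iterative descent by taking $T$ to be the \emph{first} scan with $\ell_{v'}(T)\geq\ell_{v'}(S)+2$ (which forces $\ell_{v'}(T)=\ell_{v'}(S)+2$ after one application of Lemma~\ref{race-algorithm-incremental}), and replaces your explicit case split with the contrapositive of Lemma~\ref{race-algorithm-key-lemma} to conclude directly that $S$ precedes $T^*$; otherwise the argument is identical.
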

\begin{proof}
Suppose, for a contradiction, that there is some $\id{scan}$ $T$ and some value $v' \neq v$
such that $\ell_{v'}(T) \geq \ell_{v'}(S)+2 >0 $.
Consider the first such $\id{scan}$.
By Lemma \ref{race-algorithm-incremental}, there was a $\id{scan}$, $T'$, prior to $T$
such that $T'$ returned the same value from each shared memory location,
$\ell_{v'}(T') = \ell_{v'}(T)-1$, and $\ell_{v}(T') \leq \ell_{v'}(T')$. 
By definition of $T$, $\ell_{v'}(T') < \ell_{v'}(S) + 2$. Hence, $\ell_{v'}(T) = \ell_{v'}(S) + 2$ and $\ell_{v'}(T') = \ell_{v'}(S) + 1$.
	
Since $\ell_{v'}(S) < \ell_{v'}(T')$ and $T'$ returned the same value from each shared memory location,
the contrapositive of Lemma \ref{race-algorithm-key-lemma}
implies that $S$ was performed before $T'$.
Since $S$ returned the same value from each shared memory location, Lemma \ref{race-algorithm-key-lemma}
implies that $\ell_v(T') \geq \ell_v(S)$. By assumption, $\ell_v(S) \geq \ell_{v'}(S) + 2$. Hence, $\ell_v(T') \geq \ell_{v'}(S)+2 = \ell_{v'}(T') + 1$. This contradicts the fact that $\ell_{v}(T') \leq \ell_{v'}(T')$.
\end{proof}

We can now prove that the protocol satisfies agreement, validity, and obstruction-free termination.

\begin{lemma}
	No two processes decide differently.
	\label{race-algorithm-agreement}
\end{lemma}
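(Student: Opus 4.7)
The plan is to argue by contradiction: suppose processes $p$ and $q$ decide distinct values $v$ and $v'$, respectively. By the decision condition in the algorithm, $p$ performed a $\id{scan}$ $S_p$ that returned the same vector from each of $X_1,\dots,X_{n-1}$, with $\ell_v(S_p) \geq \ell_u(S_p) + 2$ for every $u \neq v$; symmetrically, $q$ performed such a $\id{scan}$ $S_q$ with $\ell_{v'}(S_q) \geq \ell_u(S_q) + 2$ for every $u \neq v'$. Without loss of generality, assume $S_p$ precedes $S_q$ in the execution (the other case is symmetric, with the roles of $v,v'$ and $p,q$ swapped).

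Next, I would invoke the two main lemmas already established. From \lemmaref{race-algorithm-key-lemma} applied to $S_p$ (which returns the same value from each location) and the later scan $S_q$, we obtain $\ell_v(S_q) \geq \ell_v(S_p)$. From \lemmaref{race-algorithm-stay-ahead} applied to $S_p$, using the hypothesis $\ell_v(S_p) \geq \ell_u(S_p) + 2$ for all $u \neq v$, we obtain $\ell_{v'}(S_q) \leq \ell_{v'}(S_p) + 1$ since $v' \neq v$.

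Chaining these inequalities with the two-lap lead condition at $S_p$ yields
\[
\ell_{v'}(S_q) \;\leq\; \ell_{v'}(S_p) + 1 \;\leq\; \ell_v(S_p) - 1 \;\leq\; \ell_v(S_q) - 1.
\]
But $q$'s decision condition requires $\ell_{v'}(S_q) \geq \ell_v(S_q) + 2$, a contradiction. So no two processes can decide differently.

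The proof is essentially a short combination of the two preceding lemmas, and I do not anticipate a real obstacle: the only small subtlety is being careful with the WLOG choice of which decision scan occurs first, since \lemmaref{race-algorithm-key-lemma} and \lemmaref{race-algorithm-stay-ahead} are asymmetric in time. Both orderings are handled identically by swapping the roles of $(p,v,S_p)$ and $(q,v',S_q)$, so the argument is symmetric and clean.
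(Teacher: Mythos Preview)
Your proposal is correct and follows essentially the same approach as the paper: both proofs invoke \lemmaref{race-algorithm-key-lemma} and \lemmaref{race-algorithm-stay-ahead} on a first (or earlier) decision scan to conclude that no subsequent scan can satisfy the decision condition for a different value. The paper frames it directly by fixing the \emph{first} scan satisfying the decision condition and showing $\ell_{v^*}(T) > \ell_v(T)$ for all later scans $T$ and all $v\neq v^*$, whereas you frame it as a contradiction between two specific decision scans, but the logical content is identical.
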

\begin{proof}
From lines 8--10 of the code, the last step a process performs before deciding value $v^*$ is
a  $\id{scan}$, $S$, that returns the same value from each shared memory location and
such that $\ell_{v^*}(S) \geq \ell_{v}(S)+2$ for all $v \neq v^*$.
Consider the first such $\id{scan}$.
By Lemma \ref{race-algorithm-key-lemma}, $\ell_{v^*}(T) \geq \ell_{v^*}(S)$ for every $\id{scan}$ $T$ performed after $S$.
By Lemma \ref{race-algorithm-stay-ahead},
$\ell_{v}(T) \leq \ell_{v}(S)+1$ for all $v \neq v^*$.
 Hence, $\ell_{v^*}(T) > \ell_{v}(T)$. It follows that no process ever decides $v \neq v^*$.
\end{proof}

\begin{lemma}
	If every process has input $x$, then no process decides $x' \neq x$.
	\label{race-algorithm-validity}
\end{lemma}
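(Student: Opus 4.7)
The plan is to prove, by a simple induction on the prefix of the execution, the following invariant: in every reachable configuration, for every process $p$ and every value $v \neq x$, the local variable $\ell_v$ of $p$ equals $0$, and the $v$-th component of every vector stored in $X_1, \ldots, X_{n-1}$ equals $0$. Once this invariant is established, validity follows immediately, because the decision condition requires $\ell^* \geq \ell_{v'} + 2$ for all $v' \neq v^*$, and if $v^* \neq x$ then this would force $\ell_{v^*} \geq \ell_x + 2 \geq 2$, contradicting $\ell_{v^*} = 0$.

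For the base case I would note that each shared memory location is initialized to the all-zero vector and each process's local variables $\ell_0, \ldots, \ell_{n-1}, a_1, \ldots, a_{n-1}, s$ are initially all zero. The only line that ever assigns to $\ell_x$ before any other code runs is the initial $\ell_x \gets 1$, which affects only the $x$-th component and preserves $\ell_v = 0$ for $v \neq x$.

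For the inductive step, I would distinguish the two kinds of atomic steps. A $\id{scan}$ step only reads the shared memory, so the contents of $X_1, \ldots, X_{n-1}$ are unchanged. Afterward the process executes the update on line~\ref{line:lapnum}, setting $\ell_v$ to the maximum of $\ell_v$, $s[v]$, and the $a_j[v]$; by the inductive hypothesis every one of these quantities is $0$ for $v \neq x$, so $\ell_v$ remains $0$. A $\id{swap}$ step writes the vector $(\ell_0, \ldots, \ell_{n-1})$ currently held by the performing process into some $X_j$; by the inductive hypothesis the $v$-th component of this vector is $0$ for $v \neq x$, so the invariant on shared memory is preserved. The $\id{swap}$ also stores into $s$ the previous contents of $X_j$, whose $v$-th component is $0$ for $v \neq x$ by the inductive hypothesis, so the process's local invariant is preserved as well.

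There is no real obstacle here: the argument is a routine state invariant that relies only on the fact that the algorithm never introduces a nonzero value into any lap coordinate $v \neq x$ from outside, and that the operations on lap vectors used in the code (componentwise max, incrementing a selected coordinate, swapping) can only propagate zeros in those coordinates when all processes share the same input $x$. Once the invariant is in place, the concluding sentence is just an application of the decision rule on lines 8--10 of the algorithm.
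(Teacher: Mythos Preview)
Your invariant approach is correct in spirit and proves the same core fact as the paper (that $\ell_{x'}$ never becomes positive for $x' \neq x$), but your inductive step has one omission: you handle line~\ref{line:lapnum} and the $\id{swap}$, but not line~11, the increment $\ell_{v^*} \gets \ell_{v^*} + 1$. This line could in principle set some $\ell_{v} > 0$ for $v \neq x$, and your closing remark that ``incrementing a selected coordinate \ldots\ can only propagate zeros'' does not explain why. What you need is the auxiliary fact that $\ell_x \geq 1$ for every process once it has executed line~1 (immediate, since $\ell_x$ is never decreased); combined with your invariant $\ell_v = 0$ for $v \neq x$, this forces $v^* = x$, so line~11 only touches $\ell_x$. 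With that one sentence added, your induction goes through.

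The paper's proof takes a slightly different route: instead of a forward induction, it argues by contradiction from the \emph{first} $\id{swap}$ $U$ with $\ell_{x'}(U) > 0$, and then invokes Observation~\ref{race-algorithm-observation} (which encapsulates exactly the line-11 reasoning above: if the swap value exceeds the previous scan value in coordinate $x'$, then $x'$ must have been the leading value, contradicting $\ell_x \geq 1$). Your direct-invariant argument is more self-contained and arguably cleaner for this lemma in isolation; the paper's version is shorter because it reuses machinery already set up for the harder lemmas.
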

\begin{proof}
Suppose there is a $\id{swap}$, $U$, such that $\ell_{x'}(U) > 0$ for some $x' \neq x$.
Consider the first such $\id{swap}$.  Let $p$ be the process that performed $U$ and
let $S$ be the last $\id{scan}$ that $p$ performed before $U$.
Since each shared memory location initially stores an $n$-component vector of 0's, 
$\ell_{x'}(S) = 0 < \ell_{x'}(U)$.
By the second part of Observation \ref{race-algorithm-observation},
$\ell_x(S) \leq \ell_{x'}(S) = 0$.
Since $p$ has input $x$, it set $\ell_x = 1$ on line 1.
From the code, $\ell_x$ is non-decreasing, so $\ell_x \geq 1$ whenever $p$ performs line 5.
By definition, $\ell_{x}(S) \geq 1$.
This is a contradiction.
Thus, $\ell_{x'} (U) = 0$ for all $\id{swap}$s $U$.
Since no process has input $x'$, no process set $\ell_{x'} =1$ on line 1.
It follows that $\ell_{x'}(S) = 0$ for all $\id{scan}$s $S$, so no process decides $x' \neq x$.
\end{proof}

\begin{lemma}
	Every process decides after performing at most $3n-2$ $\id{scan}$s in a solo execution.
	\label{race-algorithm-obstruction-free}
\end{lemma}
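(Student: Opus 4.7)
The plan is to track a single solo execution of process $p$ starting from an arbitrary reachable configuration $C_0$, and to bound the number of scans that $p$ performs before deciding. Write $S_1,S_2,\ldots$ for the scans that $p$ performs, and let $\ell^{(k)} = (\ell_0^{(k)},\ldots,\ell_{n-1}^{(k)})$ denote $p$'s local vector immediately after the max-update on line~\ref{line:lapnum} following $S_k$.

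I would first establish that, in a solo execution, no max-update after $S_1$ can change any component of $\ell$. Since only $p$ writes to shared memory during the execution, every value that $p$ ever reads out of a location (or receives as the return value $s$ of a swap) is either the location's initial contents at $C_0$ or some vector that $p$ previously wrote. By construction, $\ell^{(1)}$ is the componentwise maximum of $p$'s initial vector and the initial contents of every location, and $p$ only ever writes vectors that dominate $\ell^{(1)}$ componentwise. By induction on $k$, every element of the set over which line~\ref{line:lapnum} takes a maximum in iteration $k+1$ is already componentwise dominated by the current $\ell$, so the max-update is a no-op. Hence the only way any component of $\ell$ can ever change is through the explicit increment $\ell_{v^*} \gets \ell_{v^*}+1$. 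Setting $v^* = \min\{v : \ell_v^{(1)} = \max_u \ell_u^{(1)}\}$, this forces $\ell_v^{(k)} = \ell_v^{(1)}$ for every $k$ and every $v \neq v^*$, and in particular $v^*$ remains the argmin-of-the-argmax throughout the execution.

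Next I would bound the number of explicit increments. Let $d = \max_{v \neq v^*} \ell_v^{(1)}$; the choice of $v^*$ gives $\ell_{v^*}^{(1)} \geq d$, while the decide predicate requires only $\ell_{v^*} \geq d+2$. Since each increment adds exactly one to $\ell_{v^*}$ and nothing else can, at most two increments occur before $p$ is in a position to decide.

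Finally I would count scans phase by phase. Let $T_k$ be the index of the first scan at which every shared memory location contains $\ell^{(k)}$. At most $n-1$ locations differ from $\ell^{(1)}$ after $S_1$, and each subsequent iteration either swaps $\ell^{(1)}$ into one such stale location or (when all agree) runs the decide/increment block, so $T_1 \leq n$. If $p$ does not decide at $S_{T_1}$, the accompanying increment-and-swap already installs $\ell^{(2)}$ into $X_1$, leaving at most $n-2$ stale locations; hence $T_2 \leq T_1 + (n-1) \leq 2n-1$, and the same argument gives $T_3 \leq 3n-2$. By the previous paragraph, the decide predicate holds at $S_{T_3}$, so $p$ decides after at most $3n-2$ scans. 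The main subtlety is the inter-phase accounting: because the last swap of each phase already writes the new vector into $X_1$, only $n-1$ further scans are needed to propagate and verify the next vector, yielding the tight bound $n + (n-1) + (n-1) = 3n-2$ rather than $3n-1$.
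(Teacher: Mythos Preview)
Your proposal is correct and follows essentially the same approach as the paper's proof: trace the solo run of $p$, observe that after the first scan the only change to $\ell$ is the explicit increment of $\ell_{v^*}$, note that at most two increments are needed, and count at most $n + (n-1) + (n-1) = 3n-2$ scans across the three propagation phases. Your treatment of why the max-update is a no-op is more explicit than the paper's (which silently assumes it), but the structure and the scan accounting are identical; one minor notational slip is that you define $\ell^{(k)}$ per-scan but then index it per-phase when defining $T_k$---the meaning is clear, but you may want to clean that up.
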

\begin{proof}
	Let $p$ be any process and consider the first $\id{scan}$ $S$ performed by $p$ in its solo execution. After performing at most $n-1$ $\id{swap}$s, all with value $(\ell_0(S),\ldots,\ell_{n-1}(S))$, $p$ will perform a $\id{scan}$ 
that returns $(\ell_0(S),\ldots,\ell_{n-1}(S))$ from each shared memory location. 	
Let $v^* = \min \{v : \ell_v(S) \geq \ell_{v'}(S) \textrm{ for all } v' \neq v \}$. If $\ell_{v*}(S) \geq \ell_{v}(S) + 2$ for all $v \neq v^*$, then $p$ decides $v^*$. Otherwise, $p$ performs $n-1$ $\id{swap}$s, all with value $(\ell_0',\ldots,\ell_{n-1}')$, where $\ell_{v^*}' = \ell_{v^*}(S) + 1$ and $\ell_{v}' = \ell_{v}(S)$, for $v \neq v^*$. Then it performs a $\id{scan}$ that returns a vector whose components all contain $(\ell_0',\ldots,\ell_{n-1}')$.
	If $\ell_{v^*}' \geq \ell_{v}' + 2$ for all $v \neq v^*$, then $p$ decides $v^*$.
	If not, then $p$ performs an additional $n-1$ $\id{swap}$s, all with value $(\ell_0'',\ldots,\ell_{n-1}'')$, where $\ell_{v^*}'' = \ell_{v^*}' + 1 = \ell_{v^*}(S)+2$ and $\ell_{v}'' = \ell_{v}' = \ell_{v}(S)$ for $v \neq v^*$. Finally, $p$ performs a $\id{scan}$ that returns a vector whose components all contain $(\ell_0'',\ldots,\ell_{n-1}'')$ and decides $v^*$. Since $p$ performs at most $3(n-1)$ $\id{swap}$s and each $\id{swap}$ is immediately followed by a $\id{scan}$, this amounts to $3n-2$ $\id{scan}$s, including the first scan, $S$.
\end{proof}

The preceding lemmas immediately yield the following theorem.

\begin{theorem}
	\label{race-algorithm-main-theorem}
	There is an anonymous, obstruction-free protocol for solving consensus among $n$ processes that uses only $n-1$ memory locations supporting read and swap.
\end{theorem}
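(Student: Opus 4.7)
The plan is to observe that the theorem is essentially a bookkeeping corollary of the lemmas that have already been established for Algorithm~\ref{consensus-racing-algorithm}. First I would verify the resource claim: inspecting the code, the algorithm uses exactly the $n-1$ shared memory locations $X_1,\ldots,X_{n-1}$ and touches them only via \id{read} (inside the double-collect \id{scan}) and \id{swap}, so both the location count and the instruction set match the statement. The code is also symmetric in process identities (the identifiers and sequence numbers only serve to implement the linearizable \id{scan} from the double-collect algorithm of~\cite{AADGMS93}), so the protocol is anonymous in the required sense.

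Next I would assemble the three correctness properties. Agreement is exactly Lemma~\ref{race-algorithm-agreement}: the decision rule requires a \id{scan} $S$ that returns the same vector from every location with $\ell_{v^*}(S)\ge \ell_v(S)+2$ for all $v\neq v^*$, and Lemmas~\ref{race-algorithm-key-lemma} and~\ref{race-algorithm-stay-ahead} together show that from then on every subsequent \id{scan} sees $v^*$ strictly ahead, forbidding any other decision. Validity is Lemma~\ref{race-algorithm-validity}: if no process has input $x'$, then $\ell_{x'}$ never rises above $0$ in any \id{swap} argument, because the first process that would promote $x'$ would have, via the second clause of Observation~\ref{race-algorithm-observation}, $\ell_x(S)\le \ell_{x'}(S)=0$, contradicting the initial assignment $\ell_x\gets 1$. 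Obstruction-free termination is Lemma~\ref{race-algorithm-obstruction-free}: a solo-running process decides within $3n-2$ \id{scan}s because, in at most three phases of $n-1$ \id{swap}s each, either the leading value opens a 2-lap gap or it advances by an additional lap.

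The hard part of the proof is Lemma~\ref{race-algorithm-key-lemma}, which has already been absorbed into the supporting lemmas, so in the final theorem it only needs to be invoked rather than re-proved. Conceptually, the obstacle is that other processes could in principle overwrite the $n-1$ locations after the decisive \id{scan} $S$ using \id{swap}s whose arguments were computed before $S$; the counting trick (at most $n-1$ such ``stale'' \id{swap}s are possible, matching the $n-1$ locations exactly, and any further overwrite must be by a process whose last \id{scan} lies after $S$) is what makes $n-1$ locations sufficient. Given all of this, the theorem follows by combining Lemmas~\ref{race-algorithm-agreement}, \ref{race-algorithm-validity}, and~\ref{race-algorithm-obstruction-free} with the syntactic observation about the code.
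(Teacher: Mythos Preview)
Your proposal is correct and matches the paper's approach exactly: the paper itself proves this theorem with the single sentence ``The preceding lemmas immediately yield the following theorem,'' and your write-up simply unpacks that sentence by pointing to Lemmas~\ref{race-algorithm-agreement}, \ref{race-algorithm-validity}, and~\ref{race-algorithm-obstruction-free} together with the syntactic observations about the code. Your additional commentary on why Lemma~\ref{race-algorithm-key-lemma} is the workhorse and on the $n-1$ counting trick is accurate and goes somewhat beyond what the paper spells out at this point, but it does not deviate from the intended argument.
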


In~\cite{FHS98}, there is a proof that $\Omega(\sqrt{n})$ shared memory locations are necessary 
to solve obstruction-free consensus
when the system only supports $\id{swap}$ and $\id{read}$ instructions.
\section{Test-and-Set and Read}
\label{sec:tas}

Consider a system that supports only
$\id{test-and-set}()$ and $\id{read}()$. If there are only $2$ processes,
then it is possible to solve wait-free consensus using a single memory
location. However, we claim that any algorithm for solving obstruction-free 
binary consensus among $n \geq 3$ processes must use an unbounded number of 
memory locations. The key is to prove the following analogue of Lemma~\ref{lem:cover}.

\begin{lemma}
Let $\mathcal{P}$ be a set of at least 3 processes and let $C$ be a configuration. If $\mathcal{P}$ is bivalent from $C$, then, for every $k \geq 0$, there exists a $\mathcal{P}$-only execution $\alpha_k$ from $C$ such that $\mathcal{P}$ is bivalent from $C\alpha_k$ and at least $k$ locations have been set to 1 in $C\alpha_k$.
\label{lem:tas}
\end{lemma}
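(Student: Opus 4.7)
The plan is to argue by contradiction. I would assume the lemma fails for some $k$, so that among the configurations reachable from $C$ by $\mathcal{P}$-only executions at which $\mathcal{P}$ is bivalent, the maximum number $N$ of locations set to $1$ is finite. Fix a bivalent configuration $D$ that realizes this maximum. The goal is to exhibit two reachable configurations that have opposite valences but are indistinguishable to some process, yielding a contradiction.

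The first step is to apply \lemmaref{lem:twosolo} to $\mathcal{P}$ from $D$ to reach a bivalent $D^*$ together with processes $p, q$ whose solo continuations $\alpha, \beta$ decide $0$ and $1$, respectively. Maximality forces the path from $D$ to $D^*$ to set no new location. Next, an agreement-style argument shows that both $\alpha$ and $\beta$ must themselves set new locations: otherwise one memory-preserving solo could be prepended to the other without changing anyone's view, so both decisions would be made in a single execution.

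I would then let $\delta_p, \delta_q$ be the first test-and-set steps in $\alpha, \beta$ that flip a location from $0$ to $1$, with targets $\ell_p, \ell_q$, and split $\alpha = \alpha^{(1)} \delta_p \alpha^{(2)}$, $\beta = \beta^{(1)} \delta_q \beta^{(2)}$, where $\alpha^{(1)}, \beta^{(1)}$ change no memory. The configurations $D^* \alpha^{(1)} \beta^{(1)} \delta_p$ and $D^* \alpha^{(1)} \beta^{(1)} \delta_q$ each have $N+1$ locations set to $1$, hence are univalent by maximality, and in fact $0$- and $1$-valent because the solo continuations $\alpha^{(2)}, \beta^{(2)}$ see the same memory as in $\alpha, \beta$. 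A commutativity argument then shows $\ell_p = \ell_q$: if these targets differed, appending $\delta_q$ to the first configuration and $\delta_p$ to the second would produce identical process-local states and memory, yielding a single configuration that would be simultaneously $0$- and $1$-valent.

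The crucial final step uses the third process $r \in \mathcal{P} \setminus \{p, q\}$, which exists because $|\mathcal{P}| \geq 3$. Its solo $\rho$ from $D^*$ also sets new locations by the same agreement argument (paired with whichever of $\alpha, \beta$ decides a value opposite to $r$'s), and a repetition of the commutativity argument forces its first setting step $\delta_r$ to target the common location $\ell = \ell_p = \ell_q$. I would then compare $D^* \alpha^{(1)} \beta^{(1)} \rho^{(1)} \delta_p \delta_r$ with $D^* \alpha^{(1)} \beta^{(1)} \rho^{(1)} \delta_q \delta_r$: the former is $0$-valent, the latter is $1$-valent (inherited from the earlier $E_p, E_q$-style arguments with $\rho^{(1)}$ inserted to put $r$ in the right state), yet $r$'s local state is the same in both (it has just executed $\rho^{(1)} \delta_r$, in which $\delta_r$ read $1$ because $\ell$ had already been set by the preceding $\delta_p$ or $\delta_q$) and the memories are identical. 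Hence $r$'s subsequent solo execution decides the same value from both configurations, the desired contradiction. The main obstacle I anticipate is keeping track of exactly which process is in which local state across these carefully constructed configurations so that indistinguishability to $r$ really holds; the essential ingredients are the third process (explaining the $|\mathcal{P}| \geq 3$ hypothesis) and the property that a test-and-set on an already-$1$ location behaves like a read.
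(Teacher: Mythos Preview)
Your argument is correct, but it follows a genuinely different route from the paper's.

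The paper argues by direct induction on $k$: from a bivalent $C\alpha_k$ it applies \lemmaref{lem:twosolo} to get $p_0,p_1$ with opposite solo decisions, then runs the third process $z$ solo until $z$ first sets a fresh location $r$ (prefix $\beta$). If $\{p_0,p_1\}$ is still bivalent after $\beta$, the induction step is done. Otherwise, say $\{p_0,p_1\}$ is $0$-univalent after $\beta$; the paper then takes the longest prefix $\psi$ of $p_1$'s solo run $\gamma_1$ such that $p_0$ still decides $0$ from $C\alpha_k\xi\psi\beta$, argues that the next step $\psi'$ must be a test-and-set on a location outside $L_k\cup\{r\}$, commutes $\psi'$ with $\beta$, and concludes that $\{p_0,p_1\}$ is bivalent from $C\alpha_k\xi\psi\beta$, which now has a new location set.

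Your argument instead fixes a bivalent configuration $D^*$ with the maximum number $N$ of set locations and derives a contradiction. The distinctive structural step is showing that the \emph{first} effective test-and-set of each of $p$, $q$, and $r$ must all target a single common location $\ell$ (via maximality plus commutation of test-and-sets on distinct locations). This yields two univalent configurations of opposite valence that are indistinguishable to $r$ because $r$'s $\delta_r$ sees $\ell$ already set in both. The paper never establishes this ``common target'' fact; it instead finds a second fresh location via the critical-step analysis of $\gamma_1$.

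Both approaches hinge on \lemmaref{lem:twosolo} and on the hypothesis $|\mathcal{P}|\geq 3$, but they exploit the third process differently: the paper uses $z$ to \emph{produce} the new set location and then repairs bivalence, whereas you use $r$ purely as a \emph{witness} of indistinguishability after forcing the collision at $\ell$. Your route gives a clean symmetric picture (all three first writes collide), at the cost of the extra commutation arguments; the paper's route is more constructive and slightly shorter. Your caveat about tracking local states is well placed but your bookkeeping is sound: since $\alpha^{(1)},\beta^{(1)},\rho^{(1)}$ leave memory unchanged and involve disjoint processes, the valences transfer as you claim, and $r$'s state after $\rho^{(1)}\delta_r$ (with $\delta_r$ returning $1$) is identical in the two final configurations.
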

\begin{proof}
By induction on $k$. The base case, $k = 0$, holds when $\alpha_0$ is the  the empty execution.
Given $\alpha_k$, for some $k \geq 0$, we show how to construct $\alpha_{k+1}$.
By Lemma \ref{lem:twosolo}, there is a $\mathcal{P}$-only execution $\xi$ from
$C\alpha_k$ and two processes, $p_0, p_1 \in \mathcal{P}$
such that $p_i$ decides $i$ in its terminating solo execution, $\gamma_i$,
from $C\alpha_k\xi$, for $i = 0,1$.
Let $L_k$ be the set of at least $k$ memory locations that have been set to 1 in
$C\alpha_k\xi$.

Let $z \in \mathcal{P}-\{p_0,p_1\}$.
Suppose that $z$ decides $v \in \{0,1\}$ in its solo execution $\delta$ from $C\alpha_k\xi$.
If $z$ does not perform $\id{test-and-set}()$ on a location outside $L_k$ during $\delta$,
then $C\alpha_k\xi\delta$ is indistinguishable from $C\alpha_k\xi$ to $\{p_0,p_1\}$,
so $\gamma_{\bar{v}}$ can be applied starting from $C\alpha_k\xi\delta$,
violating agreement.
Thus $z$ performs $\id{test-and-set}()$ on a location outside $L_k$ during $\delta$.
Let $\beta$ be the shortest prefix of $\delta$ in which $z$ performs $\id{test-and-set}()$ on a location outside $L_k$
and let $r$ be the location outside $L_k$ on which $z$ performs
$\id{test-and-set}()$ during $\beta$.

If $\{p_0,p_1\}$
is bivalent from $C\alpha_k\xi\beta$, then $\alpha_{k+1} = \alpha_k\xi\beta$
satisfies the claim for $k+1$, since $\beta$ sets location $r \not\in L_k$ to 1.
So, without loss of
generality, suppose that $\{p_0,p_1\}$ is  0-univalent from $C\alpha_k\xi\beta$.
Let $\psi$ be the longest prefix of $\gamma_1$ such
that $p_0$ decides 0 from $C\alpha_k\xi\psi\beta$.
Note that $\psi \neq \gamma_1$, since 1 is decided in $\gamma_1$.
Let $\psi'$ be the first step in $\gamma_1$ following $\psi$.
If $\psi'$ is a $\id{read}$ or a $\id{test-and-set}()$ on a location in $L_k \cup \{r\}$,
then $C\alpha_k\xi\psi\psi'\beta$ is indistinguishable
from $C\alpha_k\xi\psi\beta$ to $p_0$. This is impossible,
since $p_0$ decides 0 from $C\alpha_k\xi\psi\beta$
and $p_0$ decides 1 from $C\alpha_k\xi\psi\psi'\beta$.
Thus, $\psi'$ is
a $\id{test-and-set}()$ on a location outside $L_k \cup \{r\}$.
Hence $C\alpha_k\xi\psi\psi'\beta$ is indistinguishable
from $C\alpha_k\xi\psi\beta\psi'$ to all processes.
In particular, $p_0$ decides 1 from $C\alpha_k\xi\psi\beta\psi'$.

Since $p_0$ decides 0 from $C\alpha_k\xi\psi\beta$
and decides 1 from $C\alpha_k\xi\psi\beta\psi'$,
it follows that
$\{p_0,p_1\}$ is bivalent from $C\alpha_k\xi\psi\beta$.
Furthermore, $\beta$ sets location $r \not\in L_k$ to 1.
Thus $\alpha_{k+1} = \alpha_k\xi\psi\beta$
satisfies the claim for $k+1$.
\end{proof}

By~\lemmaref{lem:initbi}, there is an initial configuration from which the set 
of all processes in the system is bivalent. Then it follows
from~\lemmaref{lem:tas}
that any binary consensus algorithm for 
$n \geq 3$ processes uses an unbounded number of locations.

\begin{theorem}
	For $n \geq 3$, it is not possible to solve $n$-consensus using a 
	bounded number of memory locations supporting only $\id{read}()$ and 
	$\id{test-and-set}()$.
\end{theorem}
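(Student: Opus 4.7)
The plan is a direct contradiction argument combining the two lemmas that have already been stated. Suppose, for contradiction, that there is an obstruction-free binary consensus algorithm for $n \geq 3$ processes that uses only a bounded number, say $m$, of memory locations supporting $\id{read}()$ and $\id{test-and-set}()$. Since a lower bound for binary consensus also applies to $n$-consensus, it suffices to derive a contradiction at the binary level.

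First, I would invoke \lemmaref{lem:initbi} to produce an initial configuration $C_0$ from which the set $\mathcal{P}$ of all $n$ processes is bivalent. Then I would apply \lemmaref{lem:tas} to $C_0$ and $\mathcal{P}$ with the parameter $k = m+1$. The lemma produces a $\mathcal{P}$-only execution $\alpha_{m+1}$ from $C_0$ such that at least $m+1$ distinct memory locations have been set to $1$ in the configuration $C_0 \alpha_{m+1}$. But the algorithm uses only $m$ memory locations, so at most $m$ distinct locations can ever have been set to $1$ in any reachable configuration. This is the desired contradiction.

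The only subtlety worth flagging is that \lemmaref{lem:tas} is stated for a set $\mathcal{P}$ of at least $3$ processes, which is exactly why the theorem restricts to $n \geq 3$; for $n = 2$, a single location suffices and the statement is false. Everything else is bookkeeping: the inductive construction inside \lemmaref{lem:tas} already does all the heavy lifting, so no further combinatorial argument is required at this stage. I expect no real obstacle in writing this up — it is essentially one application of the previously established lemma, and the proof should be just a few lines.
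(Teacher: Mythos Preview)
Your proposal is correct and matches the paper's own argument essentially verbatim: the paper likewise invokes \lemmaref{lem:initbi} to obtain an initial bivalent configuration and then applies \lemmaref{lem:tas} to conclude that unboundedly many locations are needed. Your choice of $k = m+1$ simply makes the contradiction explicit, and your remark about the $n \geq 3$ hypothesis is exactly the reason the lemma applies.
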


There is an algorithm for obstruction-free binary consensus that uses an 
unbounded number of shared memory locations that support only $\id{read}()$ and
$\id{write}(1)$~\cite{GR05}.
All locations are initially 0.
The idea is to simulate a counter using 
an unbounded number of binary registers and then to run the racing counters 
algorithm presented in~\lemmaref{lem:cntr}. 
In this algorithm, there are two unbounded tracks on which 
processes race, one for preference 0 and one for preference 1.
Each track consists of an unbounded sequence of shared memory locations.
To indicate progress, a process
performs $\id{write}(1)$ to the location on its preferred track
from which it last read 0.
Since the count on each track does not decrease, a process can 
perform a $\id{scan}$ using the double collect algorithm~\cite{AADGMS93}.
It is not necessary to read all the locations in a track to determine
the count it represents. It suffices to read from the location on the track from which
it last read 0, continuing to read from the subsequent locations on the track until it reads another 0.
A process changes its preference if it sees
that the number of 1's on its preferred track is less than the number of 1's on the
other track. Once a process sees that its preferred track  is at least 2 ahead of the
other track, it decides its current preference.

It is possible to generalize this 
algorithm to solve $n$-valued consensus by having $n$ tracks, each consisting of
an unbounded sequence of shared memory locations.
Since $\id{test-and-set}()$ can simulate
$\id{write}(1)$ by ignoring the value returned, we get 
the following result.

\begin{theorem}
It is possible to solve $n$-consensus using an 
unbounded number of memory locations supporting only $\id{read}()$ and either 
$\id{write}(1)$ 
or $\id{test-and-set}()$.
\end{theorem}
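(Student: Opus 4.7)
The plan is to instantiate the racing counters scheme of \lemmaref{lem:cntr} with an $n$-component unbounded counter simulated from read/write(1) locations, then observe that $\id{test-and-set}()$ can stand in for $\id{write}(1)$. I will essentially generalize the binary protocol of~\cite{GR05} from two tracks to $n$ tracks, one per possible input value.

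First I would allocate, for each value $v \in \{0,\dots,n-1\}$, an unbounded track $T_v$ consisting of an infinite sequence of locations, all initialized to $0$. The count $c_v$ stored in component $v$ will be the number of $1$'s currently on $T_v$. Each process maintains a local pointer $\mathit{pos}_v$ to the first location on $T_v$ that it has seen contain $0$; initially $\mathit{pos}_v$ points to the first location of $T_v$. To implement $\id{increment}(c_v)$, the process writes $1$ to the location at $\mathit{pos}_v$ and advances $\mathit{pos}_v$ by one. (A process may harmlessly overwrite a $1$ that another process wrote in the meantime; subsequent increments by the same process continue from the new pointer.) To implement $\id{read}$ of component $c_v$, the process starts reading from its most recent confirmed $0$ location on $T_v$ and scans forward until it again reads a $0$, counting the number of $1$'s it has passed, and updates $\mathit{pos}_v$ accordingly.

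Next I would note that, because no location ever transitions from $1$ back to $0$, the count of each component is monotonically non-decreasing over time. Hence the double-collect algorithm of~\cite{AADGMS93} gives an obstruction-free linearizable $\id{scan}()$ of all $n$ components: repeatedly collect the counts of $c_0,\dots,c_{n-1}$ using the per-track read procedure above, until two consecutive collects return identical vectors. Obstruction-freedom of the scan is immediate, since a process running in isolation will see its own second collect match the first. Combined with the $\id{increment}$ implementation, this gives an obstruction-free implementation of an $n$-component unbounded counter from read/write(1) locations. Plugging this into~\lemmaref{lem:cntr} yields obstruction-free $n$-valued consensus among $n$ processes using only $\id{read}()$ and $\id{write}(1)$.

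Finally, for the $\id{test-and-set}()$ case, I would observe that on a location whose value is in $\{0,1\}$, a $\id{test-and-set}()$ whose return value is discarded has exactly the same effect as $\id{write}(1)$: either the location already holds $1$ and nothing changes, or it holds $0$ and is set to $1$. Since every location in the construction above only ever takes values in $\{0,1\}$, we can replace each $\id{write}(1)$ by $\id{test-and-set}()$ with the returned value ignored, giving the same algorithm using only $\id{read}()$ and $\id{test-and-set}()$. The main obstacle in writing this out carefully is the per-track read subroutine: one must argue that the count it returns corresponds to a legitimate, atomically-readable value for linearization purposes, which follows precisely from the monotonicity of the tracks and the fact that the double-collect loop only accepts a result when two consecutive collects agree on the entire $n$-tuple of counts.
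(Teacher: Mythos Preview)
Your proposal is correct and follows essentially the same approach as the paper: simulate an $n$-component unbounded counter via $n$ unbounded tracks of binary locations (generalizing the two-track construction of~\cite{GR05}), use monotonicity plus double collect for an obstruction-free $\id{scan}$, plug into \lemmaref{lem:cntr}, and finally replace $\id{write}(1)$ by $\id{test-and-set}()$ with the return value discarded. The paper's own argument is exactly this outline, stated more tersely.
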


Now, suppose we can also perform $\id{write}$(0) or $\id{reset}()$ a memory location from 1 to 0. 
There is an existing binary consensus algorithm that uses $2n$
locations, each storing a single bit~\cite{B11}.
Then, it is possible to solve $n$-consensus using
$O(n\log n)$ locations by applying~\lemmaref{lem:bitbybit}. 
There is a slight subtlety, since the algorithm in the proof of~\lemmaref{lem:bitbybit}
uses two designated locations for each round,
to which values in $\{0,\ldots,n-1\}$ can be written.
In place of each designated location, it is possible to use a sequence of $n$ binary locations, all initialized to 0.
Instead of performing $\id{write}(x)$ on the designated location, 
  a process performs $\id{write}(1)$ to the $(x+1)$'st binary location. 
To find one of the values that has been written to the designated location, 
  a process $\id{read}$s the sequence of binary locations until it sees a $1$.

\begin{theorem}
	It is possible to solve $n$-consensus using $O(n\log n)$ memory 
	locations supporting only $\id{read}()$, either $\id{write}(1)$ or 
	$\id{test-and-set}()$, and either $\id{write}(0)$ or $\id{reset}()$.
\end{theorem}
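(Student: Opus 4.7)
The plan is to combine two ingredients: the binary consensus algorithm of \cite{B11}, which uses $2n$ single-bit locations, with the bit-by-bit binary-to-multi-valued reduction of \lemmaref{lem:bitbybit}, modified so that the designated locations in that reduction can be realized using only binary operations.

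First, I would invoke the algorithm of \cite{B11} to obtain obstruction-free binary consensus among $n$ processes using $c = 2n$ single-bit memory locations that need only $\id{read}()$, $\id{write}(1)$, and $\id{write}(0)$. Since $\id{test-and-set}()$ simulates $\id{write}(1)$ by discarding its return value, and $\id{reset}()$ is equivalent to $\id{write}(0)$, this binary consensus algorithm is compatible with each instruction combination permitted by the theorem hypothesis.

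Next, I would apply \lemmaref{lem:bitbybit} with $c = 2n$ to lift binary consensus to $n$-consensus over $\lceil \log_2 n \rceil$ asynchronous rounds. The nominal bound from \lemmaref{lem:bitbybit} is $(c+2)\lceil \log_2 n \rceil - 2$ memory locations, but a subtlety arises in its proof: each round uses two designated locations that store values in $\{0,\ldots,n-1\}$, and such multi-valued writes cannot be expressed directly using only the binary instructions in our model.

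The main obstacle, then, is to simulate these multi-valued designated locations using only binary writes, and I would do so via the construction sketched in the paragraph preceding the theorem: replace each designated location by a sequence of $n$ binary cells, all initialized to $0$. To record the value $x$, a process performs $\id{write}(1)$ on the $(x+1)$'st cell in the sequence; to retrieve some recorded value, a process reads the cells in order until it encounters a $1$ and returns the corresponding index minus $1$. This simulation suffices because \lemmaref{lem:bitbybit} only requires a process to recover \emph{some} previously recorded value, not the most recent one, and the correctness of the reduction depends only on that weaker guarantee. Summing over all $\lceil \log_2 n \rceil$ rounds, each round uses $2n$ binary cells for the binary consensus subroutine plus $2 \cdot n = 2n$ cells for the pair of simulated designated locations; the final round needs no designated locations, yielding a total of $4n\lceil \log_2 n \rceil - 2n = O(n \log n)$ memory locations.
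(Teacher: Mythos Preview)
Your proposal is correct and follows essentially the same approach as the paper: invoke the $2n$-bit binary consensus algorithm of \cite{B11}, lift it to $n$-consensus via \lemmaref{lem:bitbybit}, and handle the multi-valued designated locations by replacing each with a sequence of $n$ binary cells written and scanned exactly as you describe. Your explicit location count is a slight refinement over the paper's asymptotic statement, but the argument is otherwise identical.
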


\section{Conclusions and Future Work}
\label{sec:future} 

In this paper, we defined a hierarchy based on the space complexity of 
  solving obstruction-free consensus.
We used consensus because it is a well-studied, 
  general problem that seems to capture a fundamental difficulty 
  of multiprocessor synchronization. 
Moreover, consensus is {\em universal}: any sequentially defined 
  object can be implemented in a wait-free way using only consensus 
  objects and registers~\cite{Her91}. 

We did not address the issue of universality within our hierarchy.
One history object can be used to implement 
  any sequentially defined object.
Consequently, it may make sense to consider defining a hierarchy 
  on sets of instructions based on implementing 
  a history object, a compare-and-swap object, 
  or a repeated consensus object shared by $n$ processes.
However, the number of locations required for solving $n$-consensus is  
  the same as the number of locations required for obstruction-free
  implementations of these long-lived objects 
  for many of the instruction sets that we considered. 

A truly accurate complexity-based hierarchy
  would have to take step complexity into consideration. 
Exploring this may be an important future direction.
Also, it is standard to assume that memory locations have unbounded size,
  in order to focus solely on the challenges of synchronization.
For a hierarchy to be truly practical, however, we might need
  to consider the size of the locations used by an algorithm.

There are several other interesting open problems.
To the best of our knowledge, all existing space lower bounds
  rely on a combination of covering and indistinguishability arguments.
However, when the covering processes apply\ $\id{swap}(x)$, as opposed 
  to $\id{write}(x)$, they can observe  differences between executions, 
  so they can no longer be reused and still maintain indistinguishability.
This means that getting a larger space lower bound for 
  $\{ \id{swap}(x), \id{read}() \}$ would most likely require new techniques.
An algorithm that uses less than $n-2$ shared memory locations
would be even more 
  surprising, as the processes would have to
  modify the sequence of memory locations they access based on
  the values they receive from $\id{swap}$s,
  to circumvent the argument from~\cite{Zhu16}. 
The authors are unaware of any such algorithm.

Getting an $\omega(\sqrt{n})$ space lower bound for
  solving consensus in a system that supports $\id{test-and-set}()$,
  $\id{reset}()$ and $\id{read}()$ is also interesting.
Using $\id{test-and-set}()$, processes can observe difference 
  between executions as they can  using $\id{swap}(x)$.
However, each location can only store a single bit.
This restriction could potentially help in proving a lower bound.

To prove the space lower bound of $\lceil \frac{n-1}{\ell} \rceil$
  for $\ell$-buffers, we extended the technique of~\cite{Zhu16}.
The $n-1$ lower bound of~\cite{Zhu16} has since been improved
  to $n$ by~\cite{EGZ18}.
Hence, we expect that the new simulation-based technique used there
  can also be extended to prove
  a tight space lower bound of $\lceil \frac{n}{\ell} \rceil$.

We conjecture that, for sets of instructions, ${\cal I}$, which 
  contain only $\id{read}()$, $\id{write}(x)$, and either 
  $\id{increment}()$ or $\id{fetch-and-increment}()$, 
  ${\cal SP}({\cal I},n) \in \Theta(\log n)$. 
Similarly, we conjecture, for 
  $\mathcal{I} = \{\id{read}(),\id{write}(0),\id{write}(1)\}$, 
  $\mathcal{SP}(\mathcal{I},n) \in \Theta(n\log n)$.
Proving these conjectures is likely to require techniques that depend 
  on the number of input values, such as in the lower bound 
  for $m$-valued adopt-commit objects in~\cite{AspE14}.

We would like to understand the properties of sets of instructions 
  at certain levels in the hierarchy.
For instance, what properties enable a collection of instructions to 
  solve $n$-consensus using a single location?
Is there an interesting characterization of the sets of instructions 
  ${\cal I}$ for which ${\cal SP}({\cal I},n)$ is constant?
How do subsets of a set of instructions relate to one another 
  in terms of their locations in the hierarchy? 
Alternatively, what combinations of sets of instructions decrease 
 the amount of space needed to solve consensus?
For example, using only $\id{read}()$, $\id{write}(x)$, and either 
  $\id{increment}()$ or $\id{decrement}()$, more than one memory 
  location is needed to solve binary consensus. 
But with both $\id{increment}()$ and $\id{decrement}()$, a single 
  location suffices.
Are there general properties governing these relationships?

\section{Acknowledgments}
Support is gratefully acknowledged from
the Natural Science and Engineering Research Council of Canada,
the National Science Foundation under grants CCF-1217921, 
        CCF-1301926, and IIS-1447786, the Department of Energy under grant ER26116/DE-SC0008923, 
        and Oracle and Intel corporations.

The authors would like to thank Michael Coulombe, Dan Alistarh, Yehuda Afek, Eli Gafni
  and Philipp Woelfel for helpful conversations and feedback.

\bibliographystyle{alpha}
\bibliography{biblio}

\end{document}